\definecolor{urlcolor}{rgb}{0, 0.5, 0}
\definecolor{citecolor}{rgb}{.5,0,.25}
\definecolor{linkcolor}{rgb}{0,0,1}
\newtheorem{theorem}{Theorem}
\newtheorem{lemma}[theorem]{Lemma}
\newtheorem{claim}{Claim}
\newcommand{\executeiffilenewer}[3]{%
\ifnum\pdfstrcmp{\pdffilemoddate{#1}}%
{\pdffilemoddate{#2}}>0%
{\immediate\write18{#3}}\fi%
}
\def\svgmode{ps}\ifx\pdfoutput\undefined
\def\executeiffilenewer#1#2#3{\immediate\write18{#3}}\else%
\def\svgmode{pdf}\fi\fi
\newcommand{%
\executeiffilenewer{.svg}{.\svgmode}%
{inkscape -z -D --file=.svg --export-latex %
--export-\svgmode=.\svgmode}%
\input{.\svgmode_tex}%
}[1]{%
\executeiffilenewer{#1.svg}{#1.\svgmode}%
{inkscape -z -D --file=#1.svg --export-latex %
--export-\svgmode=#1.\svgmode}%
\input{#1.\svgmode_tex}%
}
\newcommand{\includesvg}[2][]{%
\def\tempa{#1}\def\tempb{}%
\ifx\tempa\tempb\else\let\svgwidth\tempa\fi
\executeiffilenewer{\svgpath#2.svg}{\svgpath#2.\svgmode}%
{inkscape -z -D --file=\svgpath#2.svg --export-latex %
--export-\svgmode=\svgpath#2.\svgmode}%
\input{\svgpath#2.\svgmode_tex}%
}
\def\svgpath{figures/} %{./}
\newcommand{\Z}{\mathbb{Z}}
\newcommand{\R}{\mathbb{R}}
\newcommand{\inv}{^{-1}}
\newcommand{\relint}[1]{\overset{\circ}{#1}}
\newcommand{\intForm}[2]{\langle #1,  #2\rangle}
\newcommand{\card}[1]{\#( #1 )}
\newcommand{\cut}{\backslash\!\!\backslash}
\definecolor{definecolor}{rgb}{0,0.1,0.55}
\def\define#1{\textbf{\textcolor{definecolor}{#1}}}
\title{\Large Computing the second and third systoles of a combinatorial surface}
\author[1]{Matthijs Ebbens}
\author[2]{Francis Lazarus\thanks{This author is partially supported by the French ANR projects MINMAX (ANR-19-CE40-0014) and the LabEx PERSYVAL-Lab (ANR-11-LABX-0025-01) funded by the French program Investissement d’avenir.}}
\affil[1]{University of Cologne, Germany}
\affil[2]{G-SCOP/Institut Fourier, CNRS, Université Grenoble Alpes, France}
\begin{document}
\maketitle

\begin{abstract}
  Given a weighted, undirected graph $G$ cellularly embedded on a topological surface $S$, we describe algorithms to compute the second shortest and third shortest closed walks of $G$ that are neither homotopically trivial in $S$ nor homotopic to the shortest non-trivial closed walk or to each other. Our algorithms run in $O(n^2\log n)$ time for the second shortest walk and in $O(n^3)$ time for the third shortest walk. We also show how to reduce the running time for the second shortest homotopically non-trivial closed walk to $O(n\log n)$ when both the genus and the number of boundaries are fixed. 

  Our algorithms rely on a careful analysis of the configurations of the first three shortest homotopically non-trivial curves in $S$. As an intermediate step, we also describe how to compute a shortest essential arc between \emph{one} pair of vertices or between \emph{all} pairs of vertices of a given boundary component of $S$ in $O(n^2)$ time or $O(n^3)$ time, respectively. 
\end{abstract}

\section{Introduction}

Computing a shortest homotopically non-trivial cycle of a combinatorial surface is one of the standard problems in computational topology, for which many algorithms have been developed~\cite{thomassen1990,erickson2004,k-csntc-06,cabello2007,cabellochambers2007,cabello2012,e-cocb-12,cce-msspe-13}. Instead of restricting to just the shortest homotopically non-trivial cycle, one can also consider the \emph{second} shortest not homotopic to the first one, the \emph{third} shortest not homotopic to the first two, and so on. More generally, to every cycle $c$ of the surface we can associate the length of the shortest cycle freely homotopic to $c$. By definition, this length only depends on the homotopy class of $c$. The ordered sequence of lengths of all free homotopy classes of cycles is called the \define{length spectrum} of the combinatorial surface, while the mapping from free homotopy classes to their lengths is called the \emph{marked} length spectrum. In the literature, a cycle corresponding to the first value of the length spectrum is usually referred to as the \define{systole} (which we also call \define{first systole)}, so in the same way we will refer to a cycle corresponding to the $i$-th value of the length spectrum as the \define{$i$-th systole}. 

The notion of length spectrum is well studied in the fields of hyperbolic and Riemannian geometry~\cite{o-smlsc-90,b-gscrs-92,p-islsq-18}. It is a remarkable result that the marked length spectrum of a non-positively curved surface completely determines the geometry of the surface~\cite{o-smlsc-90}. Even though a similar statement is not true for the unmarked length spectrum~\cite{vigneras1980}, it still provides a lot of information about the geometry of the surface~\cite{mckean1972,wolpert1979}.

Recently, a first effort has been made to study the length spectrum from the discrete point of view of combinatorial surfaces~\cite{ebbens2022,delecroix2024}. In particular, algorithms have been developed for computing, for any given positive integer $k$, the first $k$ values of the length spectrum of a weighted graph on the torus, as well as algorithms that decide equality of (un)marked length spectra of two given graphs on the torus. However, the approaches used in these papers heavily rely on the equivalence of homotopy and homology on the torus, which is not the case for surfaces of higher genus. 
The aim of this paper is to extend this work to surfaces of higher genus. Our main results are algorithms that compute a second and third systole.

\begin{theorem}\label{thm:secondsystole}
	Let $G$ be a weighted graph of complexity $n$ cellularly embedded on a surface $S$ of genus $g$ with $b$ boundary components. A second systole of $G$ can be computed in $O(n^2\log n)$ time or in $O((b^2+g^3)n\log n)$ time.
\end{theorem}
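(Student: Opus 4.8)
The plan is to reduce the computation of a second systole to a bounded number of shortest‑path (and shortest essential arc) computations, guided by a structural description of how a second systole can sit relative to a first one. I would first compute a first systole $\gamma_1$ using one of the existing algorithms for the shortest homotopically non‑trivial cycle, all of which already run within the stated time bounds; the uniqueness‑of‑shortest‑paths hypothesis is what makes $\gamma_1$ (and later the optimal arcs we build) \emph{tight}, so the usual crossing lemmas apply. A second systole $\gamma_2$ is by definition a shortest homotopically non‑trivial closed walk not freely homotopic to $\gamma_1^{\pm 1}$. The first observation is that $\gamma_1$ traversed twice is such a walk, so $|\gamma_2|\le 2|\gamma_1|$; hence $\gamma_1^2$ is always a candidate, and when it is not optimal $\gamma_2$ is short relative to $\gamma_1$, which we use to bound their interaction.

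The core is a structural lemma: there is a choice of a first systole $\gamma_1$ and of a second systole $\gamma_2$ whose geometric intersection number is at most a small absolute constant (I expect $0$, $1$, or $2$, the exact value depending on the topological type of $\gamma_1$ — one‑ or two‑sided, separating or non‑separating). I would prove it by an exchange argument: among all such pairs, pick one minimizing the number of crossings; if this number were too large, perform surgery at a well‑chosen crossing pair to produce shorter, or fewer‑crossing, closed walks, and then use $|\gamma_2|\le 2|\gamma_1|$ together with the minimality of $\gamma_1$ to argue that one of the resulting walks is still homotopically non‑trivial and not homotopic to $\gamma_1^{\pm 1}$, contradicting minimality of the pair. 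Carrying this bookkeeping through all topological cases — in particular guaranteeing at each surgery step that we do not fall back onto the class of $\gamma_1$ or onto a contractible walk, and handling boundary components — is the step I expect to be the main obstacle.

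With the lemma in hand I would split into configurations. If $\gamma_2$ is disjoint from $\gamma_1$, cut $S$ along $\gamma_1$: a cycle disjoint from $\gamma_1$ is non‑contractible in $S$ iff it is non‑contractible in $S\cut\gamma_1$ (a disk it bounds in $S$ cannot meet $\gamma_1$), and it is homotopic to $\gamma_1$ in $S$ iff it is parallel to one of the copies of $\gamma_1$ in $\partial(S\cut\gamma_1)$; so $\gamma_2$ is a shortest cycle of the lower‑complexity surface $S\cut\gamma_1$ (still of complexity $O(n)$) that is non‑contractible and not parallel to either of two prescribed boundary components — a mild variant of a first‑systole computation which, since only two boundary‑parallel classes must be excluded, can be carried out within the budget. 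If instead $\gamma_2$ crosses $\gamma_1$, cutting along $\gamma_1$ breaks $\gamma_2$ into at most two arcs whose endpoints lie among the $O(n)$ vertices on the copies of $\gamma_1$; by tightness each arc may be taken to be a shortest path (respectively a shortest essential arc) between its endpoints in the relevant homotopy type, so $\gamma_2$ is recovered by enumerating the $O(n)$ admissible endpoint configurations, computing the corresponding shortest arcs, and reassembling. Taking the best candidate over all configurations and over $\gamma_1^2$ yields a second systole; the cost is dominated by $O(n)$ shortest‑path computations, i.e.\ $O(n^2\log n)$.

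For the $O((b^2+g^3)n\log n)$ bound I would note that once $g$ and $b$ are fixed the combinatorial data become finite apart from the $\Theta(n)$‑size metric: the homotopy types of the at most two arcs of $\gamma_2$ across $\gamma_1$ (and the class excluded in the disjoint case) range over only $O(b^2+g^3)$ possibilities, each of which can be tested with a single $O(n\log n)$ shortest‑path or shortest‑essential‑arc computation on an appropriate constant‑sheeted covering‑type graph, giving $O((b^2+g^3)n\log n)$ in total.
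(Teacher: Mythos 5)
Your outline follows the same overall route as the paper (compute a first systole, bound its crossings with a second systole, then case-split on disjoint versus crossing and recover $\gamma_2$ by $O(n)$ shortest-path computations in the cut surface, keeping $\gamma_1^2$ as a fallback candidate), but the heart of the theorem is exactly the structural lemma you defer, and the form you leave open is not a harmless detail. The paper proves (its Lemma on the second systole, via the bigon/excess-intersection machinery of Section 4) that $\ell_2$ can be chosen to meet $\ell_1$ \emph{at most once} and to be either simple or homotopic to $\ell_1^2$. Your "0, 1, or 2 crossings, depending on the topological type" is weaker in a way that breaks your complexity and correctness analysis: with one crossing, any reassembled candidate has algebraic intersection $\pm 1$ with $\gamma_1$, so it is automatically non-contractible and not homotopic to $\gamma_1$, and a plain shortest path between corresponding boundary vertices of $\Sigma\cut\gamma_1$ suffices ($O(n)$ Dijkstra runs); with two crossings the algebraic intersection number is $0$ or $\pm2$, the enumeration is over $\Theta(n^2)$ endpoint pairs, and you would additionally need a per-candidate homotopy test (none of which your "tightness" remark supplies), so the $O(n^2\log n)$ bound no longer follows from your description. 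Note also that "each arc may be taken to be a shortest path in the relevant homotopy type" is not something Dijkstra computes; the paper avoids homotopy-constrained shortest paths entirely, precisely by the intersection-number observation above.

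Two further points are asserted rather than argued. In the disjoint case, "shortest cycle non-contractible and not parallel to two prescribed boundary components" is not an off-the-shelf primitive: the paper solves it by combining a shortest \emph{essential} cycle computation in $\Sigma\cut\gamma_1$ with a shortest non-contractible cycle computation in the surface obtained by capping off the two $\gamma_1$-copies (its Lemma~\ref{lem:not-homotopic-to-boundaries} shows the union of these two classes is exactly the right set -- a curve cobounding a pair of pants with the two copies of $\gamma_1$ is caught by the first computation but not the second, which is why both are needed). Your plan names the right target but gives no mechanism. Finally, your route to the $O((b^2+g^3)n\log n)$ bound -- enumerating $O(b^2+g^3)$ homotopy types of arcs and computing an $O(n\log n)$ shortest path in each class -- is not substantiated: computing shortest representatives in a prescribed homotopy class is itself an expensive problem, and there is no argument that the number of relevant classes is $O(b^2+g^3)$. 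In the paper the parameter dependence simply comes from plugging in Erickson--Worah's parameterized shortest essential/non-contractible cycle algorithms in the disjoint cases and Cabello et al.'s multiple-source shortest path structure ($O(gn\log n)$) in the crossing case, not from an enumeration of homotopy types.
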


\begin{theorem}\label{thm:thirdsystole}
	Let $G$ be a weighted graph of complexity $n$ cellularly embedded on a surface $S$ of genus $g$. A third systole of $G$ can be computed in $O(n^3)$ time. 
\end{theorem}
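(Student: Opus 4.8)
\emph{Overall strategy.} The plan is to reduce the computation of a third systole to a constant number of shortest-path and shortest-essential-arc computations carried out on surfaces obtained by cutting $S$ along the first two systoles. Using Theorem~\ref{thm:secondsystole} and known algorithms for a first systole, we first compute a first systole $\gamma_1$ and a second systole $\gamma_2$ in $O(n^2\log n)$ time. Here $\gamma_1$ is a simple cycle, and the structure theory behind Theorem~\ref{thm:secondsystole} tells us that $\gamma_2$ is either simple or equal to $\gamma_1$ traversed twice; in either case $\Gamma:=\gamma_1\cup\gamma_2$ is a subgraph of $G$ with $O(n)$ edges, so the cut surface $S':=S\cut\Gamma$ still has complexity $O(n)$, and only $O(1)$ of its boundary components --- call them the \emph{$\Gamma$-boundaries} --- carry edges of $\Gamma$. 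A third systole is, by definition, a shortest cycle $\gamma_3$ whose unoriented free homotopy class differs from those of $\gamma_1$ and of $\gamma_2$, and the whole argument is a case analysis on how such a $\gamma_3$ can sit relative to $\Gamma$.

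\emph{Structure of a third systole.} Two structural lemmas drive the case analysis. First, if $\gamma_3$ self-crosses, cutting it at one crossing point writes it as a concatenation $\alpha\cdot\beta$ of two non-trivial loops sharing that point, each strictly shorter than $\gamma_3$; since every non-trivial cycle strictly shorter than $\gamma_3$ realizes the first or the second value of the length spectrum, uniqueness of shortest paths forces $\alpha$ and $\beta$ to coincide with $\gamma_1$ or $\gamma_2$, so $\gamma_3$ decomposes into at most two such pieces and this case produces only an $O(1)$-size list of explicit candidates, got by concatenating $\gamma_1$ and/or $\gamma_2$ at their common vertices. Second, a simple $\gamma_3$ meets $\Gamma$ in only a constant number of points: any surplus intersection can be removed by an uncrossing surgery that produces non-trivial cycles no longer than $\gamma_3$ whose homotopy classes can be forced to be $[\gamma_1]$ or $[\gamma_2]$, contradicting either the minimality of $\gamma_3$ or the uniqueness of shortest paths. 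Consequently, in $S'$, a simple $\gamma_3$ appears as a union of $O(1)$ arcs with endpoints on the $\Gamma$-boundaries, glued according to one of $O(1)$ combinatorial patterns; each arc is either an essential arc of $S'$ with both endpoints on a single $\Gamma$-boundary, an arc joining two distinct $\Gamma$-boundaries, or --- when $\gamma_3$ is disjoint from $\Gamma$ --- a whole cycle of $S'$ that stays homotopically non-trivial in $S$.

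\emph{The algorithm.} For each of the $O(1)$ patterns we cut $S$ along $\Gamma$ and, on $S'$, we compute: a shortest essential arc between every pair of vertices of each of the $O(1)$ $\Gamma$-boundaries, in $O(n^3)$ time by the all-pairs essential-arc algorithm announced in the introduction; shortest paths joining the $\Gamma$-boundaries; and, for the disjoint case, a shortest homotopically non-trivial cycle of $S'$ that is not freely homotopic in $S$ to $\gamma_1$ or $\gamma_2$, obtained from a constant number of computations of shortest non-trivial cycles and of second systoles on $S'$. We then assemble the pieces prescribed by the pattern into a candidate closed walk of $G$, and add the $O(1)$ explicit candidates from the self-crossing case. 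Finally we discard every candidate that is contractible in $S$ or freely homotopic to $\gamma_1$ or to $\gamma_2$ --- a test running within the same time budget after cutting along the relevant cycle --- and output the shortest surviving candidate; by the two structural lemmas its length equals that of a third systole. Each pattern costs $O(n^3)$ and there are $O(1)$ of them, hence the total running time is $O(n^3)$.

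\emph{Main obstacle.} The crux is the second structural lemma --- bounding by a constant the number of intersections of $\gamma_3$ with $\gamma_1$ and with $\gamma_2$. The uncrossing arguments are delicate: after a surgery at a crossing, one of the four reroutings must be shown to yield a non-trivial cycle no longer than $\gamma_3$ lying in one of the two forbidden classes, and choosing this rerouting while ruling out the degenerate equal-length situations is exactly where the hypothesis of unique shortest paths is consumed; moreover the two intersection patterns must be controlled simultaneously, so that the way $\gamma_3$ is cut by $\Gamma$ has genuinely $O(1)$ combinatorial types rather than a number growing with the topology, which requires a careful enumeration. A secondary, more technical difficulty is to make sure that no configuration is overlooked --- in particular the one in which $\gamma_3$ is disjoint from $\Gamma$ yet becomes freely homotopic into a $\Gamma$-boundary once $\Gamma$ is glued back, so that $\gamma_3$ is recovered only as a system of essential arcs relative to that boundary. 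This last point is precisely why the all-pairs shortest-essential-arc subroutine, with its $O(n^3)$ cost, is the genuine bottleneck and fixes the final running time.
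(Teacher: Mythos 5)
Your high-level plan (compute $\ell_1,\ell_2$, bound how a third systole can meet them, then cut and reduce to shortest-path and shortest-essential-arc computations with the all-pairs arc routine as the $O(n^3)$ bottleneck) matches the paper's strategy, but the proposal leaves the actual mathematical core unproved and, in the hardest case, the algorithm as described is not correct. The "second structural lemma" you invoke --- that a simple $\ell_3$ meets $\ell_1\cup\ell_2$ in $O(1)$ points --- is precisely the content of Lemmas~\ref{lem:l3-l1-atmostonce}, \ref{lem:l3-l2-twice} and~\ref{lem:l3-l2-exactlytwice}, and it does not follow from a generic "uncrossing surgery plus minimality/uniqueness" argument: the surgered curve can a priori be homotopic to $\ell_1$ (not only contractible or too short), and ruling this out requires the bigon/excess-intersection machinery of Section~\ref{sec:lemmas}, the analysis of Lemma~\ref{lem:l3-l1-twice-not-l2}, and the genus-one-subsurface and algebraic-intersection arguments that establish the precise structure when $\ell_3$ crosses $\ell_2$ twice. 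You acknowledge this is the crux but do not supply it, so the intersection bound --- and hence the claim that only $O(1)$ cut patterns occur --- is a genuine gap. Your first structural lemma is also wrong as stated: when $\ell_3$ self-crosses, its two loops are only \emph{freely homotopic} to $\ell_1$ or $\ell_2$, not equal to them (their lengths need not even equal $|\ell_1|$), and in the case $\ell_2\sim\ell_1^2$ with both loops homotopic to $\ell_1$ the correct replacement is the \emph{simple} curve $\alpha\cdot\beta^{-1}$ (Lemma~\ref{lem:l3-simple}), not a concatenation of $\gamma_1$ and $\gamma_2$ at common vertices; your explicit candidate list would miss this configuration.

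The second gap is in the "assemble the pieces and filter" step for the case where $\ell_3$ crosses $\ell_2$ exactly twice and both avoid $\ell_1$. Discarding candidates that are contractible or homotopic to $\ell_1,\ell_2$ guarantees soundness but not completeness: the two shortest essential arcs you would glue (between the copies of $u,v$ on the two copies of $\ell_2$) need not be the actual pieces of $\ell_3$, and if their concatenation happened to land in a forbidden class the filter would throw away the only candidate of length $|\ell_3|$ in that pattern. The paper needs exactly here the nontrivial Claim~\ref{cl:1} (proved via a universal-cover argument) to show the concatenation always has geometric intersection number at least one with $\ell_2$, hence is never forbidden; and in the subcase where $\ell_2$ is separating it does not glue arcs on the $\ell_2$-copies at all, but instead cuts along $\ell_1$ and an auxiliary shortest path $p$ and applies Theorem~\ref{thm:shortest-arc} to the $O(n)$ vertices of $p$ (Claims~\ref{clm:ell3-p} and~\ref{clm:essential-arc}). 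Without an argument of this kind your algorithm can return a curve strictly longer than the third systole, so the correctness of the $O(n^3)$ procedure is not established.
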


We refer to Section~\ref{sec:discussion} for a discussion on whether our methods can be generalized to the computation of more values of the length spectrum. 

Our algorithms are based on a careful analysis of the possible configurations of the first three systoles, in particular in relation to their crossings. Our analysis can be summarized as the following theorem, which is a combination of Lemmas~\ref{lem:secondsystole},~\ref{lem:l3-simple},~\ref{lem:l3-l1-atmostonce},~\ref{lem:l3-l2-twice} and~\ref{lem:l3-l2-exactlytwice}.

\begin{theorem}\label{thm:configurations}
    The first systole $\ell_1$, second systole $\ell_2$ and third systole $\ell_3$ may be chosen in such a way that the following properties hold:
    \begin{enumerate}
        \item $\ell_1$ is simple,
        \item $\ell_2$ is simple or traverses $\ell_1$ twice,
        \item $\ell_3$ is simple or traverses $\ell_1$ twice or thrice,
        \item $\ell_2$ intersects $\ell_1$ at most once,
        \item $\ell_3$ intersects $\ell_1$ at most once,
        \item $\ell_3$ intersects $\ell_2$ at most twice,
        \item if $\ell_3$ is simple and intersects $\ell_2$ exactly twice, then $\ell_1$ is disjoint from $\ell_2$ and $\ell_3$ and is contained in a genus one subsurface bounded by either $\ell_2$ or $\ell_3$.
    \end{enumerate}
\end{theorem}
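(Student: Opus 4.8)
The statement is the union of five lemmas, so the plan is to organise the argument around the three natural groups of properties, all of which I would attack with one tool: \emph{surgery at an intersection point}, combined with the fact that positive weights and unique shortest paths make every proper subwalk of a closed walk strictly shorter, and with the elementary observation that no free homotopy class can have systolic length lying strictly between two consecutive values $\ell_i<\ell_{i+1}$ of the spectrum — such a class would already be among the first $i$ classes. Before anything, I would replace each of $\ell_1,\ell_2,\ell_3$ by a shortest representative of its class and, by a routine tightening, put the three curves in pairwise and self minimal position, so that ``traverses'' and ``intersects'' can be read off the combinatorial picture and so that an innermost contractible subloop or an innermost bigon is always at hand to produce a contradiction.

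\emph{Self-intersections (Properties 1--3).} Cut a self-intersecting $\gamma\in\{\ell_1,\ell_2,\ell_3\}$ at a double point into closed subwalks $a,b$ with $|a|+|b|=|\gamma|$ and $\gamma\simeq ab$, at least one of which, say $a$, is non-contractible. For $\ell_1$ this gives the contradiction $|a|<\ell_1$, so $\ell_1$ is simple. For $\ell_2$, since $|a|<\ell_2$ we get $|a|=\ell_1$ with $a$ a first systole; then $b$ contractible would give $\ell_2\simeq\ell_1$ (excluded), while $b$ non-contractible is likewise forced to be a first systole, so $|\ell_2|=2\ell_1$ and $\ell_2$ is homotopic to a first systole run twice. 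For $\ell_3$ the scheme runs one level deeper: each of $a,b$ realises value $\ell_1$ or $\ell_2$, and after excluding the mixed case $\{|a|=\ell_1,\ |b|=\ell_2\}$ — which, since $a$ and $b$ meet at the splitting vertex, would let one replace a subarc and shorten $\ell_3$ below its own length — only $|\ell_3|=2\ell_1$ or $3\ell_1$ remains, with $\ell_3$ homotopic to a first systole run twice or thrice.

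\emph{Pairwise crossings (Properties 4--6).} Let $\alpha$ be the shorter and $\beta$ the longer of a pair taken from $\{\ell_1,\ell_2,\ell_3\}$, and suppose they cross at points $p,q$. Cutting $\alpha$ into arcs $x,y$ and $\beta$ into arcs $u,v$ along $p,q$, the four recombined closed curves $xu,\ \bar xv,\ \bar y\bar u,\ y\bar v$ (orientations chosen to close up) have total length exactly $2|\alpha|+2|\beta|$, and the surgery relations write each of $\alpha,\beta$ as a product of two of them. By minimality of position none of the four is contractible — a contractible one is a bigon — so each is non-contractible of length $\ge\ell_1$; weighing $2|\alpha|+2|\beta|$ against the index of $\beta$ then forces one of the four to realise a spectrum value proscribed to it, unless the crossing number was at most $1$ for $(\ell_1,\ell_2)$ and $(\ell_1,\ell_3)$ and at most $2$ for $(\ell_2,\ell_3)$; the extra room for the last pair is exactly what the inequalities $\ell_1\le\ell_2\le\ell_3$ buy in the length budget. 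In each case one iterates the surgery to \emph{choose} representatives attaining the bound.

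\emph{The exceptional configuration (Property 7) and the main difficulty.} Assume $\ell_3$ simple and crossing $\ell_2$ exactly twice; Properties 2 and 5 let me reduce to $\ell_2$ also simple, so all three curves are simple. A regular neighbourhood $N$ of $\ell_2\cup\ell_3$ has a spine of Euler characteristic $-2$, hence is either a genus-one surface with two boundary circles or a four-holed sphere, according to whether the two crossings have equal or opposite sign. I would dispose of the four-holed-sphere case by exhibiting, among the recombined surgery curves of total length $2|\ell_2|+2|\ell_3|$, one that is non-contractible, of length at most $(|\ell_2|+|\ell_3|)/2$, and homotopic neither to $\ell_1$ nor to $\ell_2$ — against the definition of $\ell_3$. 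In the surviving genus-one case it remains to place $\ell_1$: Properties 4 and 5 give at most one crossing with each of $\ell_2,\ell_3$, and a single crossing with $\ell_j$ would make $\ell_1\cup\ell_j$ fill a one-holed torus, inside which a short surgery curve gives an admissible re-choice of $\ell_3$ violating one of Properties 1--6 or the standing hypothesis; hence $\ell_1$ is disjoint from $\ell_2$ and $\ell_3$, lies in their complement, and an Euler-characteristic-and-length bookkeeping confines it to the genus-one piece of $N$, which is a one-holed torus bounded by a curve isotopic to $\ell_2$ or to $\ell_3$. I expect this last step to be the real obstacle — certifying that ``genus one'' is genuinely forced and that the parasitic surgery curves land in exactly the forbidden homotopy classes rather than merely being short — and it is here that the case-by-case neighbourhood analysis of Lemmas~\ref{lem:l3-l2-twice} and~\ref{lem:l3-l2-exactlytwice} is indispensable; a secondary nuisance throughout is handling spectrum \emph{ties}, where ``strictly between'' arguments must degrade to ``equal to $\ell_i$, hence an admissible re-choice''.
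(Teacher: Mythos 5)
Your overall toolkit (surgery at crossings, minimal position, bigon swaps) is the right one, but two of your steps have genuine gaps where the paper needs real additional ideas.

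First, Property 3. From a self-intersection you correctly get $\ell_3\simeq a\cdot b$ with both loops non-contractible, but your conclusion that ``only $|\ell_3|=2\ell_1$ or $3\ell_1$ remains, with $\ell_3$ homotopic to a first systole run twice or thrice'' does not follow. Even when $|a|=|b|=|\ell_1|$ and both $a,b$ are \emph{freely} homotopic to $\ell_1$, the based concatenation $a\cdot b$ need not be freely homotopic to $\ell_1^2$ (the conjugating paths can differ), so such an $\ell_3$ traverses $\ell_1$ neither twice nor thrice. This is exactly the hard case in the paper's Lemma~\ref{lem:l3-simple}: after reducing to $\ell_2=\ell_1^2$ and $|\ell_3|<3|\ell_1|$ (which is also how the ``mixed case'' $|a|=\ell_1,|b|=\ell_2$ is excluded --- your ``replace a subarc and shorten $\ell_3$'' is not a proof), one shows via a Gauss-code analysis that $\ell_3$ has exactly one self-intersection, so $\alpha,\beta$ bound a pinched cylinder, and then one \emph{re-chooses} $\ell_3=\alpha\cdot\beta^{-1}$, a simple, tight, separating curve of the same length which is in a class distinct from $\ell_1,\ell_2$. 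Without this replacement step the statement ``simple or traverses $\ell_1$ twice or thrice'' is simply not established.

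Second, Properties 5 and 6. Your ``length budget'' argument --- that one of the four recombined curves would ``realise a spectrum value proscribed to it'' --- only works when the recombinations can be excluded from the classes of the \emph{shorter} systoles, and that exclusion is not a matter of lengths. For the pair $(\ell_1,\ell_3)$ the recombination may legitimately be homotopic to $\ell_2$, and for $(\ell_2,\ell_3)$ it may be homotopic to $\ell_1$, without violating any length inequality; in those cases the short recombined curve cannot replace $\ell_3$ and no contradiction arises from counting alone. The paper rules these cases out topologically: Lemma~\ref{lem:l3-l1-twice-not-l2} shows (via a tubular neighbourhood, a bounded torus-with-one-boundary, and the algebraic intersection number with a parallel $\pi$) that not every recombination between consecutive intersections can be homotopic to $\ell_2$, and the proof of Lemma~\ref{lem:l3-l2-twice} needs a careful choice of consecutive intersections with $\ell_{3a}$ disjoint from $\ell_1$, Claims~\ref{clm:not-1-not-l2} and~\ref{clm:l1-simple}, and again the parallel/intersection-form argument on a genus-one subsurface. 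None of this is recoverable from ``the inequalities $\ell_1\le\ell_2\le\ell_3$ buy extra room in the length budget''. (For Property 7 you honestly flag that the neighbourhood case analysis is missing, and indeed your disposal of the four-holed-sphere case hits the same obstruction: the short surgery curve may be homotopic to $\ell_1$, and excluding that is the actual content of Claim~\ref{clm:not-3-among-4} and the ensuing case analysis.)
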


As a tool in our algorithm for computing the third systole, we also use algorithms that compute a shortest essential arc between one pair of vertices or between all pairs of vertices on a boundary component of the surface; we believe these algorithms are of independent interest.
%We emphasize that we do not assume uniqueness of shortest paths in this case.
The following theorem combines the results from Theorems~\ref{thm:shortest-arc} and~\ref{thm:shortest-arcs}. 

\begin{theorem}
    The shortest essential arc between a given pair of vertices on a boundary component of a combinatorial surface of complexity $n$ can be computed in $O(n^2)$ time. Moreover, after $O(n^3)$ time preprocessing, a shortest essential arc between a given pair of vertices on the boundary component can be computed in $O(n)$ time. As a consequence, shortest essential arcs between all pairs of vertices of the boundary component can be computed in $O(n^3)$ time.
\end{theorem}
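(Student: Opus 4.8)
The plan is to reduce the search for a shortest essential arc between two fixed boundary vertices $u$ and $v$ to a search over only $O(n)$ structured candidate arcs, and then to recognise which of these candidates are essential; this follows the paradigm used to find a shortest non-contractible cycle through a fixed vertex. Concretely, fix the boundary component $\partial_0$ and let $T_u$ and $T_v$ be shortest-path trees of $G$ rooted at $u$ and $v$, computable by two runs of Dijkstra's algorithm in $O(n\log n)$ time. I claim there is always a shortest essential arc of the form $T_u[u,x]\cdot e\cdot T_v[y,v]$ for some edge $e=xy$ of $G$ (with the tree path $T_u[u,v]$ as an additional, degenerate candidate). As preparation one records the standard exchange fact that a shortest essential arc may be taken simple. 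The structural claim itself is then proved by a Thomassen-type splicing argument built on the $3$-path condition for contractibility~\cite{thomassen1990}: writing a shortest essential arc as $P\cdot Q$ with $P$ from $u$ and $Q$ to $v$, one replaces $P$ and $Q$ by shortest paths and argues that this cannot create an arc homotopic, rel endpoints, into $\partial_0$ — the obstruction to such a homotopy being detected, via the $3$-path condition, by a shorter essential arc, with the two ``inessential'' classes being exactly the two boundary subarcs between $u$ and $v$. This leaves $O(n)$ candidate arcs, each a walk consisting of $O(n)$ edges.

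\textbf{Essentiality testing and the single-pair bound.} An arc $\gamma$ from $u$ to $v$ is essential precisely when it is not homotopic, rel endpoints, into $\partial_0$; equivalently, closing $\gamma$ with each of the two boundary subarcs between $u$ and $v$ yields in each case a loop that is not homotopic to a power of the boundary loop $[\partial_0]$. Each such test concerns a loop of combinatorial length $O(n)$ on a surface of complexity $O(n)$, and I would dispatch it in $O(n)$ time using the combinatorial-homotopy machinery (contractibility and ``homotopic to a simple-closed-curve power'' testing in the style of Lazarus--Rivaud and Erickson--Whittlesey), after an $O(n)$ preprocessing of a reduced cut system of $S$. Running one test per candidate gives $O(n^2)$ total time for the pair $(u,v)$, which is the first assertion; note that no uniqueness-of-shortest-paths hypothesis enters, since we only ever fix one shortest-path tree per root and reason about homotopy classes.

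\textbf{All pairs in $O(n^3)$.} For all pairs of vertices on $\partial_0$ I would first compute the shortest-path trees rooted at each of them, in $O(n^2\log n)$ time. Done naively, the single-pair algorithm now costs $O(n)$ candidates times $O(n)$ per essentiality test, i.e.\ $O(n^2)$ per pair and $O(n^4)$ in total, so the point is to amortise the essentiality tests. I would precompute once the homotopy signature of $S$ relative to $\partial_0$ — a cut graph together with the reduced crossing word of every tree path of every tree, of total size and construction time $O(n^3)$ — so that the essentiality of a candidate $T_u[u,x]\cdot e\cdot T_v[y,v]$ is decided by composing and reducing three precomputed words; with care (reusing reductions across the $O(n)$ candidates of a pair, and across pairs sharing a root) each pair is then handled in $O(n)$ time. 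This simultaneously yields the $O(n)$-per-query bound after $O(n^3)$ preprocessing and the $O(n^3)$ bound for all $O(n^2)$ pairs.

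\textbf{Main obstacle.} The Dijkstra runs and the bookkeeping are routine; the real work is twofold. First, the structural lemma: pinning down exactly which $O(n)$ candidates suffice and proving that the splicing exchange never destroys essentiality — delicate because ``essential'' is a homotopy (not a homology) condition and because there are two competing inessential classes, so the clean $3$-path argument for non-contractible cycles must be reworked relative to the boundary. Second, organising the homotopy tests so that they cost $O(n)$ in the single-pair algorithm and amortise to effectively $O(1)$ per candidate in the all-pairs algorithm, i.e.\ turning ``is this arc homotopic into $\partial_0$?'' into a lookup-and-reduce on precomputed per-edge and per-tree-path signatures.
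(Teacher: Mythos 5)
Your single-pair algorithm is close in spirit to the paper's (trees rooted at the two endpoints, path--edge--path candidates, an $O(n)$ homotopy test per candidate), but two steps are not justified as stated. First, your candidate family contains only the edge-type arcs $T_u[u,x]\cdot e\cdot T_v[y,v]$ (plus the single tree path), whereas without uniqueness of shortest paths one also needs the vertex-type candidates $T_u[u,w]\cdot T_v[w,v]^{-1}$ for every vertex $w$: the splicing/telescoping argument only shows that \emph{some} arc among both families is essential and no longer than the optimum, because a prefix of a chosen shortest path to $w$ need not coincide with the chosen tree path to the preceding vertex, so the essential term produced by the exchange can be of vertex type and lie outside your family. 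This is precisely where non-uniqueness bites, and it is why the paper's Lemma~\ref{lem:shortest-trick} keeps both families; your claim that ``no uniqueness hypothesis enters'' glosses over exactly this point. Second, your $O(n)$ essentiality test needs a concrete primitive for ``is $\gamma$ homotopic, rel endpoints, to $\delta_{xy}\cdot\delta^k$ for some $k$?''; the cited machinery only tests contractibility (and free homotopy of closed curves). The missing reduction is the commutator trick used in Theorem~\ref{thm:shortest-arc}: $\gamma$ is inessential iff $[\gamma\cdot\delta_{yx},\delta]$ is contractible, which rests on commuting elements of the (free) fundamental group sharing a primitive root. This is repairable, but as written it is an unproved step.

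The genuine gap is the all-pairs part. Your plan -- precompute reduced crossing words over a cut graph for every tree path and then ``compose and reduce three words'' per candidate, with amortization ``across candidates and across pairs sharing a root'' -- does not yield the claimed bounds: reduced words of tree paths have length $\Theta(n)$, so a per-candidate composition costs $\Theta(n)$, i.e.\ $\Theta(n^2)$ per pair and $\Theta(n^4)$ overall, and the proposed reuse is exactly the hard part and is never specified; moreover ``homotopic to a power of $\delta$'' is not decided by free reduction of crossing words alone. The paper's Theorem~\ref{thm:shortest-arcs} takes a structurally different route that is what makes $O(1)$ per candidate possible: it roots trees at \emph{interior} vertices $u$ and uses candidates $\gamma(u)=T_u(z_i)^{-1}\cdot T_u(z_j)$, whose essentiality reduces to contractibility of $\gamma(u)\cdot\delta_{ji}$ or $\gamma(u)\cdot\delta'_{ji}$, which in turn is an interval condition on which tree loops $T_u(z_{k-1}z_k)$ are contractible -- precomputable per root in linear time from the cut locus; edge candidates $T_u(z_i)^{-1}\cdot uv\cdot T_v(z_j)$ are only needed when the corresponding $\gamma(u)$ is contractible, in which case $\gamma(uv)\sim\gamma(u)\cdot\lambda(z_j,uv)$ and essentiality is a single lookup in a table $\Lambda[z_j,uv]$ of size $O(n^2)$ (recording whether $\lambda(z_j,uv)$ is homotopic to a power of $\delta$), filled once in $O(n^3)$. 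Without a factorization of this kind -- separating the dependence on the boundary pair from the dependence on the edge -- your $O(n^3)$ preprocessing and $O(n)$-per-pair claims remain unsubstantiated.
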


\paragraph*{Organization of paper.}

We start by discussing some preliminaries in Section~\ref{sec:preliminaries}. In Section~\ref{sec:shortestessentialarcs} we present our algorithms for computing shortest essential arcs between one pair of vertices, or between all pairs of vertices of a given boundary curve. In Section~\ref{sec:lemmas} we give some basic lemmas that will prove useful in our analysis of the number of crossings between the first, second and third systoles, which follows in Section~\ref{sec:propertiesofsecondandthirdsystoles}. Our algorithms for computing the second and third systoles are given in Section~\ref{sec:computingsecondthirdsystole}. Finally, we discuss whether our methods can be generalized to compute further values of the length spectrum in Section~\ref{sec:discussion}.

\vspace{2mm}
\paragraph*{Acknowledgements.}

We would like to thank the referees for their helpful comments.

\section{Preliminaries}\label{sec:preliminaries}
\paragraph{Combinatorial curves and surfaces.}  We allow graphs to have loops and multiple edges. We view each edge of a graph as a pair of opposite half-edges, each one incident to an endpoint of that edge, called its \define{origin}. A half-edge can also be seen as an oriented edge directed away from the half-edge origin. We denote by $h\inv$ the opposite of a half-edge $h$. Let $S$ be a compact oriented surface without boundary. A cellular embedding of a graph $G$ in $S$ is a topological embedding of $G$ in $S$ whose complement in $S$ is a finite collection of connected sets each homeomorphic to an open disk. Such disks are the faces of the embedding. Up to a homeomorphism of $S$, this embedding is determined by a rotation system that records for every vertex of the graph the direct cyclic order of its incident half-edges with respect  to the orientation of $S$. Choosing a subset of faces and declaring them as perforated, we obtain the notion of \define{combinatorial surface} without or with boundary according to whether this subset is empty or not. $G$ is referred to as the graph of the (combinatorial) surface. Its boundary edges and  vertices are the ones incident to perforated faces. We shall always assume that $G$ has no isolated vertex. Note that they would correspond to sphere components of $S$.
We endow every combinatorial surface with a metric by choosing a positive weight function over the edges.
%\footnote{We may actually consider edges with zero weight by assigning them a positive infinitesimal smaller than any positive real number.}.
We denote the combinatorial surface with graph $G$ embedded on a surface $S$ by $(S, G)$. The edge weight function and the subset of perforated faces are implicit in this notation, and we may speak of a \emph{weighted} combinatorial surface to recall that its edges are weighted. The \define{complexity} of $(S, G)$ is the complexity of its graph, i.e. its total number of edges and vertices.

A \define{combinatorial curve} in a combinatorial surface is just a walk in its graph, that is a sequence of half-edges such that the origin of a half-edge coincides with the origin of the opposite of the previous half-edge in the sequence. A curve is closed when its endpoints coincide. An \define{arc} is a  walk whose endpoints are boundary vertices and whose incident perforated faces are well identified (hence, a choice must be made when an endpoint is incident to several perforated faces). The length $|c|$ of a combinatorial curve $c$ is the sum of the weights of its half-edges, counted with multiplicity, where the half-edges are given the weights of the corresponding edges.

\paragraph{Combinatorial crossings.} In order to define the notion of intersection between combinatorial curves unambiguously, we augment every set $\mathcal{C}$ of such curves  with a \define{combinatorial perturbation} as introduced in~\cite[Sec 5.1]{cl-oslos-05} (see also~\cite[Sec. 6]{lpvv-ccpso-01} and~\cite[Sec. 7]{dl-cginc-19}). This is the data for every half-edge of $G$ of a left-to-right order over all occurrences of that half-edge or its opposite in $\mathcal{C}$. We require that the left-to-right order of a half-edge is the reverse of the left-to-right order of its opposite half-edge. By concatenating the orders of all the half-edges with the same origin $v$, in the cyclic order given by the rotation system, we obtain a cyclic order at $v$ over all occurrences of these half-edges or their opposite in $\mathcal{C}$. For every perforated face $f$ incident with $v$, we also insert a dummy half-edge $h_{v,f}$ in this cyclic order in-between the two half-edges bounding $f$ (a face could be incident several times to $v$, in which case we insert dummy half-edges in-between each pair of half-edges bounding $f$ and sharing a common occurrence of $v$).

If $h$ is a half-edge occurrence in a curve $\gamma\in \mathcal{C}$, we denote by $h_{\text{prec}}$ the half-edge preceding $h$ along $\gamma$. This definition assumes that $h$ has a preceding half-edge along $\gamma$, or equivalently that the origin of $h$ is an interior vertex occurrence of $\gamma$. If $\gamma$ is an arc, then the origin $v$ of $h$ might be a boundary vertex associated to a perforated face $f$, in which case we set $h_{\text{prec}}=h_{v,f}$.

Given two distinct half-edge occurrences $h, k$ in $\mathcal{C}$ whose respective origins $v_1,v_2$ are distinct occurrences of a same vertex, we say that the curves in $\mathcal{C}$ have a \define{crossing at $(v_1,v_2)$} or \define{cross at $(v_1,v_2)$} if the four half-edge occurrences $h, k, h_{\text{prec}}\inv, k_{\text{prec}}\inv$ appear in this cyclic order or its opposite in the cyclic order at $v$. A curve is \define{weakly simple}\footnote{Weakly simple is called \emph{essentially simple} in~\cite[Th. 3.3]{ew-csec-10}.} if it admits a combinatorial perturbation without self-crossing. Similarly, two curves are \define{weakly disjoint} if they have a combinatorial perturbation without a crossing involving both curves. 

\paragraph{Duality and cross-metric surfaces.}
There is an alternative way of dealing with curves on combinatorial surfaces based on the dual notion of cross-metric surface~\cite{de2010}. The \define{cross-metric surface} associated to a combinatorial surface $(S,G)$ is an embedding of the dual graph $G^*$ on the same surface $S$. The graph $G^*$ has one dual vertex $f^*$  in every face $f$ of $G$ and a dual edge $e^*$ for every edge $e$ of $G$ connecting vertices dual to the faces incident to $e$. The duality actually extends to the half-edges of $e=(h,h\inv)$ by associating $h$ with the half-edge of $G^*$ whose  origin is dual to the face lying to the right of $h$, and similarly for $h\inv$. We identify as perforations the dual vertices corresponding to perforated faces. The reverse of this duality associates every cross metric surface with a combinatorial surface. In this reverse duality perforated vertices of $G^*$ maps to perforated faces of $G$ and the right face should be replaced by the left one to define half-edge duality.

The weight function over the edges of a weighted combinatorial surface transfers to a weight function over the dual edges of the corresponding cross-metric surface. A \define{normal curve} on a cross-metric surface is a continuous map from a circle or a line segment to $S$ that intersects $G^*$ transversely and away from its vertices, except for the extremities of arcs that must coincide with perforated vertices. The length $|\gamma|$ of a normal curve $\gamma$ is the sum of the weights of the edges crossed by this curve, counted with multiplicity. Note that normal curves can always be perturbed to isotopic curves with a finite number of crossings away from $G^*$ while preserving the ordered sequence of edges of $G^*$ crossed by each of the curves. Moreover, if two normal curves touch but do not cross, they can be perturbed to disjoint curves.

We associate with each crossing of a normal curve $\gamma$ with an edge $e^*$ of $G^*$ the half-edge of $e$ whose origin is dual to the face crossed by $\gamma$ just before the crossing. This way, 
the sequence of crossings of $\gamma$ with the edges of $G^*$ gives rise to a combinatorial curve $c_\gamma$ defined by the corresponding sequence of half-edges. $c_\gamma$ comes with a combinatorial perturbation defined for each half-edge $h$ of an edge $e$ by the ordering of the crossings of $\gamma$ along the edge $e^*$ oriented according to  the dual of $h$. Conversely, a combinatorial perturbation of a combinatorial curve $c$ can be realized as a normal curve $\gamma_c$ in a neighborhood of the support of $c$ in $S$ so that $c_{\gamma_c}=c$ comes with the same combinatorial perturbation.
This duality extends directly to set of curves and preserves their lengths, i.e. $|c|=|\gamma_c|$.

A normal curve intersects the interior of each dual face into a collection of arcs, that we call \define{strands}, with limit points in $G^*$. 
There are many ways to realize a combinatorial perturbation as a set of normal curves. In particular, the strands of the normal curves may or may not have self-crossings and cross pairwise more than once. To avoid this situation, we will always assume that the strands of the normal curves are in minimal configuration inside each dual face. This ensures that the number of combinatorial \mbox{(self-)cross}ings of the combinatorial perturbation is the same as the number of \mbox{(self-)cross}ings of its realization as normal curves~\cite[Lem. 27]{dl-cginc-19}. Even so, there might be minimally crossing realizations that are not isotopic inside each dual face. However, it can be proven that they differ by a sequence of 3-3 moves inside each face~\cite{gs-mcmcr-97,p-cails-02}. 

We will freely use the combinatorial and the dual cross-metric formalisms, depending on which one best suits the presentation.

\paragraph{Cutting a surface along a set of curves.}
Consider a combinatorial surface $\Sigma=(S,G)$, possibly with boundary, and a combinatorial perturbation $\Pi$ of a set $\mathcal{C}$ of combinatorial curves. We define the \define{cutting} of $\Sigma$ along $\mathcal{C}$ as a new combinatorial surface, which we denote by $\Sigma\cut \Pi$, or $\Sigma \cut \mathcal{C}$ when the combinorial perturbation is implicit. The precise definition of this operation, which is somewhat cumbersome, can be described more simply in the dual setting of  cross-metric surfaces and normal curves. Let $\Gamma$ be a realization  of $\Pi$, where the normal curves in $\Gamma$ realize the curves in $\mathcal{C}$. We cut $S$ through $\Gamma$, meaning that we take the completion (with respect to any PL or Riemannian metric) of the complement of $\Gamma$ in $S$. Denote by  $S'$ the resulting surface with boundary. $\Gamma$ also cuts the dual graph $G^*$ to give a graph $H^*$ embedded in $S'$.
We then contract every boundary component $\delta$ of $S'$. Such a contraction identifies all the intersections of $H^*$ with $\delta$ to a single point that we mark as a new perforation (in addition to the perforations of $G^*$, if any). The contractions of all these boundaries result in a closed surface $S''$, possibly not connected, and a graph $K^*$ with (vertex) perforations that is cellularly embedded on $S''$. It thus provides a cross-metric structure for $S''$ where the pieces of an edge of $G^*$ in $K^*$ are each assigned the same weight as that edge. We throw away the components of $S''$ intersecting $K^*$ in a single (necessarily perforated) vertex.
$\Sigma\cut \Pi$ can now be defined as  the dual combinatorial surface. The weight assignment ensures that any curve in $\Sigma\cut \Pi$ identifies to a curve of $\Sigma$ with the same length. It is not difficult to see that although $\Gamma$ is defined up to 3-3 moves inside faces, the corresponding constructions of $S''$ and $K^*$ are invariant under such moves, so that $\Sigma\cut \Pi$ is well-defined.

\paragraph{Shortest path trees and tree loops.}
For any vertex $s$ of a connected graph $G$, we denote by $T_s$ a tree of shortest paths from the source $s$ to every other vertex of $G$. In general, unless we assume that every two vertices are joined by a unique shortest path, there might be several shortest path trees with the same source $s$. We thus consider $T_s$ as one of them, say as computed by running Dijkstra's algorithm. For every vertex $u$ and every edge $uv$,  we then denote by $T_s(u)$ the shortest path from $s$ to $u$ in $T_s$ and by $T_s(uv) := T_s(u)\cdot uv\cdot T_s(v)\inv$ the tree loop resulting from the concatenation of the edge $uv$ with the tree paths between its endpoints and $s$. Suppose that $G$ is cellularly embedded in $S$, and let $G^*$ be the dual graph with respect to this embedding. Following~\cite{ew-csec-10}, we define the \define{cut locus} of $s$ as the subgraph obtained from $G^*$ by removing all the edges dual to the edges of $T_s$. The cut locus may be seen as the adjacency graph of the faces of the combinatorial surface $(S,G)$ cut along $T_s$. It easily follows that a loop $T_s(uv)$ is contractible in $S$ if and only if either $uv$ is an edge of $T_s$, or removing the dual edge from the cut locus leaves at least one tree component~\cite[Lem. 4.1]{ccl-fctpe-11}. If $(S,G)$ has perforated faces, we also require that this tree component does not contain any dual vertex perforation. By removing the leaves of the cut locus iteratively, checking that it is not marked as a perforation, we obtain all the dual edges $uv^*$ such that $T_s(uv)$ is contractible. This clearly takes linear time.

\section{Shortest essential cycles and arcs}\label{sec:shortestessentialarcs}
A closed curve in $S$ is \define{essential} if it simple and neither homotopic to a boundary component nor to a point. On a combinatorial surface, where curves lie in the graph (1-skeleton) of the surface, the property of being simple should be replaced by \emph{weakly simple}. In this framework, Erickson and Worah~\cite[th. 3.3]{ew-csec-10} proved that the shortest essential cycle can be decomposed into an alternating concatenation of an edge, a shortest path, an edge and a shortest path. This decomposition allows them to
compute the shortest essential weakly simple cycle in $O(n^2\log n)$ time, or in $O(n\log n)$ time on a combinatorial surface of complexity $n$, when the genus and number of boundary components are fixed. Note that Erickson and Worah need to assume that shortest paths are unique and that every edge is a shortest path between its endpoints for their algorithm to work correctly.
The first of these assumptions can be circumvented using the notion of \emph{lex shortest paths} introduced by Hartvigsen and Mardon~\cite{hartvigsen1994}: given vertices $u,v\in V$ there exists a unique path $P$ (called the lex shortest path), such that for any other path $P'$ from $u$ to $v$ either $w(P)<w(P')$, or $w(P)=w(P')$ and $\ell(P)<\ell(P')$ (where $\ell(P)$ denotes the number of edges in $P$), or $w(P)=w(P')$ and $\ell(P)=\ell(P')$ and $\min(V(P)\setminus V(P'))<\min(V(P')\setminus V(P))$. A collection of lex shortest path trees rooted at every vertex can be computed in $O(n^2\log n)$ time~\cite[Propositions 4.10 and 4.11]{hartvigsen1994}. Note that Proposition 4.11 is stated for planar graphs, but there is no need for this assumption. The paths in the lex shortest path trees can then be used as unique shortest paths. Therefore, we may henceforth assume without loss of generality that after $O(n^2\log n)$ preprocessing time shortest paths are unique\footnote{See~\cite[Sec. 6]{cce-msspe-13} and~\cite[Sec. 3]{efl-hmcpf-18} for more details on how to enforce uniqueness of shortest paths}.

In addition to the computation of shortest essential weakly simple cycles, our toolbox for computing the second and third systole requires an algorithm to compute shortest essential arcs. For arcs, we use a relaxed definition of essential: an arc with endpoints on the boundary of $S$ is \define{inessential} if it can be homotoped to the boundary of $S$. In particular, it may wind several times around a boundary component.
An arc is \define{essential} if it is not inessential.
Of course, any arc $\alpha$ with endpoints on distinct boundary components is essential.
We insist that an essential arc need not be (weakly) simple. In Lemma~\ref{lem:path-edge-path} below, we state a decomposition of shortest essential arcs similar to the decomposition of shortest essential cycles~\cite[Th. 3.3]{ew-csec-10}. Namely, a shortest essential arc is the concatenation of two shortest paths, possibly with an edge in-between. We leverage this decomposition to provide algorithms for computing an essential arc between \emph{a given} pair of endpoints (Theorem~\ref{thm:shortest-arc}), or for computing shortest essential arcs between \emph{every} pair of endpoints on a boundary curve (Theorem~\ref{thm:shortest-arcs}).

\subsection{Computing shortest essential arcs.}
Here, we present simple algorithms for computing shortest essential arcs in the combinatorial framework. We only consider arcs with endpoints on a same boundary component, since otherwise the problem is trivially solved by computing shortest paths between the endpoints. As opposed to the work of Erickson and Worah~\cite{ew-csec-10} for computing shortest essential cycles,  \emph{we do not need to assume that shortest paths are unique, nor that edges are shortest paths between their endpoints}, even if this last condition is easy to enforce.

In the following lemmas, we keep the classical notation $p\cdot q$ for the concatenation of two paths $p,q$. We write $p\sim q$ to indicate that $p$ and $q$ are homotopic with fixed endpoints. We first state a decomposition for shortest essential arcs analogous to the decomposition of shortest essential cycles given in~\cite{ew-csec-10}.
\begin{lemma}\label{lem:path-edge-path}
  Let $x,y$ be two distinct vertices on a boundary component $\delta$ of a combinatorial surface. Any shortest essential arc between $x$ and $y$ is composed of a shortest path from $x$ to some vertex $u$, an edge $uv$, and a shortest path from $v$ to $y$.
\end{lemma}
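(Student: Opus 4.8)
The plan is to mimic the exchange argument used by Erickson and Worah for essential cycles, but working with arcs and being careful not to rely on uniqueness of shortest paths. Let $\alpha$ be a shortest essential arc from $x$ to $y$. I would first argue that $\alpha$ cannot be a single shortest path: if $\alpha$ were homotopic with fixed endpoints to a shortest path $P$ from $x$ to $y$, then $\alpha$ and $P$ together would bound a disk, so $\alpha$ would be inessential (it could be homotoped onto $P$ and then, since $P$ lies in the boundary-adjacent part of the surface after cutting, slid to $\delta$). Hence $\alpha$ must use at least one edge that is not on the relevant shortest-path tree, which is what will let us isolate the "middle edge" $uv$.

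Next I would set up the exchange. Write $T_x$ for a shortest-path tree rooted at $x$ (computed by Dijkstra). Walk along $\alpha$ from $x$ and let $uv$ be the first edge such that the prefix of $\alpha$ up to $u$ is homotopic (with fixed endpoints) to $T_x(u)$ but the prefix up to $v$ is \emph{not} homotopic to $T_x(v)$ — equivalently, $uv$ is the first edge along $\alpha$ whose associated tree loop $T_x(uv)$ is non-contractible, after replacing the traversed prefix by the corresponding tree path. Such an edge exists by the previous paragraph. Replacing the prefix $x\!\rightsquigarrow\! u$ of $\alpha$ by $T_x(u)$ does not increase the length and does not change the homotopy class rel endpoints, so it keeps the arc essential; this gives the initial shortest path $T_x(u)$ followed by the edge $uv$.

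It remains to handle the suffix $v\!\rightsquigarrow\! y$ of (the modified) $\alpha$, call it $\beta$. Here I would use a symmetric argument with a shortest-path tree $T_y$ rooted at $y$ (or at a vertex of $\delta$), running along $\beta$ from $y$: I claim $\beta$ can be replaced by the shortest path $T_y(v)$ without destroying essentiality. The key point is that if replacing $\beta$ by a shortest $v$–$y$ path $Q$ made the arc $T_x(u)\cdot uv\cdot Q$ inessential, then the original arc $T_x(u)\cdot uv\cdot \beta$ would be homotopic rel endpoints to $Q^{-1}\cdot\beta$ concatenated appropriately with a curve confined near $\delta$; combined with the non-contractibility of the tree loop $T_x(uv)$, one derives that $\alpha$ was already inessential, a contradiction. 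Because we never lengthen the arc during these substitutions and $\alpha$ was shortest, each substitution in fact produces another shortest essential arc, which is exactly the claimed decomposition: shortest path, edge, shortest path.

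The main obstacle is the essentiality bookkeeping in the last step: unlike the cycle case, "inessential" for arcs means homotopic \emph{into} $\delta$ (allowing winding), so the exchange argument must track how the replaced subpaths interact with $\delta$ and with the basepoints, rather than just with a global homotopy class. Making precise the claim that cutting the surface along the tree $T_x$ (whose cut locus is described in the preliminaries) turns the relevant prefix into something that can be slid to the boundary — and that this is compatible with the later replacement of the suffix — is the delicate part; everything else is a routine "shortest arc does not lengthen under path swaps" argument.
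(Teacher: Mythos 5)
Your proposal has two genuine problems. First, its opening step is false: a shortest essential arc can perfectly well be a single shortest path, and being homotopic rel endpoints to a shortest path does not make an arc inessential --- essentiality is about being homotopic into $\delta$ (up to winding), and a shortest path between two boundary vertices has no reason to admit such a homotopy; the parenthetical ``slid to $\delta$ after cutting'' does not hold for an arbitrary shortest path. Since the existence of your ``first edge $uv$ where the prefix stops being homotopic to the tree path'' rests on this claim, the construction can break down (e.g.\ when $\alpha$ is homotopic rel endpoints to $T_x(y)$). Second, even where the construction goes through, it only produces \emph{some} shortest essential arc of the form shortest path--edge--shortest path, obtained from $\alpha$ by substituting tree paths; the lemma asserts that \emph{any} shortest essential arc $\alpha$ itself splits this way, i.e.\ that the actual prefix $\alpha_{xu}$ and suffix $\alpha_{vy}$ of $\alpha$ are shortest paths. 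Showing the prefix is homotopic to $T_x(u)$ does not make it a shortest path, especially without uniqueness of shortest paths.

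The essentiality bookkeeping you yourself flag as ``the delicate part'' is exactly where the exchange fails: if $T_x(u)\cdot uv\cdot Q$ is inessential you only learn it is homotopic to $\delta_{xy}\cdot\delta^{k}$ for some $k$, and without control of the loop $Q^{-1}\cdot\beta$ you cannot transfer this information to $\alpha$. The paper avoids the tree machinery entirely: for any vertex $u$ along $\alpha$, if neither $\alpha_{xu}$ nor $\alpha_{uy}$ were a shortest path, then the three arcs $\alpha_{xu}\cdot\sigma_{uy}$, $\sigma_{xu}\cdot\sigma_{uy}$, $\sigma_{xu}\cdot\alpha_{uy}$ are all strictly shorter than $\alpha$, hence all inessential, say homotopic to $\delta_{xy}\cdot\delta^{k}$, $\delta_{xy}\cdot\delta^{\ell}$, $\delta_{xy}\cdot\delta^{m}$; composing them gives $\alpha\sim(\alpha_{xu}\cdot\sigma_{uy})\cdot(\sigma_{xu}\cdot\sigma_{uy})^{-1}\cdot(\sigma_{xu}\cdot\alpha_{uy})\sim\delta_{xy}\cdot\delta^{k+m-\ell}$, contradicting essentiality of $\alpha$. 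Hence for every $u$ along $\alpha$ one of $\alpha_{xu},\alpha_{uy}$ is a shortest path, and taking the last $u$ with $\alpha_{xu}$ shortest together with its successor $v$ yields the decomposition of $\alpha$ itself (the case $u=y$ being trivial). Cut loci and contractibility of tree loops play no role here; the relevant invariant is homotopy to powers of $\delta$, which your sketch notes but does not handle.
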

\begin{proof}
  Let $\alpha$ be a shortest essential arc between $x$ and $y$. By a vertex \emph{along} $\alpha$ we mean an occurrence of that vertex in $\alpha$. Hence, if $\alpha$ goes several times through a same vertex we may have two distinct occurrences $u$ and $v$ of that same vertex along $\alpha$ and consider the (non-trivial) subarc of $\alpha$ from $u$ to $v$. More generally,
if $u,v$ are vertices along $\alpha$, with a little abuse of notation,  we denote by $\alpha_{uv} = \alpha_{vu}\inv$ the subarc of $\alpha$ from $u$ to $v$. We claim that for any vertex $u$ along $\alpha$, either $\alpha_{xu}$ or $\alpha_{uy}$ is a shortest path. Suppose otherwise and let $\sigma_{xu}, \sigma_{uy}$ be shortest paths from $x$ to $u$ and from $u$ to $y$, respectively. Since $\alpha_{xu}\cdot \sigma_{uy}$ is shorter than $\alpha$, it must be inessential. In other words $\alpha_{xu}\cdot \sigma_{uy}\sim \delta_{xy}\cdot \delta^k$ for some integer $k$. Analogously, we have $\sigma_{xu}\cdot \sigma_{uy}\sim \delta_{xy}\cdot \delta^\ell$ and $\sigma_{xu}\cdot \alpha_{uy}\sim \delta_{xy}\cdot \delta^m$ for some integers $\ell,m$. It ensues that
  \[\alpha = \alpha_{xu}\cdot \alpha_{uy}\sim (\alpha_{xu}\cdot \sigma_{uy})\cdot (\sigma_{xu}\cdot \sigma_{uy})\inv \cdot (\sigma_{xu}\cdot \alpha_{uy}) \sim \delta_{xy}\cdot  \delta^{k+m-\ell}.
    \]
    This however contradicts the hypothesis that $\alpha$ is essential and concludes the proof of the claim.

    We now consider the last vertex $u$ along $\alpha$ such that $\alpha_{xu}$ is a shortest path. If $u=y$ then $\alpha$ is a shortest path and the lemma is trivially true. Otherwise let $v$ be the vertex next to $u$ along $\alpha$. Since $\alpha_{xv}=\alpha_{xu}\cdot uv$ is not a shortest path, it follows by the above claim that $\alpha_{vy}$ is a shortest path. We have thus obtained the required decomposition.
 \end{proof}

Choose a shortest path between every pair of vertices. We claim that we can find a shortest essential arc among paths that decompose into two of these shortest paths or into a shortest path, an edge, and a shortest path. Note that a concatenation of two shortest paths also decomposes into a shortest path, an edge, and a shortest path. However, the non-uniqueness of shortest paths makes the set of paths of the first type a priori not included in the set of paths of the second type.

 \begin{lemma}\label{lem:shortest-trick}
   Let $x,y$ be two distinct vertices on a boundary component $\delta$ of a combinatorial surface $\Sigma$. Suppose that we have for each vertex $u$ a shortest path $\sigma_{xu}$  between $x$ and $u$ and a shortest path\footnote{Here, if $v$ is a vertex along $\sigma_{xu}$ or $\sigma_{uy}$, we do not assume that $\sigma_{xv}$ is a subpath of $\sigma_{xu}$, nor that $\sigma_{vy}$ is a subpath of $\sigma_{uy}$.} $\sigma_{uy}$ between $u$ and $y$. Then, the set of paths of the form
   \begin{itemize}
     \item $\sigma_{xu}\cdot \sigma_{uy}$, for $u$ a vertex of $\Sigma$, or
   \item $\sigma_{xu}\cdot uv\cdot \sigma_{vy}$, for $uv$ an (oriented) edge of $\Sigma$
   \end{itemize}
    contains a shortest essential arc between $x$ and $y$.
\end{lemma}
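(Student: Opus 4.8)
The plan is to combine Lemma~\ref{lem:path-edge-path} with a rerouting argument that handles the non-uniqueness of shortest paths. By Lemma~\ref{lem:path-edge-path}, some shortest essential arc $\alpha$ between $x$ and $y$ decomposes as $p \cdot uv \cdot q$, where $p$ is a shortest path from $x$ to $u$, $uv$ is an edge, and $q$ is a shortest path from $v$ to $y$. The arc $\alpha$ has the same length as the candidate $\sigma_{xu} \cdot uv \cdot \sigma_{vy}$ in our list, since $|p| = |\sigma_{xu}|$ and $|q| = |\sigma_{vy}|$. So the only thing to check is that at least one candidate of this form (or of the two-shortest-paths form) is \emph{essential}; the length optimality is then immediate because no essential arc can be shorter than $\alpha$, and every candidate has length at least that of a corresponding path-edge-path decomposition, which is at least $|\alpha|$ — wait, that last inequality needs care, so let me restructure.

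First I would argue that every path in the listed set has length $\geq |\alpha|$ when it is essential: indeed $|\sigma_{xu} \cdot \sigma_{uy}| = |\sigma_{xu}| + |\sigma_{uy}|$ is the length of a path from $x$ to $y$, and if it is essential it is at least as long as the shortest essential arc $\alpha$; similarly for the path-edge-path form. Hence it suffices to produce one essential candidate with length exactly $|\alpha|$, and the natural choice is $\sigma_{xu} \cdot uv \cdot \sigma_{vy}$ for the $u,v$ coming from the decomposition of $\alpha$. So the crux is: \emph{show that $\sigma_{xu} \cdot uv \cdot \sigma_{vy}$ is essential}, or else that some two-shortest-path candidate is. I would run the same homotopy-algebra trick as in the proof of Lemma~\ref{lem:path-edge-path}. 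Suppose, for contradiction, that $\sigma_{xu} \cdot uv \cdot \sigma_{vy}$ is inessential, i.e. $\sigma_{xu} \cdot uv \cdot \sigma_{vy} \sim \delta_{xy} \cdot \delta^{k}$ for some integer $k$. We also know $\alpha = p \cdot uv \cdot q$ is essential. The paths $p$ and $\sigma_{xu}$ share endpoints, as do $q$ and $\sigma_{vy}$; writing the difference cycles $c_u := p \cdot \sigma_{xu}\inv$ (a closed curve at $x$) and $c_v := \sigma_{vy} \cdot q\inv$ (a closed curve at $y$), we get
\[
\alpha = p \cdot uv \cdot q \sim c_u \cdot (\sigma_{xu} \cdot uv \cdot \sigma_{vy}) \cdot (\text{conjugate of } c_v) \sim c_u \cdot \delta_{xy} \cdot \delta^{k} \cdot (\sigma_{vy}\inv \cdot c_v\inv \cdot \sigma_{vy}).
\]
This does not immediately yield that $\alpha$ is inessential, because $c_u$ and $c_v$ need not be trivial. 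The key extra observation must be that $c_u$ and $c_v$ \emph{are} inessential: since $p$ and $\sigma_{xu}$ are both shortest paths from $x$ to $u$, the closed curve $c_u = p \cdot \sigma_{xu}\inv$ has length $2|\sigma_{xu}|$; but if it were homotopically essential (not homotopic to a boundary component or point), one could... hmm, that is not obviously true either.

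Let me reconsider the main obstacle: the honest difficulty is precisely controlling what happens when several shortest paths are available. I expect the correct argument runs as in Lemma~\ref{lem:path-edge-path} but applied to the candidate directly: take the specific shortest essential arc $\alpha = p \cdot uv \cdot q$ given by that lemma, and observe that the proof of Lemma~\ref{lem:path-edge-path} actually shows that \emph{for every vertex $w$ along $\alpha$}, one of $\alpha_{xw}, \alpha_{wy}$ is a shortest path — and more relevantly, that replacing a shortest-path subarc of $\alpha$ by \emph{any other} shortest path with the same endpoints yields a homotopic arc, \emph{provided the result stays essential}. Concretely: if $\sigma_{xu} \cdot uv \cdot \sigma_{vy}$ were inessential, then $\sigma_{xu} \cdot uv \cdot \sigma_{vy} \sim \delta_{xy} \cdot \delta^{k}$, while $p \cdot uv \cdot \sigma_{vy}$ — replacing only the trailing shortest path — if \emph{it} is also inessential, say $\sim \delta_{xy}\cdot\delta^{m}$, and since $\alpha = p\cdot uv\cdot q$ is essential we would need $p \cdot uv \cdot q$ not inessential; but $q$ and $\sigma_{vy}$ are interchangeable shortest paths, so $p\cdot uv\cdot q \sim p\cdot uv\cdot \sigma_{vy}\cdot(\sigma_{vy}\inv\cdot q)$ and the cycle $\sigma_{vy}\inv\cdot q$ at $y$, of length $2|\sigma_{vy}|$, must then be essential for $\alpha$ to be essential. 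The cleanest route, which I would pursue, is the three-term algebraic identity exactly mirroring Lemma~\ref{lem:path-edge-path}: form the arcs $A_1 = \sigma_{xu}\cdot uv\cdot \sigma_{vy}$, $A_2 = \sigma_{xu}\cdot uv\cdot q$, $A_3 = p\cdot uv\cdot q = \alpha$, note $A_1 \sim A_2 \cdot (q\inv \cdot \sigma_{vy})$ up to basepoint conjugation and $A_2 \sim (\sigma_{xu}\cdot p\inv) \cdot A_3$; if \emph{both} $A_1$ and $A_2$ were inessential then, combining, $A_3$ would be homotopic to $\delta_{xy}\cdot\delta^{\text{(something)}}$ times a product of the difference cycles, and one checks those difference cycles themselves satisfy $\delta_{xy}$-type relations because $A_2$ inessential forces $\sigma_{xu}\cdot p\inv \sim$ (trivial, via $A_3$ essential analysis). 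Rather than chase every subcase here, my proposal is: by Lemma~\ref{lem:path-edge-path} fix $\alpha = p\cdot uv\cdot q$ essential and shortest; if $\sigma_{xu}\cdot uv\cdot\sigma_{vy}$ is essential we are done (same length as $\alpha$, hence optimal, and it is in the list); otherwise run the substitution/homotopy-algebra argument to derive that $\alpha$ itself is inessential, contradicting Lemma~\ref{lem:path-edge-path}. The two-shortest-paths bullet in the statement is there merely as a convenience (a concatenation of two chosen shortest paths is cheaper to enumerate than all path-edge-path triples), and is subsumed once one allows the edge $uv$ to also be degenerate/absent, so no separate treatment is needed beyond remarking that $\sigma_{xu}\cdot\sigma_{uy}$ is always in the candidate pool. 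The main obstacle, as flagged, is verifying that the difference cycles arising from swapping one shortest path for another are inessential (homotopic to powers of boundary curves), which is what makes the algebra collapse to the desired contradiction; I would isolate this as the technical heart of the proof.
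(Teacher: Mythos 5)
Your proposal does not close the key step, and the route you sketch to close it would fail. The crux you correctly identify is essentiality of some candidate, but your plan hinges on showing that the ``difference cycles'' $p\cdot\sigma_{xu}\inv$ and $\sigma_{vy}\cdot q\inv$ (two shortest paths between the same endpoints) are inessential, i.e.\ homotopic into the boundary. This is false in general: nothing prevents two distinct shortest paths between the same pair of vertices from closing up into an essential loop, and the lemma makes no uniqueness assumption that would rule this out. Because of this, fixing the single decomposition $\alpha=p\cdot uv\cdot q$ from Lemma~\ref{lem:path-edge-path} and arguing only about $\sigma_{xu}\cdot uv\cdot\sigma_{vy}$ (and the one intermediate arc $\sigma_{xu}\cdot uv\cdot q$) cannot yield the contradiction: knowing these are inessential says nothing about $\alpha$ unless you control the uncontrolled cycles $\sigma_{xu}\cdot p\inv$ and $\sigma_{vy}\cdot q\inv$. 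Relatedly, your dismissal of the first bullet ($\sigma_{xu}\cdot\sigma_{uy}$) as ``subsumed'' is wrong and is explicitly warned against in the paper: with non-unique shortest paths, a concatenation of two chosen shortest paths need not appear among the path--edge--path candidates built from the chosen $\sigma$'s, and these candidates are genuinely needed.

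The paper's proof avoids difference cycles altogether and does not use Lemma~\ref{lem:path-edge-path}. Writing $x=u_1,\dots,u_k=y$ for \emph{all} vertices along $\alpha$, one has the telescoping identity
\[
\alpha \sim (u_1u_2\cdot\sigma_{u_2y})\cdot(\sigma_{xu_2}\cdot\sigma_{u_2y})\inv\cdot(\sigma_{xu_2}\cdot u_2u_3\cdot\sigma_{u_3y})\cdots(\sigma_{xu_{k-1}}\cdot u_{k-1}u_k),
\]
in which only the chosen shortest paths $\sigma$ and single edges of $\alpha$ appear (consecutive factors cancel, so no comparison between $\sigma_{xu_i}$ and a subpath of $\alpha$ is ever needed). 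If every factor $\gamma_i=\sigma_{xu_i}\cdot u_iu_{i+1}\cdot\sigma_{u_{i+1}y}$ and $\lambda_i=\sigma_{xu_i}\cdot\sigma_{u_iy}$ were inessential, i.e.\ homotopic to $\delta_{xy}\cdot\delta^{k}$ for various $k$, the whole product would collapse to $\delta_{xy}\cdot\delta^{m}$, making $\alpha$ inessential --- a contradiction. Hence some $\gamma_i$ or $\lambda_i$ is essential, and since each has length at most $|\alpha|$ (shortest paths are no longer than the corresponding subarcs of $\alpha$), it is a shortest essential arc. If you want to salvage your approach, you should replace the single swap at $u,v$ by this global telescoping over all vertices of $\alpha$; as written, your argument has a genuine gap at exactly the point you flagged.
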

\begin{proof}
  Let $\alpha$ be a shortest essential arc between $x$ and $y$. Denote by $x=u_1,u_2,\dots, u_k=y$ the vertices along $\alpha$. We have
  \[\alpha \sim (u_1u_2\cdot \sigma_{u_2y})\cdot (\sigma_{xu_2}\cdot \sigma_{u_2y})\inv \cdot (\sigma_{xu_2}\cdot u_2u_3\cdot \sigma_{u_3y})\cdot (\sigma_{xu_3}\cdot \sigma_{u_3y})\inv\cdots (\sigma_{xu_{k-1}}\cdot u_{k-1}u_k).
  \]
Hence, if all the paths $\gamma_{i}:=\sigma_{xu_{i}}\cdot u_{i}u_{i+1}\cdot \sigma_{u_{i+1}y}$ and $\lambda_{i}:=\sigma_{xu_{i}}\cdot \sigma_{u_{i}y}$ were inessential, then so would be $\alpha$, in contradiction with the hypothesis. It follows that one of the $\gamma_i$'s  or  $\lambda_i$'s is essential. Since such a path is no longer than $\alpha$, it must be a shortest essential arc.
\end{proof}
The previous lemma leads to a quadratic time algorithm for computing a shortest essential arc between two given vertices.
\begin{theorem}\label{thm:shortest-arc}
  Let $x,y$ be two distinct vertices on a boundary component $\delta$ of a combinatorial surface $\Sigma$. A shortest essential arc between $x$ and $y$ can be computed in $O(n^2)$ time, where $n=|\Sigma|$ is the total number of edges and vertices of $\Sigma$.
\end{theorem}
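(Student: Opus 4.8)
The plan is to turn Lemma~\ref{lem:shortest-trick} into an algorithm by pairing it with a linear-time per-candidate essentiality test. First I would run Dijkstra's algorithm once from $x$ and once from $y$ to obtain shortest path trees $T_x$ and $T_y$; this provides, for every vertex $u$, a shortest path $\sigma_{xu}=T_x(u)$ from $x$ to $u$ and a shortest path $\sigma_{uy}=T_y(u)\inv$ from $u$ to $y$, all in $O(n\log n)$ time. By Lemma~\ref{lem:shortest-trick}, a shortest essential arc between $x$ and $y$ occurs among the $O(n)$ candidate walks $\sigma_{xu}\cdot\sigma_{uy}$ (one per vertex $u$) and $\sigma_{xu}\cdot uv\cdot\sigma_{vy}$ (one per oriented edge $uv$, and $\Sigma$ has $O(n)$ edges). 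Each candidate is the concatenation of two shortest paths, hence of at most $2n$ edges, possibly plus one more edge, so its length is evaluated in $O(n)$ time. Thus, provided each candidate can be tested for essentiality in $O(n)$ time, the total running time is $O(n\log n)+O(n)\cdot O(n)=O(n^2)$: we enumerate the candidates, discard the inessential ones, and return one of minimum length among those that remain; correctness is exactly the content of Lemma~\ref{lem:shortest-trick}.

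It remains to test, in $O(n)$ time, whether a given candidate $\gamma$ (a walk of length $O(n)$ from $x$ to $y$) is essential. Fix the boundary walk $\delta_{yx}$ from $y$ to $x$ along $\delta$, so that $\delta=\delta_{xy}\cdot\delta_{yx}$ is the boundary loop based at $x$. Unwinding the definition used in Lemma~\ref{lem:path-edge-path}, $\gamma$ is inessential iff $\gamma\sim\delta_{xy}\cdot\delta^k$ for some $k\in\Z$, which rearranges to the statement that the closed walk $\hat\gamma:=\gamma\cdot\delta_{yx}$ is homotopic rel $x$ to $\delta^{k+1}$; so testing essentiality of $\gamma$ amounts to deciding whether $\hat\gamma$ is homotopic to a power of $\delta$. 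I would decide this by cutting: after bringing $\gamma$ into a position minimizing its self-intersections and its interior intersections with $\delta$ --- an inessential arc attains value $0$ for both --- either some intersection survives, in which case $\gamma$ is essential, or $\gamma$ has become a simple arc disjoint from the interior of $\delta$, in which case $\gamma$ is inessential exactly when $\Sigma\cut\gamma$ has a disk component whose boundary consists of one copy of $\gamma$ together with one subarc of $\delta$. Reading this off from $\Sigma\cut\gamma$, whose pieces have total complexity $O(n)$, together with a pass over the cut locus of $x$ recalled in the preliminaries, gives the $O(n)$ test.

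The main obstacle is precisely this last step. The candidates supplied by Lemma~\ref{lem:shortest-trick} need not be weakly simple, and for a genuinely non-simple walk the naive rule ``$\gamma$ is inessential iff cutting produces a disk'' is wrong --- a candidate can form a local inessential bigon with $\delta$ while being globally essential --- so one really does need to normalize $\gamma$ (or argue directly about $\hat\gamma$) before cutting, and one must separate the two ways of failing to be essential, namely ``inessential'' versus ``essential but boundary-parallel'', which after cutting correspond to a disk versus an annulus co-bounded with $\delta$. Establishing that this normalization and case analysis fit within the $O(n)$ budget, using the cut-locus and tree-loop machinery rather than a generic curve-tightening subroutine, is where the work lies; the two Dijkstra runs, the enumeration of the $O(n)$ candidates, and the length comparisons are routine.
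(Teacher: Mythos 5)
Your setup matches the paper's: two Dijkstra runs, the $O(n)$ candidates supplied by Lemma~\ref{lem:shortest-trick}, and an $O(n)$ essentiality test per candidate giving $O(n^2)$ overall. But the per-candidate test is the entire content of the theorem, and that is exactly the step you leave unresolved, so there is a genuine gap. Worse, the test you sketch is not merely incomplete but incorrect as stated: you claim that an inessential arc can be brought to a position with zero self-intersections and zero intersections with the interior of $\delta$. Under the paper's definition, an inessential arc may wind several times around $\delta$, i.e.\ be homotopic to $\delta_{xy}\cdot\delta^k$ with $|k|\ge 2$; such a class has \emph{no} simple representative (an embedded arc with both endpoints on $\delta$ cuts off a disk or separates off the rest of the surface, so only the winding numbers realizable by simple arcs occur), hence intersections ``survive'' in minimal position and your criterion would declare the arc essential. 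Since the candidates are concatenations of shortest paths and can perfectly well wind around $\delta$, this misclassification can cause the algorithm to return an inessential arc shorter than every essential one. Even ignoring this, you give no method for computing the minimal-position normalization of a non-simple walk within the $O(n)$ budget; you explicitly flag this as ``where the work lies,'' which is an admission that the proof is not there.

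The paper closes this gap with a purely algebraic reduction that avoids any curve normalization: a path $\gamma$ from $x$ to $y$ is inessential if and only if the commutator $[\gamma\cdot\delta_{yx},\delta]$ is contractible in $S$. The forward direction is immediate, and the converse uses the fact that two elements of the fundamental group (here of a surface with boundary, so a free group) commute iff they have a common primitive root; since $\delta$ is primitive, triviality of the commutator forces $\gamma\cdot\delta_{yx}$ to be a power of $\delta$, i.e.\ $\gamma\sim\delta_{xy}\cdot\delta^k$. This reduces essentiality of each candidate to a single contractibility test on a closed walk of linear size, which known algorithms~\cite{lr-hts-12,ew-tcsr-13} perform in $O(n+k)$ time for arbitrary (not necessarily simple) walks — precisely the linear-time primitive your argument is missing. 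If you want to salvage your approach, replace the ``minimize intersections and cut'' test by this commutator-plus-contractibility test; as written, your proof does not establish the theorem.
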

\begin{proof}
  Using Dijkstra's algorithm we compute shortest path trees $T_x$ and $T_y$ with respective roots $x$ and $y$ in $O(n\log n)$ time. We store at each vertex of $T_x$ its distance to $x$ and do similarly for $T_y$. We consider for every vertex $u$ and for every edge $uv$, the arcs $\gamma_u := T_x(u)\cdot T_y(u)\inv$ and $\gamma_{uv} := T_x(u)\cdot uv\cdot T_y(v)\inv$. We set $\Gamma_V = \{\gamma_u\mid u\in V\}$ and $\Gamma_E=\{\gamma_e\mid e\in \vec{E}\}$, where $V$ is the set of vertices and $\vec{E}$ is the set of oriented edges (i.e. half-edges) of $\Sigma$.

  We can compute the length of each arc in $\Gamma_V\cup \Gamma_E$ in constant time thanks to the distances stored in the shortest path trees. We argue below that we can test in $O(n)$ time whether such an arc is essential or not. It then follows from Lemma~\ref{lem:shortest-trick} that we can select in $O(n^2)$ time a shortest essential arc between $x$ and $y$ among the $O(n)$ arcs in $\Gamma_V\cup \Gamma_E$. 

  It remains to argue that we can test essentiality in linear time. We first claim that a path $\gamma$ from $x$ to $y$ is inessential if and only if the commutator $[\gamma\cdot\delta_{yx},\delta] = (\gamma\cdot\delta_{yx})\cdot \delta\cdot (\gamma\cdot\delta_{yx})\inv \cdot \delta\inv$ is contractible in $S$. The direct implication is clear, since $\gamma$ being inessential implies $\gamma\sim \delta_{xy}\cdot \delta^k\sim\delta^{k}\cdot \delta_{xy}$ for some $k$. For the reverse implication, we use the well-known fact~\cite[p. 213]{r-ajccs-62}~\cite[Lem. 9.2.6]{b-gscrs-92} that two elements in the fundamental group of a surface -- distinct from the torus and the Klein bottle -- commute if and only if they have a common primitive root\footnote{Here, we only need this result for the easier case of free groups, since we are only considering surfaces with boundary.}. Since $\delta$ is primitive, the triviality of the commutator implies that $\gamma\cdot\delta_{yx}$ is homotopic to a power of $\delta$, or equivalently that $\gamma\sim \delta_{xy}\cdot \delta^k$, for some integer $k$. In other words, $\gamma$ is inessential.
  
Thanks to the claim,  we can detect in linear time whether $\gamma$  is inessential. Indeed, testing if a curve with $k$ edges is contractible can be done in $O(n+k)$ time~\cite{lr-hts-12,ew-tcsr-13} and each curve $\gamma_u$, $\gamma_{uv}$, and $\delta$ has linear size, as well as their commutators.
\end{proof}

 Even though essential arcs need not be simple in our definition, we show below that for every pair of boundary vertices there is a shortest essential arc between them that is weakly simple. We first recall how to resolve a crossing. Suppose that a path $\alpha$ self-crosses at some occurrences $u$ and $v$ of the same vertex, with $u$ occurring before $v$. Then the path $\alpha':=\alpha_{xu}\cdot \alpha_{uv}\inv\cdot\alpha_{vy}$ has one crossing fewer than $\alpha$. We say that $\alpha'$ is obtained from $\alpha$ by \emph{resolving}, or \emph{smoothing} its crossing at $u=v$.
 \begin{lemma}\label{lem:weakly-simple}
   Let $\alpha$ be a shortest essential arc between two vertices $x,y$ on a boundary component $\delta$ of a combinatorial surface $\Sigma$. Then,  the arc obtained by resolving iteratively all the crossings of $\alpha$ is a weakly simple shortest essential  arc.
 \end{lemma}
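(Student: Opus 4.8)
The plan is to show that resolving an intersection of a shortest essential arc $\alpha$ keeps it a shortest essential arc; the claim then follows by induction on the number of intersections, since the resolution operation strictly decreases the intersection count and terminates in a weakly simple arc. Let $\alpha$ self-cross at occurrences $u$ and $v$ of a common vertex, with $u$ before $v$, and let $\alpha'=\alpha_{xu}\cdot\alpha_{uv}\inv\cdot\alpha_{vy}$ be the smoothed arc. First, $|\alpha'|=|\alpha|$ since $\alpha'$ uses exactly the same multiset of half-edges as $\alpha$ (the subarc $\alpha_{uv}$ is merely traversed in the opposite direction), so $\alpha'$ is no longer than a shortest essential arc. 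It remains to prove that $\alpha'$ is again essential, for then it must be a \emph{shortest} essential arc.

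The heart of the argument is therefore: if $\alpha'$ were inessential, then $\alpha$ would be inessential. Suppose $\alpha'\sim\delta_{xy}\cdot\delta^{k}$ for some integer $k$. Write the loop $\ell:=\alpha_{uv}\cdot\alpha_{vu}=\alpha_{uv}^{2}$ is not what we want; instead note that $\alpha$ and $\alpha'$ differ by the insertion of the loop $\beta:=\alpha_{uv}\cdot\alpha_{uv}\inv$ based at $u$... more precisely, in $\pi_1$ with appropriate basepoints, $[\alpha]$ and $[\alpha']$ are related by conjugation/insertion of the closed loop $c:=\alpha_{uv}\cdot(\alpha_{uv}\inv)$, which is nullhomotopic — this is the wrong loop. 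The correct statement: letting $c=\alpha_{uv}$ viewed as a loop at the vertex $u=v$, we have $\alpha=\alpha_{xu}\cdot c\cdot\alpha_{vy}$ and $\alpha'=\alpha_{xu}\cdot\alpha_{vy}$ after identifying the two occurrences, so as homotopy classes of arcs (with the boundary-perforation data fixed) $[\alpha]=[\alpha_{xu}]\cdot[c]\cdot[\alpha_{vy}]$ and $[\alpha']=[\alpha_{xu}]\cdot[\alpha_{vy}]$. Hence if $\alpha'$ is inessential, i.e.\ homotopic into $\delta$, then $\alpha$ differs from an arc into $\delta$ by the loop $c$; to conclude that $\alpha$ itself is inessential I would argue that $c$ must already be a power of $\delta$ (up to conjugacy within the relevant subsurface), using that $\alpha$ \emph{is} essential together with the same commutator/primitive-root fact invoked in Theorem~\ref{thm:shortest-arc}: both $\alpha\cdot\delta_{yx}$-type loops and $c$ lie in a common cyclic subgroup generated by $\delta$, so $[\alpha]=[\alpha_{xu}]\cdot[c]\cdot[\alpha_{vy}]$ is a product of elements each homotopic into $\delta$ with $\alpha_{xu}\cdot\alpha_{vy}\sim\delta_{xy}\delta^k$, forcing $\alpha\sim\delta_{xy}\delta^{k+j}$, contradicting essentiality of $\alpha$.

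The step I expect to be the main obstacle is making the previous paragraph rigorous: the loop $c=\alpha_{uv}$ is based at an \emph{interior} occurrence of the vertex, not on $\delta$, so relating "$\alpha'$ homotopic into $\delta$" to "$c$ homotopic into $\delta$" requires care with basepoints and with the fact that homotopies of arcs are allowed to move the interior freely but must keep the endpoints and the chosen perforated faces fixed. One clean way around this is to avoid reasoning about $c$ directly: assume for contradiction $\alpha'$ inessential and run the argument of Lemma~\ref{lem:path-edge-path} in reverse. Concretely, pick shortest paths $\sigma_{xu},\sigma_{uy}$; since $\alpha$ is a shortest essential arc through $u$, by the claim in Lemma~\ref{lem:path-edge-path} one of $\alpha_{xu}\cdot\sigma_{uy}$, $\sigma_{xu}\cdot\sigma_{uy}$, $\sigma_{xu}\cdot\alpha_{uy}$ controls $[\alpha]$ modulo powers of $\delta$, and similarly for $\alpha'$ which passes through $u=v$ as a single occurrence; comparing the two telescoping identities and using that $\alpha'$ being inessential kills its essential "piece" yields that the corresponding piece of $\alpha$ is a power of $\delta$ as well, hence $\alpha$ is inessential — contradiction. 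Either route reduces to bookkeeping with the telescoping product already used twice in the preceding lemmas, so once the basepoint subtlety is handled the proof is short.

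\begin{proof}
  We show that if $\alpha'$ is obtained from a shortest essential arc $\alpha$ by resolving one self-intersection, then $\alpha'$ is again a shortest essential arc; the lemma then follows by iterating, since each resolution strictly decreases the number of self-intersections and the process ends at a weakly simple arc.

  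Let $\alpha$ self-cross at occurrences $u,v$ of a common vertex, with $u$ before $v$, and set $\alpha'=\alpha_{xu}\cdot\alpha_{uv}\inv\cdot\alpha_{vy}$. Since $\alpha'$ traverses exactly the same multiset of half-edges as $\alpha$ (the subarc $\alpha_{uv}$ being reversed), we have $|\alpha'|=|\alpha|$, so $\alpha'$ is no longer than any shortest essential arc. It thus suffices to prove that $\alpha'$ is essential.

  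Suppose, for contradiction, that $\alpha'$ is inessential, i.e.\ $\alpha'\sim\delta_{xy}\cdot\delta^{k}$ for some integer $k$. Let $\sigma_{xu}$ and $\sigma_{uy}$ be shortest paths from $x$ to $u$ and from $u$ to $y$. Applying the claim of Lemma~\ref{lem:path-edge-path} to the vertex $u$ along the shortest essential arc $\alpha$, at least one of $\alpha_{xu}$, $\alpha_{uy}$ is a shortest path; say $\alpha_{xu}$ is (the other case is symmetric, replacing $\sigma_{uy}$ by $\sigma_{xu}$ throughout). Then $\alpha_{xu}\sim\sigma_{xu}$, so in particular
  \[
    \alpha \;=\; \alpha_{xu}\cdot\alpha_{uv}\cdot\alpha_{vy}
    \;\sim\; \sigma_{xu}\cdot\alpha_{uv}\cdot\alpha_{vy},
    \qquad
    \alpha' \;=\; \alpha_{xu}\cdot\alpha_{uv}\inv\cdot\alpha_{vy}
    \;\sim\; \sigma_{xu}\cdot\alpha_{uv}\inv\cdot\alpha_{vy}.
  \]
  Write $c=\sigma_{xu}\cdot\alpha_{uv}\cdot\sigma_{xu}\inv$, a closed loop based at $x$, and $p=\sigma_{xu}\cdot\alpha_{vy}$, an arc from $x$ to $y$. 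Then $\alpha\sim c\cdot p$ and $\alpha'\sim c\inv\cdot p$, hence
  \[
    \alpha\cdot(\alpha')\inv \;\sim\; c\cdot p\cdot p\inv\cdot c \;=\; c^{2}.
  \]
  Since $\alpha'$ is inessential, $\alpha'\cdot\delta_{yx}$ is homotopic to a power of $\delta$; being a commutator-free statement, this means $[\alpha'\cdot\delta_{yx}]$ lies in the cyclic subgroup $\langle\delta\rangle$ of $\pi_1(S,x)$. Consequently $[c^{2}]=[\alpha\cdot\delta_{yx}]\cdot[\alpha'\cdot\delta_{yx}]\inv$, and $[\alpha\cdot\delta_{yx}]=[c^{2}]\cdot[\alpha'\cdot\delta_{yx}]$. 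If $[c^{2}]\in\langle\delta\rangle$ as well, then $[\alpha\cdot\delta_{yx}]\in\langle\delta\rangle$, i.e.\ $\alpha$ is inessential, contradicting our hypothesis; so it remains to see $[c^{2}]\in\langle\delta\rangle$. But $c^{2}$ commutes with $\delta$: indeed $[c^{2}]=[\alpha\cdot\delta_{yx}]\cdot[\alpha'\cdot\delta_{yx}]\inv$, the second factor lies in $\langle\delta\rangle$, and the first factor commutes with $\delta$ because $\alpha\cdot\delta_{yx}$ commutes with $\delta$ exactly when $\alpha$ is inessential --- which is not yet known --- so this line of reasoning is circular and we argue instead as follows. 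Resolve the roles: since $\alpha_{xu}$ is a shortest path and $\alpha_{xu}\cdot\alpha_{uv}\cdot\sigma_{uy}$ is no longer than $\alpha$ but is the concatenation of a shortest path $\alpha_{xu}$, the subarc $\alpha_{uv}$, and a shortest path $\sigma_{uy}$, it is an arc from $x$ to $y$ of length $\le|\alpha|$; by minimality of $\alpha$ it is inessential, so $\alpha_{xu}\cdot\alpha_{uv}\cdot\sigma_{uy}\sim\delta_{xy}\cdot\delta^{m}$ for some $m$. Likewise $\alpha_{xu}\cdot\alpha_{uv}\inv\cdot\sigma_{uy}$, being $\alpha'$ with $\alpha_{vy}$ replaced by the shortest path $\sigma_{uy}=\sigma_{vy}$...
  \end{proof}
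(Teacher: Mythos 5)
Your proposal correctly identifies the structure (induction on intersections, $|\alpha'|=|\alpha|$, so it suffices to show the resolved arc is essential), but the core of the argument is never completed: the written proof explicitly acknowledges that its main line of reasoning is circular ("$\alpha\cdot\delta_{yx}$ commutes with $\delta$ exactly when $\alpha$ is inessential --- which is not yet known") and then trails off mid-sentence. As it stands there is a genuine gap: you never produce a valid reason why $\alpha'$ inessential forces $\alpha$ inessential.

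The missing ingredient is the third arc that the paper's proof brings into play: the \emph{shortcut} $\alpha_{xu}\cdot\alpha_{vy}$, obtained by deleting the loop $\alpha_{uv}$ altogether. Because edge weights are positive, this arc is \emph{strictly} shorter than $\alpha$, so by minimality of $\alpha$ it must be inessential, say $\alpha_{xu}\cdot\alpha_{vy}\sim\delta_{xy}\cdot\delta^{\ell}$. Combining this with the assumed relation $\alpha'=\alpha_{xu}\cdot\alpha_{uv}\inv\cdot\alpha_{vy}\sim\delta_{xy}\cdot\delta^{k}$ via the telescoping identity
\[
\alpha \;\sim\; (\alpha_{xu}\cdot\alpha_{vy})\cdot(\alpha_{xu}\cdot\alpha_{uv}\inv\cdot\alpha_{vy})\inv\cdot(\alpha_{xu}\cdot\alpha_{vy})
\;\sim\;\delta_{xy}\cdot\delta^{2\ell-k}
\]
contradicts essentiality of $\alpha$ directly, with no basepoint or commutator subtleties and no appeal to Lemma~\ref{lem:path-edge-path}. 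Your own notation was one step away from this: in your setup $[\alpha']=[c]\inv[p]$ and $[\alpha]=[c][p]$ with $p$ strictly shorter than $\alpha$, so $p$ is inessential by minimality; together with the assumption on $\alpha'$ this forces $[c]$, hence $[\alpha\cdot\delta_{yx}]$, into $\langle\delta\rangle$. The circularity you ran into disappears once you use minimality of $\alpha$ applied to the shortcut rather than trying to show $c$ (or $c^2$) commutes with $\delta$ on its own.
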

 \begin{proof}
   We may assume that the endpoints $x,y$ of $\alpha$ lie on a same boundary component $\delta$ of $S$, since otherwise $\alpha$ is a shortest path, hence is simple.
   Suppose that $\alpha$ has at least one crossing at $u=v$.
   We show below that the path $\alpha'$ obtained from $\alpha$ by resolving its crossing at $u=v$ is essential. Since $\alpha'$ has the same length as $\alpha$, this will prove the lemma by induction on the number of crossings of $\alpha$.
   Now, suppose for a contradiction that $\alpha'$ is inessential, i.e. $\alpha_{xu}\cdot \alpha_{uv}\inv\cdot\alpha_{vy}\sim \delta_{xy}\cdot \delta^k$ for some integer $k$. Since $\alpha_{xu}\cdot \alpha_{vy}$ is shorter than $\alpha$, we also have $\alpha_{xu}\cdot \alpha_{vy}\sim \delta_{xy}\cdot \delta^\ell$ for some $\ell$. We infer that
   \[ \alpha = (\alpha_{xu}\cdot \alpha_{uv})\cdot\alpha_{vy}\sim \alpha_{xu}\cdot \alpha_{vy} \cdot (\alpha_{xu}\cdot \alpha_{uv}\inv\cdot\alpha_{vy})\inv\cdot (\alpha_{xu}\cdot\alpha_{vy})\sim \delta_{xy}\cdot \delta^{2\ell-k},
   \]
   which contradicts the fact that $\alpha$ is essential. It follows that $\alpha'$ is essential.
 \end{proof}

 By Lemma~\ref{lem:weakly-simple}, we can extract a weakly simple essential arc from the shortest essential arc computed as in Theorem~\ref{thm:shortest-arc} by resolving its crossings. We argue below that this can be done in linear time.
\begin{lemma}\label{lem:compute-weakly-simple}
  Given a decomposition of a shortest essential arc $\alpha$ into a shortest path $\sigma_{xu}$, an edge $uv$ and a shortest path $\mu_{vy}$, we can compute a weakly simple shortest essential arc using the same edges as $\alpha$ in time proportional to its number of edges.
\end{lemma}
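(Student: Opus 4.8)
The plan is to produce the required arc exactly as in Lemma~\ref{lem:weakly-simple}, by resolving the self-intersections of $\alpha$ one after another, and to use the decomposition $\alpha=\sigma_{xu}\cdot uv\cdot\mu_{vy}$ to argue that all of these resolutions can be performed in time proportional to the number of edges of $\alpha$.

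The first thing I would establish is that $\alpha$ is combinatorially very constrained. Since all edge weights are positive, a shortest path cannot revisit a vertex — excising the subwalk between two occurrences of a vertex would strictly shorten it — so $\sigma_{xu}$ and $\mu_{vy}$ are both simple. Consequently, in the walk $\alpha$ every vertex occurs at most twice, once among the occurrences coming from $\sigma_{xu}$ and once among those coming from $\mu_{vy}$, and every edge is traversed at most three times. In particular $|\alpha|=O(n)$; and since resolving a self-intersection merely reverses a subwalk, every resolution preserves both the multiset of edges and the length of $\alpha$, so the arc we output will indeed ``use the same edges as $\alpha$'' and will still be a shortest essential arc by Lemma~\ref{lem:weakly-simple}. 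Moreover, every combinatorial self-intersection of $\alpha$ necessarily pairs an occurrence belonging to $\sigma_{xu}$ with one belonging to $\mu_{vy}$, and there is at most one such pair per vertex, so $\alpha$ has at most $O(n)$ self-intersections; they can all be located in $O(n)$ time, since a single traversal of $\sigma_{xu}$ and of $\mu_{vy}$ lets us tabulate, for each vertex $w$ appearing in both, the half-edges entering and leaving $w$ along each path, after which the intersection test from Section~\ref{sec:preliminaries} at $w$ is a constant-time lookup in the rotation system.

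The remaining — and main — obstacle is carrying out the $\Theta(n)$ resolutions in $O(n)$ total time, because a resolution reverses a subwalk and reversing it explicitly would cost time proportional to its length, for $\Theta(n^2)$ overall. The standard fix I would apply is to keep the current arc as a doubly-linked list of half-edge occurrences, each maximal block carrying an orientation bit: a resolution then re-splices the two ``chords'' at the relevant vertex and flips one orientation bit, all in $O(1)$, while the reversed block is traversed through its ``previous'' pointers rather than rebuilt. The delicate point is scheduling the resolutions so that one can still enumerate the not-yet-resolved intersections and so that the total work stays linear: for this I would process the shared vertices essentially in their order along the concatenation and, after each resolution, peel off and freeze the prefix of the current arc up to the just-resolved vertex. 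The work then lies in verifying that no later resolution reaches back into a frozen prefix and that the touches already created there are never caught inside a reversed block, so that each occurrence is visited a bounded number of times and each edge handled $O(1)$ times. Finally, once the resolved arc is available as a linked list, writing it out as an explicit sequence of edges takes $O(|\alpha|)=O(n)$ time, which meets the claimed bound.
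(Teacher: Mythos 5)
There is a genuine gap. The heart of the paper's proof is a structural claim that you never establish: after splitting the edge $uv$ at a midpoint $z$, so that $\alpha$ is the concatenation of the two \emph{simple} paths $\sigma_{zx}\inv$ and $\mu_{zy}$, the crossings of these two paths occur \emph{in the same order} along both of them (if $u_i=v_k$ and $u_j=v_\ell$ are two crossings with $i<j$, then $k<\ell$). This is proved by an exchange argument using essentiality: if two crossings appeared in reversed order, $\alpha$ would be homotopic to a concatenation of three arcs, each strictly shorter than $\alpha$ and hence inessential, forcing $\alpha$ to be inessential — a contradiction. Once this order-consistency is known, a single parallel traversal of the two paths detects the crossings and performs each smoothing (in the unique way keeping the curve connected) with constant work per crossing, and the output can be written out edge by edge; linearity and correctness are immediate, with no data-structure gymnastics needed.

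Your proposal, by contrast, allows the crossings to appear in an arbitrary interleaved order and tries to absorb the cost of the subwalk reversals with a linked list carrying orientation bits, plus a scheduling scheme that "freezes" prefixes. But the two facts on which both correctness and the linear time bound of that scheme rest — that no later resolution reaches back into a frozen prefix, and that previously created touches are never swept inside a reversed block — are exactly what you leave unverified, and they are not true for arbitrary crossing patterns of two simple paths; ruling out the bad patterns requires an argument like the paper's exchange/essentiality claim (or some substitute), which is absent. So as written the proof does not go through: either supply the order-consistency claim (after which the heavy machinery becomes unnecessary), or prove the invariants your amortized scheme depends on. The preliminary observations you do make (simplicity of $\sigma_{xu}$ and $\mu_{vy}$, at most one crossing per vertex, $O(n)$ detection of crossings, preservation of length, edges and essentiality via Lemma~\ref{lem:weakly-simple}) are correct and match the paper's setup, but they are the easy part.
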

\begin{proof}
  Since $\sigma_{xu}$ is a simple path and since $\alpha$ is a shortest essential arc, it is easily seen that edge $uv$ cannot appear in $\sigma_{xu}$. Similarly, $uv$ cannot appear in $\mu_{vy}$. We may thus split $uv$, introducing a new vertex $z$ in the middle, and consider the \emph{simple} paths $\sigma_{xz}:= \sigma_{xu}\cdot uz$ and $\mu_{zy}:=zv\cdot\mu_{vy}$. Denote by $z=u_0, u_1,\dots, u_p=x$ the vertices along $\sigma_{xz}\inv$, and by $z=v_0, v_1,\dots, v_q=y$  the vertices along $\mu_{zy}$. Suppose that $\sigma_{zx}$ and $\mu_{zy}$ cross twice at ${u_i}={v_k}$ and ${u_j}={v_\ell}$. Without loss of generality, we assume $i<j$.

  We claim that $k<\ell$. By way of contradiction suppose that $k>\ell$. We have
  \[ \alpha = \sigma_{xz}\cdot\mu_{zy}\sim (\sigma_{xu_i}\cdot\mu_{v_k y})\cdot (\mu_{yv_k}\cdot \sigma_{u_i z}\cdot\mu_{zv_\ell}\cdot\sigma_{u_j x})\cdot (\sigma_{xu_j}\cdot\mu_{v_\ell y})
  \]
  Since $u_i$ appears before $u_j$ along $\sigma_{zx}$, while $v_\ell$ appears before $v_k$ along $\mu_{zy}$, each of the three arcs $\sigma_{xu_i}\cdot\mu_{v_k y}$, $\mu_{yv_k}\cdot \sigma_{u_i z}\cdot\mu_{zv_\ell}\cdot\sigma_{u_j x}$ and $\sigma_{xu_j}\cdot\mu_{v_\ell y}$ is strictly shorter than $\alpha$. It ensues that they are inessential. This however implies that $\alpha$ is also inessential, a contradiction.

  It follows from the claim that the sequence of crossings of $\sigma_{zx}$ and $\mu_{zy}$ appears in the same order along $\sigma_{zx}$ and $\mu_{zy}$. We can thus traverse $\sigma_{zx}$ and $\mu_{zy}$ in parallel to detect their crossings and smooth each crossing in the unique possible way leading to a connected curve. This clearly takes time proportional to the number of edges of $\alpha$.
\end{proof}

\subsection{Faster algorithm for all pairs shortest essential arcs.}
 
For the purpose of computing a third systole, we need to compute shortest essential arcs between all pair of vertices of a certain boundary curve $\delta$. Thanks to Theorem~\ref{thm:shortest-arc}, this can be done in $O(n^4)$ time since there are $O(n^2)$ pairs of vertices to consider. We show below how to reduce the computation time to $O(n^3)$. Our strategy is the following. For every vertex $u$, choose a shortest path tree $T_u$ with source $u$.
Consider two vertices $x,y$ on the boundary curve $\delta$, and recall the notations $\Gamma_V$ and $\Gamma_E$ from the proof of Theorem~\ref{thm:shortest-arc}.  Lemma~\ref{lem:shortest-trick} implies that we can find a shortest essential arc between $x$ and $y$ among the sets $\Gamma_V$ and $\Gamma_E$, each  containing $O(n)$ arcs. When $x$ and $y$ run over all the vertices of $\delta$ the union of these sets thus constitutes a family of $O(n^3)$ arcs. As we scan this family, we cannot afford to spend more than constant time per arc if we want to obtain a cubic time algorithm. In~\cite{ew-csec-10} an analogous situation occurs for the computation of a shortest essential cycle and the authors succeed to perform the computation required for each cycle in $O(1)$ time after precomputing some adequate data-structures. However, the correctness of their algorithm strongly relies on the uniqueness of shortest paths to detect weakly simple cycles. Although we do not assume uniqueness of shortest paths, a similar approach applies to our sets $\Gamma_V$ as they are in fact composed of weakly simple arcs by construction. This is not the case for the arcs in $\Gamma_E$, for which we adopt a different strategy: we decompose each such arc into a concatenation of an arc in  $\Gamma_V$ and a loop that only depends on two vertices. The number of such loops is therefore quadratic and we can now spend linear time at each of them to extract the desired information.

\begin{theorem}\label{thm:shortest-arcs}
  Let $\delta$ be a boundary curve of a combinatorial surface $(S,G)$ of complexity $n$. After $O(n^3)$ time preprocessing, we can compute for every pair $x,y$ of distinct vertices of $\delta$ a shortest weakly simple essential arc between $x$ and $y$ in $O(n)$ time. 
\end{theorem}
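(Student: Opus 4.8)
The plan is to implement the strategy sketched just before the statement, spending $O(1)$ time per candidate arc after a global preprocessing, and converting only the winner into a weakly simple arc. In the preprocessing phase I would run Dijkstra's algorithm from every vertex, storing each shortest path tree $T_u$ together with all the distances $d(u,\cdot)$; this costs $O(n^2\log n)\subseteq O(n^3)$. For a query pair $(x,y)$ of vertices of $\delta$, Lemma~\ref{lem:shortest-trick}, applied with $\sigma_{xu}=T_x(u)$ and $\sigma_{uy}=T_y(u)\inv$, guarantees that the $O(n)$ arcs $\gamma_u=T_x(u)\cdot T_y(u)\inv$ ($u$ a vertex) and the $O(n)$ arcs $\gamma_{uv}=T_x(u)\cdot uv\cdot T_y(v)\inv$ ($uv$ an oriented edge) together contain a shortest essential arc; their lengths $d(x,u)+d(u,y)$ and $d(x,u)+w(uv)+d(v,y)$ are read off in $O(1)$. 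So it remains to (i) test essentiality of each such arc in $O(1)$, and (ii) turn the shortest essential one found into a weakly simple one, which by Lemmas~\ref{lem:weakly-simple} and~\ref{lem:compute-weakly-simple} costs $O(n)$ (applying Lemma~\ref{lem:compute-weakly-simple} directly for an arc of $\Gamma_E$, and with a trivial edge for an arc of $\Gamma_V$). This gives $O(n)$ per pair, hence $O(n^3)$ for all pairs.

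For the arcs of $\Gamma_V$ I would proceed as in~\cite{ew-csec-10}. Closing $\gamma_u$ with a boundary sub-arc $\delta_{yx}$ turns it into a loop $\bar\gamma_u=\gamma_u\cdot\delta_{yx}$ based at $x$, and, exactly as in the proof of Theorem~\ref{thm:shortest-arc}, $\gamma_u$ is essential if and only if $\bar\gamma_u$ is not a power of the primitive element $\delta$, i.e.\ not in $\langle\delta\rangle$. Since $\bar\gamma_u$ lives in $T_x\cup T_y\cup\delta$, I would express this condition through the cut loci $C_x,C_y$ and the positions of $u$ in $T_x$ and $T_y$ relative to the boundary, precomputing in $O(n^2)$ time the labellings of the cut loci that let me answer it in $O(1)$; this is the point where I have to redo for arcs --- and without the uniqueness-of-shortest-paths assumption used in~\cite{ew-csec-10} --- the boundary-parallelism data structure of Erickson and Worah. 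By construction (or after a harmless resolution of self-intersections, which preserves length and, by Lemma~\ref{lem:weakly-simple}, essentiality) the arcs of $\Gamma_V$ are weakly simple, so no further work is needed for them.

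For the arcs of $\Gamma_E$ I would use the factorisation $\gamma_{uv}=T_x(uv)\cdot\gamma_v$, where $T_x(uv)=T_x(u)\cdot uv\cdot T_x(v)\inv$ is the tree loop at $x$ through the edge $uv$: there are only $O(n)$ sources $x\in\delta$ and $O(n)$ edges, hence $O(n^2)$ such loops, so I can afford $O(n)$ preprocessing per loop. The key simplification is a domination observation: by the triangle inequality $d(x,u)+w(uv)\ge d(x,v)$, so $|\gamma_v|\le|\gamma_{uv}|$; therefore, if $\gamma_v$ is essential there is already an essential arc of $\Gamma_V$ no longer than $\gamma_{uv}$, and $\gamma_{uv}$ may be discarded. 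If instead $\gamma_v$ is inessential, then $\bar\gamma_v$ is a power of $\delta$, whence $\bar\gamma_{uv}=\gamma_{uv}\cdot\delta_{yx}=T_x(uv)\cdot\bar\gamma_v$ is a power of $\delta$ if and only if $T_x(uv)$ is; so $\gamma_{uv}$ is essential if and only if $T_x(uv)$ is not a power of $\delta$ --- a single bit precomputed in $O(n)$ time for each pair $(x,uv)$, for instance by a contractibility test on the commutator $[T_x(uv),\delta]$, using that $\delta$ is primitive. Thus at query time, scanning $\Gamma_E$ costs $O(1)$ per arc: look up the (already computed) $\Gamma_V$-status of $\gamma_v$ and, when it is inessential, the precomputed bit for $T_x(uv)$.

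Putting the pieces together: for each ordered pair $(x,y)$ on $\delta$ we scan $\Gamma_V\cup\Gamma_E$ in $O(n)$ time, keep a shortest essential arc, and convert it to weakly simple in $O(n)$ time by Lemma~\ref{lem:compute-weakly-simple}; summed over the $O(n^2)$ pairs this is $O(n^3)$, on top of $O(n^3)$ preprocessing, and the "$O(n)$ per pair after $O(n^3)$ preprocessing" statement follows. I expect the main obstacle to be the second step: adapting the Erickson--Worah data structure so that essentiality of the closed-up arcs of $\Gamma_V$ --- equivalently, membership of $\bar\gamma_u$ in $\langle\delta\rangle$ --- can be decided in $O(1)$ while, unlike in~\cite{ew-csec-10}, not assuming unique shortest paths and dealing with arcs rather than simple cycles. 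Everything else reduces, via the factorisation $\gamma_{uv}=T_x(uv)\cdot\gamma_v$ and the domination observation, to a constant number of shortest-path-tree and cut-locus lookups plus one application of Lemma~\ref{lem:compute-weakly-simple}.
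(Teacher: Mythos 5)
There is a genuine gap, and you have in fact identified it yourself: the $O(1)$ essentiality test for the arcs of $\Gamma_V$ is never established, and with your choice of candidates it is not clear it can be. You instantiate Lemma~\ref{lem:shortest-trick} with $\sigma_{xu}=T_x(u)$ and $\sigma_{uy}=T_y(u)\inv$, so each candidate $\gamma_u=T_x(u)\cdot T_y(u)\inv$ is a concatenation of paths from \emph{two different} trees rooted at the query endpoints. Such an arc need not be weakly simple, and deciding whether $\gamma_u\cdot\delta_{yx}$ lies in $\langle\delta\rangle$ in constant time would require an Erickson--Worah-style boundary-parallelism structure that you would have to rebuild for arcs and, crucially, without the uniqueness-of-shortest-paths assumption that their structure relies on --- precisely the hypothesis the present theorem does not make. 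Since your treatment of $\Gamma_E$ (the factorisation $\gamma_{uv}\sim T_x(uv)\cdot\gamma_v$ together with the domination $|\gamma_v|\le|\gamma_{uv}|$, which is correct and close in spirit to the paper's $\Lambda[x,uv]$ table) needs the essentiality status of $\gamma_v$ at query time, the gap propagates to the whole algorithm.

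The paper closes exactly this gap by a different instantiation of Lemma~\ref{lem:shortest-trick}: the shortest path trees are rooted at the \emph{interior} vertices $u$, not at the query endpoints, and the vertex candidates are $\gamma(u)=T_u(z_i)\inv\cdot T_u(z_j)$. Two paths in the same tree do not cross, so $\gamma(u)$ is weakly simple by construction; hence the two closed-up curves $\gamma(u)\cdot\delta_{ji}$ and $\gamma(u)\cdot\delta'_{ji}$ are weakly simple cycles, and being homotopic to a power of $\delta$ forces them to be contractible (or homotopic to $\delta^{\pm1}$, which is excluded). Contractibility of $\gamma(u)\cdot\delta_{ji}$ is then equivalent to contractibility of all the boundary tree loops $T_u(z_{k-1}z_k)$ for $k=i+1,\dots,j$, which is an interval query answered in $O(1)$ from the per-vertex arrays $c_u$ and $m_u$, each computable in $O(n)$ time by the cut-locus leaf-pruning described in the preliminaries. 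This re-rooting is the missing idea; the rest of your outline (precomputed loop-versus-$\delta$ bits for the edge candidates, reduction of $\Gamma_E$ to the inessential-$\Gamma_V$ case, and the final conversion to a weakly simple arc via Lemma~\ref{lem:compute-weakly-simple}) matches the paper's proof in structure and in the claimed time bounds.
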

\begin{proof}
  We first compute a shortest path tree $T_v$ for every vertex $v$ not on $\delta$, and store the distance to $v$ at each vertex of the tree. This takes $O(n^2\log n)$ time in total using $O(n)$ executions of Dijkstra's algorithm. Next, for every vertex $x$ of $\delta$ and every edge $uv$, we form the loop $\lambda(x,uv):= T_u(x)\inv\cdot uv\cdot T_v(x)$ and check if it is homotopic (with basepoint $x$) to some power $\delta^k$ of $\delta$. We store the answer in a boolean array $\Lambda[x,uv]$. The edges of $\lambda(x,uv)$ can be listed in linear time from the shortest path trees. Testing whether $\lambda(x,uv)$ is homotopic to a power of $\delta$ amounts to check if the commutator $[\lambda(x,uv),\delta]$ is contractible, which takes linear time according to~\cite{lr-hts-12,ew-tcsr-13}. Denoting by $d$ the number of vertices of $\delta$, we conclude that $\Lambda$ has size at most $dn=O(n^2)$ and can be filled in $O(n^3)$ time.

  Let $z_0,z_1, \dots, z_{d-1}$ be the vertices of $\delta$.  We built for every vertex $u$ not on $\delta$, a boolean array $c_u$ of size $d$ so that $c_u[i]$ indicates whether $T_u(z_{i-1}z_{i}):= T_u(z_{i-1})\cdot z_{i-1}z_{i}\cdot T_u(z_{i})\inv$ is contractible or not. As was explained at the end of Section~\ref{sec:preliminaries}, we can fill $c_u$ in linear time. We also fill an integer array $m_u$ of size $d$ so that, if  $c_u[j] = \text{True}$, then
$m_u[j] = \min \{i\mid c_u[i+1]=c_u[i+2]=\dots=c_u[j] = \text{True}\}$.
Note that $m_u[j]$ might be negative and that the indices in the brackets are taken modulo $d$. If $c_u[j] = \text{False}$, we ignore $m_u[j]$.
By traversing $c_u$ at most twice, $m_u$ can be filled in linear time. Hence, computing the $c_u$'s and $m_u$'s for all $u$ takes time $O(n^2)$.

We now fix two distinct indices $0\le i < j < d$ and denote by
$\delta_{ij}=\delta_{ji}\inv$ and $\delta'_{ij}={\delta'_{ji}}\inv$ the two subarcs of $\delta=\delta_{ij}\cdot \delta'_{ji}$ with respective vertices $z_i, z_{i+1},\dots z_j$ and $z_j, z_{j+1},\dots z_i$, where all the indices are taken modulo $d$.
For every vertex $u$, let $\gamma(u) := T_u(z_i)\inv\cdot T_u(z_j)$. This is  a weakly simple arc since two paths with a common origin in a tree do not cross. We set $\Gamma_V^{ij}=\{\gamma(u)\mid u\in V\setminus \delta\}$.
  \begin{claim}\label{clm:gammaV}
    If $\Gamma_V^{ij}$ contains essential arcs, we can compute a shortest one among them in $O(n)$ time.
  \end{claim}
  To prove the claim, we first note that $\gamma(u)$ is inessential if and only if one of the two weakly simple cycles $\gamma(u)\cdot \delta_{ji}$ or $\gamma(u)\cdot \delta'_{ji}$ is contractible. Indeed, by definition, $\gamma(u)$ is inessential if $\gamma(u)\cdot \delta_{ji}$ is homotopic to a power of $\delta$. Since $\gamma(u)\cdot \delta_{ji}$ is weakly simple, the only possible powers are $\delta$,  $\delta\inv$, or the constant loop 1. (Recall that a curve homotopic to a simple curve must be either contractible or primitive.) In fact, we cannot have $\gamma(u)\cdot \delta_{ji}\sim \delta$ since it would imply $\gamma(u)\cdot \delta'_{ji}\sim \delta^2$, which is impossible as $\gamma(u)\cdot \delta'_{ji}$ is also weakly simple. It follows that $\gamma(u)$ is inessential if and only if 
  either $\gamma(u)\cdot \delta_{ji}$ or $\gamma(u)\cdot \delta'_{ji}$ is contractible. Now, if $\gamma(u)\cdot \delta_{ji}$ is contractible, then it bounds a disk (after infinitesimal perturbation). In addition to $T_u(z_i)$, this disk contains all the paths $T_u(z_k)$ for $k = i+1,\dots, j$, implying that the tree loops $T_u(z_{k-1}z_{k})$ are contractible. Hence, $\gamma(u)\cdot \delta_{ji}$ is contractible if and only if all the $T_u(z_{k-1}z_{k})$ are contractible, for $k = i+1,\dots, j$. This last condition is equivalent to the fact that $c_u[j]=\text{True}$ and that $i$ belongs to the interval $[m[j], j]:=(m[j], m[j]+1,\dots, j)$, where the indices are taken modulo $d$ in the tuple. We can thus decide in constant  time if $\gamma(u)\cdot \delta_{ji}$ is contractible.
  Similarly, we can decide in constant time if $\gamma(u)\cdot \delta'_{ji}$ is contractible by testing if $c_u[i]=\text{True}$ and if $j$ belongs to $[m[i], i]$. We can thus check for every $u$ if $\gamma(u)$ is essential in constant time. If not, we compute its length in constant time by adding the stored length of $T_u(z_i)$ and $T_u(z_j)$. A shortest essential arc $\gamma(u)$ in $\Gamma_V^{ij}$ is obtained for a $u$ that minimizes this length. Claim~\ref{clm:gammaV} follows.

  We then set $\gamma(uv) := T_{u}(z_i)\inv\cdot uv\cdot T_v(z_j)$ and $\Gamma_E^{ij}=\{\gamma(e)\mid e\in \vec{E}\}$.
  \begin{claim}\label{clm:gammaE}
    If the shortest essential arcs of $\Gamma_V^{ij}\cup \Gamma_E^{ij}$ are all contained in $\Gamma_E^{ij}$, then one among them can be computed in $O(n)$ time.
  \end{claim}
  Indeed, we must have $\gamma(u)\sim 1$ for any arc $\gamma(uv)$ in $\Gamma_E^{ij}$ that is a shortest essential arc since $\gamma(u)$ is shorter than $\gamma(uv)$. Now, we observe that
  \[\gamma(uv) \sim \gamma(u)\cdot \lambda(y,uv).
  \]
  It follows that $\lambda(y,uv)$ is not homotopic to a power of $\delta$. This last condition is recorded in $\Lambda[y,uv]$ and can thus be tested in constant time.
Conversely, if $\gamma(u)\sim 1$ and $\lambda(y,uv)$ is not homotopic to a power of $\delta$, then $\gamma(uv)$ is essential. It ensues that a shortest among such $\gamma(uv)$ is a shortest essential arc. Claim~\ref{clm:gammaE} easily follows.

Hence, for any two indices $0\le i < j < d$, Claims~\ref{clm:gammaV} and~\ref{clm:gammaE} allow us to compute a shortest essential arc with endpoints $z_i$ and $z_j$ in linear time. Using Lemma~\ref{lem:compute-weakly-simple}, we can transform this arc in linear time into a weakly simple arc. This ends the proof of Theorem~\ref{thm:shortest-arcs}.
\end{proof}

Erickson and Worah~\cite[Sec. 5]{ew-csec-10} propose even faster algorithms for computing shortest essential cycles when the genus and number of boundaries of $S$ are fixed. Since their algorithms take at least linear time, using a similar approach for computing shortest arcs between all pairs of vertices on $\delta$ would lead to a total running time at least $n$ times the number of pairs of vertices. In the worst case this would lead to an algorithm with time complexity at least cubic and would not improve Theorem~\ref{thm:shortest-arcs}.

\section{Curves and crossings}\label{sec:lemmas}

Here, we state some properties concerning curves and their crossings that will prove useful in our analysis of the number of crossings of the first three systoles. We do this in the framework of cross-metric surfaces. 
The properties essentially state that given a set of simple tight curves in minimal configuration, any other simple curve has a simple tight representative that intersects each curve in the set minimally. The proofs rely on classical exchange arguments based on Hass and Scott's bigon lemma~\cite[Lem. 3.1]{hs-ics-85}.

We first fix some notation conventions. Let $S$ be a compact orientable surface. In this section, we often consider curves on $S$ up to orientation. Hence, in order to avoid orientation considerations when concatenating two paths, say $p,q$, with the same endpoints we write $p \cup q$ rather than $p\cdot q$, $p\cdot q^{-1}$, $p^{-1}\cdot q$, or $p^{-1}\cdot q^{-1}$, whichever is meaningful. This only applies when $p$ and $q$ are not closed as otherwise the homotopy class of the concatenation $p \cup q$ is ambiguous. When they are closed, we thus keep the classical notation $p\cdot q$ for their concatenation. While the notation $p\sim q$ indicates that the paths $p$ and $q$ are homotopic with fixed endpoints, for two closed curves $\alpha, \beta$, the notation $\alpha\sim\beta$ refers to free homotopy, unless stated otherwise.

\paragraph{Algebraic vs. geometric intersection number.} The \define{algebraic intersection number} of two closed curves $\alpha, \beta$ on $S$ is denoted by $\intForm{\alpha}{\beta}$. It is well-known that this number only depends on the homology classes of $\alpha, \beta$. The \define{geometric intersection number} of $\alpha$ and $\beta$ is the minimal number of crossings among all pair of loops homotopic to $\alpha$ and $\beta$, respectively.  This is by definition a homotopy invariant, but in general, this is not a homology invariant.

In the next three lemmas we only consider curves that are normal with respect to the graph of the cross-metric surface. For conciseness, we thus drop the adjective normal. 
A curve is \define{tight} if it has minimal length (with respect to the cross-metric) in its homotopy class. The cardinality of a set $\cal S$ is denoted $\#\cal{S}$.
\begin{lemma}\label{lem:excess-intersection}
Let $\ell_1,\ell_2,\dots,\ell_k$ be simple, tight loops in minimal configuration, i.e. they pairwise cross minimally thus realizing their geometric intersection number. If $\ell$ is homotopic to a simple loop, then there exists a simple tight homotopic loop $\ell'$ that intersects each $\ell_i$ minimally. In other words the simple tight loops  $\ell_1,\ell_2,\dots,\ell_k, \ell'$ are in minimal configuration.
\end{lemma}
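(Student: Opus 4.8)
The plan is to work with a tight loop $\ell'$ homotopic to $\ell$ that has been chosen, among all tight simple loops in its free homotopy class, to minimize the total number of intersections with $\ell_1 \cup \cdots \cup \ell_k$, and then show that this minimizing loop already realizes each geometric intersection number $i(\ell, \ell_i)$ simultaneously. Suppose not: then $\ell'$ crosses some $\ell_j$ more often than necessary. A standard surface-topology fact (the bigon criterion, e.g.\ via Hass--Scott or the combinatorial version used elsewhere in this area) says that two curves in minimal position with respect to each other have no \emph{bigon}, and conversely an excess intersection forces an innermost bigon. So there is a disk $D$ bounded by a subarc $a$ of $\ell'$ and a subarc $b$ of $\ell_j$, with $a$ and $b$ meeting only at their two common endpoints. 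The idea is to push $\ell'$ across $D$, replacing $a$ by (a curve isotopic to) $b$, to obtain a homotopic simple loop $\ell''$ with two fewer intersections with $\ell_j$ — and the main work is to check that this can be done without (i) increasing the length, (ii) destroying simplicity of $\ell'$, or (iii) creating new excess intersections with the other $\ell_i$'s, which would prevent the induction from terminating.

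For the length bound, I would argue that the bigon $D$ can be taken innermost not just with respect to $\ell'$ and $\ell_j$ but with respect to the entire collection: since all the $\ell_i$ are tight and pairwise in minimal configuration, no $\ell_i$ forms a bigon with $\ell_j$, so inside $D$ the arcs of the other $\ell_i$'s and of $\ell'$ cannot be "trapped" in a way that blocks the swap — more precisely, I'd choose $D$ to be an innermost bigon of the whole picture, so that $\mathrm{int}(D)$ meets only arcs that enter and leave through $b$ (the $\ell'$-side $a$ being the innermost arc). Replacing $a$ by a copy of $b$ pushed slightly off $\ell_j$ to the outside of $D$ then: does not lengthen the curve, because $b$ has length at most that of $a$ (otherwise $\ell_j$ could be shortened using $a$, contradicting tightness of $\ell_j$) — and in the cross-metric formalism we can make the new arc cross exactly the edges $b$ crosses; keeps the loop simple, because we are replacing an embedded arc of an embedded loop by another embedded arc on the boundary of an embedded disk; removes the two intersections with $\ell_j$; and does not increase intersections with any other $\ell_i$, since the $\ell_i$-arcs crossing $D$ all run from $b$ to $b$ and after the swap simply cross the new arc the same number of times (or fewer, and any new bigons thus created are with curves already in minimal configuration among themselves, so can be removed in turn while strictly decreasing the total intersection count). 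The key invariant that makes this terminate is that the total number of crossings of $\ell'$ with $\ell_1 \cup \cdots \cup \ell_k$ strictly decreases at each step, contradicting the minimality in the choice of $\ell'$.

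A subtlety is that after the bigon move the resulting loop, though simple and shorter-or-equal, need not itself be tight; so I would interleave the bigon removal with a tightening step: tighten $\ell''$ within its homotopy class (this does not increase length and, by Lemma~\ref{lem:excess-intersection}'s predecessors or by the standard fact that tightening a simple curve keeps it simple and can only remove bigons, never create them) and observe that tightening cannot create excess intersections with the $\ell_i$. Iterating, we reach a tight simple loop homotopic to $\ell$ in minimal position with every $\ell_i$, which is the desired $\ell'$; and since it is in minimal position with each $\ell_i$ individually while the $\ell_i$ are pairwise minimal and all tight, the whole family $\ell_1, \dots, \ell_k, \ell'$ is in minimal configuration.

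I expect the main obstacle to be the \textbf{simultaneity} of minimality: it is easy to fix up the intersection with one $\ell_j$ at a time, but one must ensure the repair does not spoil a previously-achieved minimal position with some other $\ell_i$. The clean way around this, which I'd adopt, is to never treat the $\ell_j$'s one at a time but instead to use the single monovariant "total number of intersections of $\ell'$ with $\bigcup_i \ell_i$," show any failure of simultaneous minimality yields an honest bigon somewhere (invoking that the $\ell_i$ are already mutually bigon-free, so the offending bigon must have a side on $\ell'$), and then perform a bigon move that strictly decreases this monovariant — so the process must stop, and when it stops, $\ell'$ is simultaneously minimal with all the $\ell_i$.
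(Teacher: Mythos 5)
Your core argument -- pick a tight simple representative minimizing the total number of crossings with $\bigcup_i\ell_i$, locate an innermost empty bigon, observe that pairwise minimality of the $\ell_i$ forces one side of that bigon to lie on $\ell'$, swap that side, and contradict minimality -- is exactly the paper's strategy (the paper invokes~\cite{hs-ics-85}, Lem.~3.1, for the empty bigon and runs the swap iteratively rather than extremally). But as written there are two genuine soft spots. First, you minimize ``among all tight simple loops in its free homotopy class'' without showing this set is nonempty. In the cross-metric setting of the paper tight representatives are far from unique, and an individual tight loop homotopic to a simple one can self-intersect, so the existence of a \emph{simple} tight representative is itself part of what the lemma asserts; the paper proves it as a first step using~\cite{hs-ics-85}, Th.~2.7 (a curve with excess self-intersection homotopic to a simple one contains an embedded monogon or bigon), ruling out monogons by tightness and removing self-bigons by equal-length swaps. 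Your later appeal to ``the standard fact that tightening a simple curve keeps it simple'' is not a substitute: tightening produces \emph{some} shortest homotopic loop, which need not be simple here -- that is precisely the point that has to be argued.

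Second, your interleaved re-tightening step rests on the claim that ``tightening cannot create excess intersections with the $\ell_i$'', which is unjustified and essentially the statement you are trying to prove. Fortunately the step is unnecessary: you derived $|b|\le|a|$ from tightness of $\ell_j$, and tightness of $\ell'$ gives the reverse inequality $|a|\le|b|$ (otherwise replacing $a$ by $b$ would shorten $\ell'$ within its homotopy class), so the two sides of any bigon have equal length and the swap preserves tightness on the nose -- exactly the closing remark in the paper's proof. Relatedly, your picture of the innermost bigon as possibly crossed by arcs running ``from $b$ to $b$'' is off: such an arc would cut off a bigon between some $\ell_i$ and $\ell_j$, contradicting their minimal configuration, or a smaller $\ell'$--$\ell_j$ bigon, contradicting innermostness; a correctly chosen innermost bigon is empty, and this emptiness is what guarantees the swap leaves every other intersection count, the $\ell_i$ themselves, and the simplicity of $\ell'$ untouched. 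With these repairs your argument coincides with the paper's.
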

\begin{proof}
  Consider a tight loop homotopic to $\ell$. In order to make it simple and tight, use \cite[Th. 2.7]{hs-ics-85} for the existence of embedded 1 or 2-gon for a curve with excess self-intersection homotopic to a simple curve. This allows us to untie bigons (monogons cannot appear by tightness) until there are no more. Then to make $\ell$ cross minimally the $\ell_i$, use \cite[Lem. 3.1]{hs-ics-85} for the existence of an empty (innermost) embedded bigon and switch the $\ell$ part in each bigon. This leaves the $\ell_i$ unchanged and does not introduce self-crossings in $\ell$. Note that by tightness of the curves any bigon is bounded by two arcs of the same length, so that the above curve modifications indeed preserve tightness.
\end{proof}

\begin{lemma}\label{lem:contractible-case}
  Let $\ell, \ell'$ be simple, tight, non-homotopic loops on $S$ with at least two crossings cutting $\ell$ into paths $p,q$, and cutting $\ell'$ into paths $p', q'$. If $p$ is homotopic to $p'$ then there exists a simple tight loop $\ell''$ homotopic to $\ell'$ such that $\card{\ell \cap \ell''} < \card{\ell \cap \ell'}$.
\end{lemma}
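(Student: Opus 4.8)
The plan is to build, by a ``shortcut'', a representative of the free homotopy class of $\ell'$ meeting $\ell$ strictly fewer times than $\ell'$, and then to hand it to Lemma~\ref{lem:excess-intersection} to recover a \emph{simple} tight representative with as few intersections with $\ell$ as possible.

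First I would set up notation and record two easy facts. Fix orientations and let $a,b$ be the two chosen intersection points, so that $\ell=p\cdot q$ and $\ell'=p'\cdot q'$ as loops based at $a$, with $p,p'$ the subarcs from $a$ to $b$. A subarc of a tight loop is a tight path --- otherwise shortening it would shorten the loop in its free homotopy class --- so $p,q,p',q'$ are all tight; since $p\sim p'$ rel endpoints, this forces $|p|=|p'|$. Note also that $p\neq p'$: the crossing of $\ell$ and $\ell'$ at $a$ is transverse, so $p$ and $p'$ leave $a$ in distinct directions. Consequently the loop $\lambda:=p\cdot q'$ is freely homotopic to $\ell'=p'\cdot q'$ and has length $|p|+|q'|=|p'|+|q'|=|\ell'|$, hence $\lambda$ is tight.

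The core step is to bound $\card{\ell\cap\lambda}$. I would perturb $\lambda$ to be transverse to $\ell$ by pushing its subarc $p$ slightly off the copy of $p$ contained in $\ell$, onto one of the two banks of that arc. Because $\ell$ and $\ell'$ cross transversely at $a$ (and at $b$), the four local branches at $a$ sit in a cyclic order in which the branch of $\lambda$ running along $p$ and the branch of $\lambda$ running along $q'$ are adjacent; so $\lambda$ merely ``turns a corner'' at $a$, and likewise at $b$. For a suitable choice of bank the corner at $a$ can be routed entirely inside the sector it subtends, meeting $\ell$ not at all, while the corner at $b$ can always be routed so as to meet $\ell$ at most once. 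With this choice, and because $p$ and $q$ meet only at $a,b$ (as $\ell$ is simple), the intersections of the perturbed $\lambda$ with $\ell$ are exactly the crossings of $q'$ with the interiors of $p$ and of $q$, plus at most one corner crossing near $b$. All of the former are also crossings counted in $\card{\ell\cap\ell'}=2+\card{\relint{p}\cap\relint{p'}}+\card{\relint{p}\cap\relint{q'}}+\card{\relint{q}\cap\relint{p'}}+\card{\relint{q}\cap\relint{q'}}$, so $\card{\ell\cap\ell'}-\card{\ell\cap\lambda}\geq 2+\card{\relint{p}\cap\relint{p'}}+\card{\relint{q}\cap\relint{p'}}-1\geq 1$. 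Hence $i(\ell,\ell')=i(\ell,\lambda)\leq\card{\ell\cap\lambda}<\card{\ell\cap\ell'}$.

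Finally, $\lambda$ is a tight loop freely homotopic to the simple loop $\ell'$, so Lemma~\ref{lem:excess-intersection}, applied with the single simple tight loop $\ell$, produces a simple tight loop $\ell''$ freely homotopic to $\ell'$ (hence to $\lambda$) and in minimal configuration with $\ell$, i.e. with $\card{\ell\cap\ell''}=i(\ell,\ell')<\card{\ell\cap\ell'}$. This is the loop required by the statement. I expect the main obstacle to be the intersection bookkeeping of the core step: verifying, from the cyclic orders of the four transverse branches at $a$ and at $b$, that the two corner turns of $\lambda$ can be routed so as to contribute at most one crossing with $\ell$ altogether. (One could instead argue that $p\cup p'$ is null-homotopic and therefore, by \cite[Th.~2.7]{hs-ics-85}, contains an embedded bigon --- necessarily a bigon between $\ell$ and $\ell'$, since $p$ and $p'$ are individually embedded --- which already shows $i(\ell,\ell')<\card{\ell\cap\ell'}$; but making this rigorous when the interiors of $p$ and $p'$ cross requires comparable care.)
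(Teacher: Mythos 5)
Your proposal is correct and follows essentially the same route as the paper: exchange $p'$ for $p$ in $\ell'$, perturb the resulting loop to see it meets $\ell$ strictly fewer times (the paper leaves this corner/bank bookkeeping implicit), and then invoke Lemma~\ref{lem:excess-intersection} to obtain a simple tight representative of the class of $\ell'$ with fewer intersections with $\ell$. The extra observations (tightness of $\lambda$, $|p|=|p'|$) are harmless but not needed, since Lemma~\ref{lem:excess-intersection} only requires the intersection count.
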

\begin{proof}
  By exchanging $p'$ with $p$ in $\ell'$ and applying a small perturbation, we get a curve homotopic to $\ell'$, possibly with self-crossings, and at least one crossing with $\ell$ fewer. We then apply the previous lemma.
\end{proof}

\begin{lemma}\label{lem:homotopic-case}
  Let $\ell, \ell'$ be simple, tight, non-homotopic loops on $S$ with at least two crossings $u,v$. Let $u,v$ cut $\ell$ into paths $p,q$, and cut $\ell'$ into paths $p', q'$. If $p\cup p'$ is homotopic to $\ell$ or to $\ell'$, then there exists a simple tight loop $\ell''$ homotopic to $\ell'$ such that $\card{\ell \cap \ell''} < \card{\ell \cap \ell'}$.
\end{lemma}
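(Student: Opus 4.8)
The plan is to follow the same strategy as in the proof of Lemma~\ref{lem:contractible-case}. I would build a closed curve $c$ that is homotopic to $\ell'$ (possibly with self-intersections) and satisfies $\card{\ell\cap c}<\card{\ell\cap\ell'}$; since $c$ is then homotopic to the simple loop $\ell'$, Lemma~\ref{lem:excess-intersection} applied to the single tight simple loop $\ell$ produces a simple tight loop $\ell''\sim\ell'$ with $\card{\ell\cap\ell''}$ equal to the geometric intersection number $i(\ell,\ell'')=i(\ell,\ell')\le\card{\ell\cap c}<\card{\ell\cap\ell'}$, which is exactly the desired conclusion.

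First I would reduce to the case where $p\cup p'$ is homotopic to $\ell'$ rather than to $\ell$. The hypothesis is symmetric under exchanging $(\ell,p,q)$ with $(\ell',p',q')$, since it refers only to the unoriented loop $p\cup p'=p'\cup p$ and to the disjunction ``$\ell$ or $\ell'$''. The conclusion is symmetric too: by Lemma~\ref{lem:excess-intersection} applied to the single tight simple loop $\ell$, the existence of a simple tight $\ell''\sim\ell'$ with $\card{\ell\cap\ell''}<\card{\ell\cap\ell'}$ is equivalent to the inequality $i(\ell,\ell')<\card{\ell\cap\ell'}$, and this inequality is unchanged when $\ell$ and $\ell'$ are swapped. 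So from now on I may assume $p\cup p'\sim\ell'$.

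For the surgery: $p$ is a sub-arc of the simple loop $\ell$, hence has an embedded annular neighborhood, and I would push $p$ slightly off $\ell$ to one of the two sides, obtaining an arc $\tilde p$ with the same endpoints $u,v$ whose interior is disjoint from $\ell$. I would choose the side of $\ell$ so that near $u$ the arc $\tilde p$ leaves $u$ on the same side of $\ell$ as $p'$ does, and set $c:=\tilde p\cup p'$. Then $c$ is homotopic to $p\cup p'$, hence to $\ell'$ by hypothesis, and in particular is homotopic to a simple loop; it may however have self-intersections where $\tilde p$ crosses $p'$, which is harmless.

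It then remains to count $\card{\ell\cap c}$. Let $a$ be the number of intersection points of $\ell$ with $p'$ other than $u,v$, and $b$ the analogous number for $q'$; these account for all intersections of $\ell$ with $\ell'$ besides $u$ and $v$, so $\card{\ell\cap\ell'}=2+a+b\ge a+2$. In $c$, the sub-arc $\tilde p$ contributes no intersection with $\ell$ (it stays on one side of $\ell$ inside the annular neighborhood), the sub-arc $p'$ contributes its $a$ intersections with $\ell$, and near $u$ the two sub-arcs can be joined without meeting $\ell$ by the choice of side; only near $v$ might joining them cost one further intersection with $\ell$. Hence $\card{\ell\cap c}\le a+1<\card{\ell\cap\ell'}$, and the proof would conclude as described above. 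I expect the intersection count to be the point needing the most care: one must check that the right choice of side cancels the intersection that would otherwise be forced at $u$, and that the single intersection possibly created at $v$ is outweighed by the two genuine intersections $u,v$ of $\ell$ and $\ell'$ that have been destroyed. Note that tightness plays no role in the surgery itself (the push-off is purely topological, inside an annular neighborhood), all metric considerations being deferred to the final application of Lemma~\ref{lem:excess-intersection}.
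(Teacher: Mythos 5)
Your proof is correct and is essentially the paper's own argument: the paper also takes the curve $p\cup p'$, perturbs it slightly off one of the two loops (its $\varepsilon\in\{0,1\}$ is exactly your possible junction crossing at $v$), observes that the perturbed curve meets the other loop strictly fewer times than $\card{\ell\cap\ell'}$, and concludes with Lemma~\ref{lem:excess-intersection}. The only cosmetic difference is that the paper treats the case $p\cup p'\sim\ell$ directly and handles the other case by exchanging the roles of $\ell$ and $\ell'$, which is precisely your symmetry reduction via the equivalent inequality $i(\ell,\ell')<\card{\ell\cap\ell'}$.
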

\begin{proof}
  For an arc $a$, denote by $\overset{\circ}{a}$ its relative interior. We have
  \[ \card{\ell\cap \ell'} = \card{\relint{p}\cap \relint{p'}} + \card{\relint{p}\cap \relint{q'}} + \card{\relint{q}\cap \relint{p'}} + \card{\relint{q}\cap \relint{q'}} +2
  \]
  Set $\gamma = p\cup p'$. 
  If $\gamma \sim \ell$ then, after a small perturbation of $\gamma$, we obtain
  \[ \card{\gamma \cap \ell'} =\card{\relint{p}\cap \relint{p'}}  + \card{\relint{p}\cap \relint{q'}}  + \varepsilon
  \]
  with $\varepsilon = 0$ or 1 according to whether the ends of $\relint{p}$ are on the same side or on opposite sides of $\ell'$. In both cases, we have $\card{\ell\cap \ell'} > \card{\gamma \cap \ell'}$ showing that $\ell$ and $\ell'$ have excess intersection. We then apply Lemma~\ref{lem:excess-intersection} to conclude. Exchanging the roles of $\ell$ and $\ell'$, we may conclude similarly when $\gamma \sim \ell'$.
\end{proof}

\section{Configurations of the first three systoles}\label{sec:propertiesofsecondandthirdsystoles}
In this section we analyse the possible configurations of the first three systoles, continuing in the framework of cross-metric surfaces.
Because of potential multiplicities in the length spectrum, the choice for the first three systoles is not unique. We can nonetheless assert that the shortest non-contractible loop is always simple, whichever we choose. For otherwise it would decompose into a concatenation of shorter non-contractible loops, a contradiction. Even if the second and third systoles cannot always be chosen simple, we show below that we can impose some simple alternatives on these curves.
Basically, given an initial configuration for the first three systoles, we rely on the lemmas in Section~\ref{sec:lemmas} to reconfigure the systoles in order to reach one of the alternatives. We spend most of the proofs justifying that the reconfigurations lead to a valid set of systoles.

\begin{lemma}\label{lem:secondsystole}
  Let $\ell_1$ be a first systole. We may choose the second systole in such a way that it intersects $\ell_1$ at most once and such that it is either simple or homotopic to $\ell_1^2$.
\end{lemma}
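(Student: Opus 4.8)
The plan is to start from an arbitrary second systole $\ell_2$ and perform surgery to reduce both its self-intersections and its intersections with $\ell_1$, always producing a loop that is still non-contractible, non-homotopic to $\ell_1$, and of length at most $|\ell_2|$ — hence still a second systole. Work in the cross-metric setting with $\ell_1$ chosen simple and tight, and take $\ell_2$ tight as well. First I would dispense with the self-intersection issue: if $\ell_2$ has a self-intersection, it decomposes at that point into two shorter closed subloops $a,b$ with $\ell_2 \sim a\cdot b$ (up to conjugacy/basepoint). Since $|a|,|b| < |\ell_2| \le \ell_2\text{-length}$, and $\ell_1$ is the unique-length-class-wise shortest non-contractible loop with $|\ell_1|\le|\ell_2|$, each of $a,b$ is either contractible or homotopic to $\ell_1$ (as a first systole, possibly with multiplicity). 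If either $a$ or $b$ is non-contractible and not a power of $\ell_1$'s class issue — more carefully: if neither is contractible, then $\ell_2$ is a product of two loops each homotopic to $\ell_1$, so $\ell_2\sim \ell_1^{\pm1}\cdot\ell_1^{\pm1}$, which after accounting for orientation is either contractible or $\ell_1^2$; in the first case $\ell_2$ would be contractible, contradiction, so $\ell_2\sim \ell_1^2$ and we are in the allowed alternative. If exactly one of $a,b$ is contractible, the other is a non-contractible loop of length $\le|\ell_2|$ in a different free homotopy class unless it is homotopic to $\ell_1$; replacing $\ell_2$ by that subloop strictly decreases the number of self-intersections without increasing length, and we recurse. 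So after finitely many steps either $\ell_2$ is simple or $\ell_2\sim\ell_1^2$.

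Next I would handle the intersection with $\ell_1$, assuming now $\ell_2$ is simple (the case $\ell_2\sim\ell_1^2$ needs separate treatment, see below). Suppose $\ell_2$ meets $\ell_1$ in at least two points $u,v$. These cut $\ell_1$ into arcs $p,q$ and $\ell_2$ into arcs $p',q'$. Consider the four candidate loops $p\cup p'$, $p\cup q'$, $q\cup p'$, $q\cup q'$ (concatenations of arcs with common endpoints). A counting/length argument shows at least one of these has length strictly less than $|\ell_2|$: indeed $|p|+|q|=|\ell_1|\le|\ell_2|$ and $|p'|+|q'|=|\ell_2|$, so $\min$ over the four of $|{\cdot}|+|{\cdot}|$ is at most $|\ell_1|/2 + |\ell_2|/2 \le |\ell_2|$, with strictness unless several equalities hold simultaneously — and the boundary cases where no strict inequality is available can be pushed into Lemmas~\ref{lem:contractible-case} and~\ref{lem:homotopic-case}, which say that if some such recombined loop is homotopic to $\ell_1$ or to $\ell_2$ then $\ell_2$ can be retightened to reduce $\card{\ell_1\cap\ell_2}$ outright. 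So in the generic case we get a shorter non-contractible loop (hence homotopic to $\ell_1$), and then $\ell_2$ itself is expressible as a concatenation of an arc of $\ell_1$ with an arc homotopic to the complementary arc of $\ell_1$ — forcing, via the same recombination identities, $\ell_2$ to be homotopic to a power of $\ell_1$, so $\ell_2\sim\ell_1^{\pm1}$ (impossible, as $\ell_2$ is a genuine second systole, a different class) or $\ell_2\sim\ell_1^{2}$ (allowed). The remaining case is exactly one intersection, which is what we want.

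For the branch $\ell_2\sim\ell_1^2$: here $\ell_2$ need not be simple, but I claim we may take $\ell_2$ to be literally a reparametrization following $\ell_1$ twice (pushed off itself), which meets $\ell_1$ in zero points after a small perturbation — so the "at most once" conclusion holds trivially. This uses that $\ell_1$ is simple and that $\ell_1^2$ has a tight representative obtained by doubling $\ell_1$; tightness is clear because any loop homotopic to $\ell_1^2$ has length $\ge 2|\ell_1|$ by the triangle inequality in the universal cover / the fact that $\ell_1$ is length-minimizing in its class, and doubling $\ell_1$ achieves $2|\ell_1|$.

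\textbf{Main obstacle.} The delicate point is the length bookkeeping in the two-intersection surgery: extracting a \emph{strictly} shorter non-contractible loop from the four recombinations is not automatic in the boundary cases, and one must invoke Lemmas~\ref{lem:excess-intersection}, \ref{lem:contractible-case}, \ref{lem:homotopic-case} exactly at the configurations where all the length inequalities degenerate to equalities. Getting the case analysis airtight — in particular ruling out that the surgery loops are all contractible, which would force $\ell_1$ or $\ell_2$ to be contractible — and confirming that the only surviving possibility for $\ell_2$ (other than fewer intersections) is $\ell_1^2$ rather than some other product, is where the real work lies.
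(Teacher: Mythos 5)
Your overall strategy (surgery at self-intersections, recombination at pairs of intersections with $\ell_1$, invoking Lemmas~\ref{lem:contractible-case} and~\ref{lem:homotopic-case}, and falling back on $\ell_1^2$) matches the paper's, but two of your key deductions conflate \emph{free} homotopy with homotopy of \emph{based} loops, and this breaks the argument. In the non-simple case you argue that the two subloops $a,b$ at a self-intersection are each freely homotopic to $\ell_1$ and conclude $\ell_2\sim\ell_1^{\pm1}\cdot\ell_1^{\pm1}\sim\ell_1^2$. Based at the crossing point, $a$ and $b$ are only \emph{conjugates} of $\ell_1^{\pm1}$ in $\pi_1(S)$, and a product of two distinct conjugates of a class need not be conjugate to its square; so the original $\ell_2$ need not be homotopic to $\ell_1^2$ at all. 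What is true (and all that is needed) is the length bound $|a|,|b|\ge|\ell_1|$, hence $|\ell_2|\ge 2|\ell_1|$, which lets you \emph{replace} $\ell_2$ by the doubled curve $\ell_1^2$ --- a legitimate new choice of second systole, not a homotopy of the old one. The same error recurs, more damagingly, in your two-intersection analysis: from ``some recombination $p\cup p'$ is homotopic to $\ell_1$'' you infer that $\ell_2$ is homotopic to a power of $\ell_1$. This does not follow (a free homotopy $p\cup p'\sim\ell_1$ gives no rel-endpoints relation between $p'$ and the complementary arc of $\ell_1$), and the intended conclusion is different: a recombination homotopic to $\ell_1$ is handled by Lemma~\ref{lem:homotopic-case}, which yields a tight simple loop homotopic to $\ell_2$ with strictly fewer crossings of $\ell_1$, and the case of two intersections is then ruled out altogether rather than funnelled into the class of $\ell_1^2$.

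The second, related gap is that you never fix the extremal choice that makes these reductions bite. The paper chooses $\ell_2$, among all second systoles, minimizing the number of intersections with $\ell_1$; then each of the following contradicts that choice: a recombination that is contractible or homotopic to $\ell_1$ (via Lemmas~\ref{lem:contractible-case} and~\ref{lem:homotopic-case}), a recombination of the two shorter halves that is simple (it is no longer than $\ell_2$, neither contractible nor homotopic to $\ell_1$, hence an admissible second systole, and it crosses $\ell_1$ fewer times), and one that is non-simple (then it may be replaced by $\ell_1^2$, which is disjoint from $\ell_1$). Your iterative ``surgery'' phrasing could in principle be made to work by induction, but you neither verify that each surgery output remains an admissible second systole (non-contractible, not homotopic to $\ell_1$, length at most $|\ell_2|$) nor handle the case where the recombined loop self-intersects because $\ell_1$ and $\ell_2$ have further intersections besides $u,v$. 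As written, the proof has a genuine gap at exactly the point you flag as the ``main obstacle.''
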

\begin{proof}
  Let $\ell_2$ be a second systole that minimizes the number of crossings with $\ell_1$. If $\ell_2$ is not simple then it decomposes into a concatenation of two loops. Each one is non-contractible by the tightness of systoles, so it must be no shorter than $\ell_1$. It follows that $|\ell_2| \geq 2|\ell_1|$ and we may replace  $\ell_2$ by $\ell_1^2$. Note that in this case we may perturb $\ell_2$ to be disjoint from $\ell_1$. We now assume that $\ell_2$ is simple. By way of contradiction suppose that $\ell_1$ and $\ell_2$ have at least two crossings cutting $\ell_1$ into paths $\ell_{1a}, \ell_{1b}$, and cutting $\ell_2$ into paths $\ell_{2a}, \ell_{2b}$. Consider the loop $\gamma:=\ell_{1a}\cup\ell_{2a}$. We can perturb it to satisfy $\card{\gamma\cap \ell_2} < \card{\ell_1\cap \ell_2}$ and $\card{\ell_1\cap \gamma} < \card{\ell_1\cap \ell_2}$. Hence, if $\gamma$ is either contractible or homotopic to $\ell_1$, then by Lemma~\ref{lem:contractible-case} or Lemma~\ref{lem:homotopic-case} there exists a simple tight loop homotopic to $\ell_2$ with strictly fewer crossings with $\ell_1$. This however contradicts the minimal choice of $\ell_2$. The same conclusion holds if we replace $\gamma$ by $\ell_{1i}\cup\ell_{2j}$ for any $i,j\in \{a,b\}$. Choosing for $\ell_{1i}$ the shorter of $\ell_{1a}$ and $\ell_{1b}$ and for 
$\ell_{2j}$ the shorter of $\ell_{2a}$ and $\ell_{2b}$ we obtain a curve that is no longer than $\ell_2$ and neither contractible nor homotopic to $\ell_1$. It may thus serve as a second systole. If $\ell_{1i}\cup\ell_{2j}$ would self-cross then we are back to the first case and may replace it by $\ell_1^2$, contradicting the minimal choice of $\ell_2$. Hence, $\ell_{1i}\cup\ell_{2j}$ is simple and intersects $\ell_1$ fewer times than $\ell_2$ does, again contradicting the choice of $\ell_2$. We conclude that $\ell_1$ and $\ell_2$ cannot have two crossings.
\end{proof}

\begin{lemma}\label{lem:l3-simple}
  We may choose the third systole to be simple or to traverse the first systole twice or thrice.
\end{lemma}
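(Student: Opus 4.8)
The plan is to mimic the structure of the proof of Lemma~\ref{lem:secondsystole}, but now with two ``obstacle'' curves $\ell_1$ and $\ell_2$ instead of one. Let $\ell_3$ be a third systole chosen to minimize the number of intersections with $\ell_1 \cup \ell_2$ (with $\ell_1$ and $\ell_2$ fixed in minimal configuration, and $\ell_2$ chosen as in Lemma~\ref{lem:secondsystole}). If $\ell_3$ is simple we are done, so assume $\ell_3$ is not simple. Then $\ell_3$ decomposes at a self-intersection into a concatenation of two closed loops $\mu_1 \cdot \mu_2$, each non-contractible (by tightness of systoles) hence each of length at least $|\ell_1|$. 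If moreover $\ell_3$ has three or more self-intersections, or one self-intersection that splits it into pieces which are ``inefficient'', the idea is to repeatedly split at a self-intersection: each sublloop in the decomposition is a non-contractible tight loop, so its length is at least $|\ell_1|$. The key numerical point is that $|\ell_3|$ is squeezed between $2|\ell_1|$ and $|\ell_1| + |\ell_2| \le 2|\ell_2|$, and that $|\ell_3| \ge |\ell_2| \ge |\ell_1|$.

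First I would handle the case where $\ell_3$ has at least two self-intersections (counted so that $\ell_3$ splits into at least three subloops, or into two subloops one of which still self-intersects). Cutting $\ell_3$ at all its self-intersections produces a family of subloops whose total length equals $|\ell_3| \le |\ell_1| + |\ell_2|$; since each subloop is non-contractible with length $\ge |\ell_1|$, there can be at most two of them that are ``independent'' pieces, and if there are exactly two then each has length in $[|\ell_1|, |\ell_2|]$, so each could serve as a systole no longer than $\ell_2$. One of these subloops must be non-contractible and not freely homotopic to $\ell_1$ (else $\ell_3 \sim \ell_1^2$ up to homotopy and we replace $\ell_3$ by $\ell_1^2$, which is simple or traverses $\ell_1$ twice); if its homotopy class differs from $\ell_1$ we get a new third systole candidate with strictly fewer self-intersections, and iterating either terminates at a simple curve or produces a curve homotopic to a power of $\ell_1$ of length $2|\ell_1|$ or $3|\ell_1|$ — giving the ``traverses $\ell_1$ twice or thrice'' alternatives, after noting that a loop of length $k|\ell_1|$ which is a tight representative of $\ell_1^k$ literally traverses $\ell_1$ exactly $k$ times.

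The remaining case is that $\ell_3$ has exactly one self-intersection, splitting it into subloops $\mu_1, \mu_2$ with $|\mu_1| + |\mu_2| = |\ell_3| \le |\ell_1| + |\ell_2|$ and each $|\mu_i| \ge |\ell_1|$. Then $\mu_1$ is a candidate first systole or second systole, and $\mu_2$ likewise. If $\mu_1$ is contractible or homotopic to a boundary we contradict $|\mu_1|\ge|\ell_1|$ (unless it is trivial, but then $\ell_3$ had no genuine self-intersection). If some $\mu_i$ is non-contractible and not homotopic to $\ell_1$ or $\ell_2$, it is a systole of index at most $3$ and is shorter than $\ell_3$ unless $|\mu_j| = 0$ for the other index — so we can take $\mu_i$ as a (possibly new) third systole and argue it is simpler, or $\mu_i \sim \ell_2$ and then $|\ell_3| = |\mu_i| + |\mu_j| \ge |\ell_2| + |\ell_1|$ forces $|\mu_j| = |\ell_1|$, $\mu_j \sim \ell_1$ (again using that a tight representative achieving $|\ell_1|$ which is non-contractible must be homotopic to $\ell_1$ when $\ell_1$ is the unique-length systole, or choosing it so), whence $\ell_3 \sim \ell_2 \cdot \ell_1$ — but then $\ell_3$ traverses $\ell_1$ once, in fact $\ell_3$ is homotopic to a loop meeting $\ell_1$ only along the copy it contains, and I would argue this case reduces to $\ell_3$ ``traversing $\ell_1$ twice'' is not forced — instead here $\ell_3$ traverses $\ell_2$, and we separately know $\ell_2$ is simple or $\ell_2 \sim \ell_1^2$; if $\ell_2$ is simple then $\ell_3 = \ell_2 \cdot \ell_1$ traverses $\ell_1$ once (hence ``at most thrice'', fine), and if $\ell_2 \sim \ell_1^2$ then $\ell_3 \sim \ell_1^3$ traverses $\ell_1$ thrice. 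Finally the sub-case where both $\mu_1$ and $\mu_2$ are homotopic to $\ell_1$: then $\ell_3 \sim \ell_1 \cdot \ell_1 = \ell_1^2$ and we replace $\ell_3$ by $\ell_1^2$, which traverses $\ell_1$ twice.

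The main obstacle I expect is the bookkeeping in the one-self-intersection case: controlling exactly which homotopy class each subloop $\mu_i$ can lie in, given only the length inequalities $|\ell_1| \le |\mu_i|$ and $|\mu_1| + |\mu_2| \le |\ell_1| + |\ell_2| \le |\ell_1| + |\ell_3|$, and ruling out that cutting at the self-intersection yields a subloop that is non-contractible, not equal to $\ell_1$ or $\ell_2$ in homotopy, and shorter than $\ell_3$ — because such a subloop would be a strictly shorter third-systole candidate only if it is not actually a \emph{second} systole of smaller length, forcing a careful case split on the multiplicities in the length spectrum. The tools that make this tractable are exactly Lemmas~\ref{lem:excess-intersection},~\ref{lem:contractible-case} and~\ref{lem:homotopic-case}, which let us trade a self-intersection or an excess intersection with $\ell_1$ for a simpler homotopic representative of the same length, so that the minimality of our choice of $\ell_3$ is contradicted unless $\ell_3$ already has one of the claimed forms.
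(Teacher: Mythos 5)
Your overall strategy (cut $\ell_3$ at self-intersections, use that every non-contractible subloop has length at least $|\ell_1|$, and squeeze $|\ell_3|$ from above to force both subloops into the class of $\ell_1$) is the same skeleton as the paper's argument, but the decisive case is handled incorrectly. When $\ell_3$ has exactly one self-intersection and both subloops $\mu_1,\mu_2$ are \emph{freely} homotopic to $\ell_1$, you conclude ``$\ell_3\sim\ell_1\cdot\ell_1=\ell_1^2$ and we replace $\ell_3$ by $\ell_1^2$''. This fails for two reasons. First, free homotopy of the pieces does not determine their based homotopy classes: $\mu_1$ and $\mu_2$ are only conjugates of $\ell_1^{\pm 1}$ in $\pi_1$, and their concatenation at the self-intersection point need not be freely homotopic to $\ell_1^2$ at all. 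Second, even if it were, this case only survives the earlier length reductions when $\ell_2\sim\ell_1^2$ (otherwise $\ell_1^2$ could already have been taken as the third systole), and then $\ell_1^2$ lies in the class of $\ell_2$ and cannot serve as a third systole. The paper's resolution of exactly this case is the missing idea: after showing $\ell_3$ has at most one self-intersection (a Gauss-code case analysis that your ``repeatedly split'' sketch does not actually carry out — e.g.\ the pattern $uuvwwv$ requires a path-swapping argument, not just counting subloops), one writes $\ell_3=\alpha\cdot\beta$ with $\alpha,\beta$ simple, sharing only the basepoint, both freely homotopic to $\ell_1$ but \emph{not} based-homotopic (else $\ell_3\sim\ell_2$), and then replaces $\ell_3$ by $\alpha\cdot\beta\inv$: a simple, tight, null-homologous (hence separating) curve of the same length, which is therefore not homotopic to $\ell_1$ or $\ell_2$ and is a legitimate \emph{simple} third systole.

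A secondary gap: your upper bound $|\ell_3|\le|\ell_1|+|\ell_2|$ is asserted but not justified — there is no well-defined free homotopy class ``$\ell_1\cdot\ell_2$'' to use as a candidate, and even a concatenation at an intersection point could a priori be homotopic to $\ell_1$, $\ell_2$ or a point. The paper instead gets the clean bounds $|\ell_3|\le 2|\ell_1|$ when $\ell_2\not\sim\ell_1^2$ and $|\ell_3|\le 3|\ell_1|$ in general from the candidates $\ell_1^2$ and $\ell_1^3$, handling the equality cases by taking those powers as the third systole. Relatedly, your intermediate conclusion that $\ell_3\sim\ell_2\cdot\ell_1$ ``traverses $\ell_1$ once (hence at most thrice, fine)'' does not establish the statement of the lemma, which requires $\ell_3$ to be simple or to be $\ell_1$ traversed exactly twice or thrice; in the paper that subcase is simply excluded by the length bound $|\ell_3|<3|\ell_1|$.
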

\begin{proof}
  Choose $\ell_3$ with the minimum number of self-crossings among the possible third systoles.
  If $\ell_3$ self-intersects, it decomposes into a concatenation $\alpha\cdot\beta$ of two loops. Since $\ell_3$ is tight, none of $\alpha,\beta$ is contractible. It follows that $\alpha,\beta$ are at least as long as $\ell_1$, so that $|\ell_3|\geq 2|\ell_1|$. Note that $|\ell_2|\leq 2|\ell_1|$ in any case. As a consequence, if $\ell_2$ is not homotopic to $\ell_1^2$ we can choose $\ell_3 = \ell_1^2$ for the third systole. As another consequence, if $|\ell_3|\geq 3|\ell_1|$ we can choose $\ell_3 = \ell_1^3$. We may thus assume that $\ell_2 = \ell_1^2$ and $|\ell_3|<3|\ell_1|$ for the rest of the proof. Now, each of $\alpha$ and $\beta$ is homotopic to $\ell_1$ or to $\ell_2$. For otherwise, it would contradict that $\ell_3$ is the third systole. If $\alpha$ or $\beta$ is homotopic to $\ell_2$, then $|\ell_3| \geq 3|\ell_1|$, contradicting our assumption. So, $\alpha$ and $\beta$ are in fact both homotopic to $\ell_1$.
  \begin{claim}\label{clm:at-most-2}
    $\ell_3$ has at most two self-crossings.
  \end{claim}
  Suppose by way of contradiction that $\ell_3$ has at least three crossings $u,v,w$. Up to a renaming of the crossings their cyclic ordering\footnote{For planar curves, this cyclic ordering is called the \emph{Gauss code} of the curve.} along $\ell_3$ must be one of $uuvvww$, $uuvwwv$, $uuvwvw$, $uvuwvw$, or $uvwuvw$. See Figure~\ref{fig:3-self-crossings}.
  \begin{figure}[h]
    \centering
    \includegraphics[width=\linewidth]{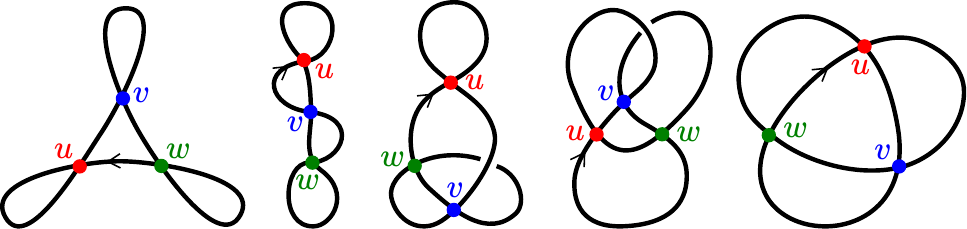}
    \caption{The five possible cyclic orderings of three self-crossings along $\ell_3$.}
    \label{fig:3-self-crossings}
  \end{figure}
  The ordering $uuvvww$ implies that $\ell_3$ decomposes into three loops and some paths connecting them. By tightness of $\ell_3$, none of these loops is contractible, hence no shorter than $\ell_1$. It follows that $\ell_3$ has length at least $3|\ell_1|$, in contradiction with our assumption. Similarly, the ordering $uuvwwv$ implies that $\ell_3$ decomposes into two loops and two paths between $u$ and $w$. If these two paths were homotopic, we could swap them (i.e, perform a smoothing at $u$ and $w$) to obtain a curve homotopic to $\ell_3$, of the same length, with strictly fewer crossings. This would however contradict the fact that $\ell_3$ crosses minimally. It follows that the concatenation of the two paths is a non-contractible curve, hence no shorter than $\ell_1$. In turn, this implies that $\ell_3$ has length at least $3|\ell_1|$, in contradiction with our assumption. The remaining cases can be dealt with similarly. Hence, none of the orderings can occur, proving the claim.
  \begin{claim}\label{clm:at-most-1}
    $\ell_3$ has at most one self-crossing.
  \end{claim}
Suppose on the contrary that $\ell_3$ has at least two crossings. Then, by Claim~\ref{clm:at-most-2} it has exactly two crossings, say $u,v$. These crossings appear in the order $uuvv$ or $uvuv$ along $\ell_3$. See Figure~\ref{fig:2-self-crossings}.
  \begin{figure}[h]
    \centering
    \includegraphics[width=\linewidth]{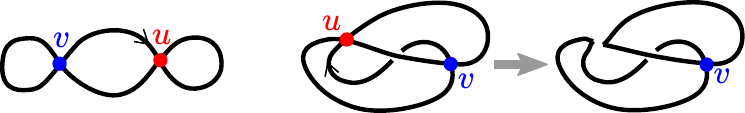}
    \caption{The two possible cyclic orderings of two self-crossings along $\ell_3$.}
    \label{fig:2-self-crossings}
  \end{figure}
  By the same arguments as above, the case $uuvv$ implies that $\ell_3$ has length at least $3|\ell_1|$, which contradicts our assumption. In the other case, $\ell_3$ decomposes as the concatenation of two loops $\alpha\cdot\beta$ with common basepoint $u$. By the discussion before Claim~\ref{clm:at-most-2}, $\alpha$ and $\beta$ are both homotopic to $\ell_1$. Since $u$ is a self-crossing the loops $\alpha$ and $\beta$ touch but do not cross at $u$. It follows that they only cross at $v$, hence have intersection number 1. This however contradicts the fact that they are homotopic, thus proving the claim.

By Claims~\ref{clm:at-most-2} and~\ref{clm:at-most-1}, if $\ell_3$ is not simple it is the concatenation of two simple loops $\alpha$ and $\beta$, both homotopic to $\ell_1$, sharing their basepoint and otherwise disjoint. Thus, $\alpha$ and $\beta$ bound a pinched cylinder. Looking at a small disk neighborhood of the common basepoint of $\alpha$ and $\beta$, we infer that none of them is separating. Equivalently, $\ell_1$ is non-separating. Now, $\alpha$ and $\beta$ cannot be homotopic with fixed basepoint for otherwise $\ell_3$ would be homotopic to $\ell_2=\ell_1^2$. The curve $\alpha\cdot\beta^{-1}$ is thus non-contractible. It is simple (after perturbation) and must be tight, for otherwise it would contradict that $|\ell_2|=2|\ell_1|$. It is null-homologous, hence separating. As a consequence it cannot be homotopic to $\ell_1$ or $\ell_2$. We can thus set $\ell_3= \alpha\cdot\beta^{-1}$.
\end{proof}
  
\begin{lemma}\label{lem:l3-l1-atmostonce}
  We may choose the third systole $\ell_3$ so that it satisfies Lemma~\ref{lem:l3-simple} and in addition it crosses $\ell_1$ at most once.
\end{lemma}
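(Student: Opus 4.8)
The plan is to carry out the argument of Lemma~\ref{lem:secondsystole} one level up, the essential difference being that the bigon‑exchange arguments must now avoid three homotopy classes — the trivial one, $[\ell_1]$ and $[\ell_2]$ — rather than two, and that the alternatives $\ell_1^2,\ell_1^3$ for $\ell_3$ must be dealt with by hand. First I would reduce to the interesting case: by Lemma~\ref{lem:l3-simple} we may take $\ell_3$ to be simple, or equal to $\ell_1^2$, or to $\ell_1^3$; in the latter two cases $\ell_3$ is freely homotopic to a power of $\ell_1$, hence realized by parallel copies of $\ell_1$ which, after a small push‑off, still traverse $\ell_1$ the right number of times but cross it $0$ times, so we are done. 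Otherwise, among all third systoles satisfying the conclusion of Lemma~\ref{lem:l3-simple}, choose one — still called $\ell_3$ — minimizing $\#(\ell_1\cap\ell_3)$; by Lemma~\ref{lem:excess-intersection} we may also assume $\ell_1$ and $\ell_3$ are in minimal configuration. If this minimum is $\le 1$ we are done, so suppose for contradiction that $\#(\ell_1\cap\ell_3)\ge 2$, which forces the minimizer to be simple. Fix two of these intersection points $u,v$; they cut $\ell_1$ into arcs $\ell_{1a},\ell_{1b}$ and $\ell_3$ into arcs $\ell_{3a},\ell_{3b}$, with $\ell_{1a},\ell_{3a}$ the shorter arc of each pair.

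Next, exactly as in the proof of Lemma~\ref{lem:secondsystole}, the four loops $\gamma_{ij}:=\ell_{1i}\cup\ell_{3j}$ ($i,j\in\{a,b\}$) can be perturbed so that $\#(\gamma_{ij}\cap\ell_1)<\#(\ell_1\cap\ell_3)$ and $\#(\gamma_{ij}\cap\ell_3)<\#(\ell_1\cap\ell_3)$. If some $\gamma_{ij}$ were contractible, Lemma~\ref{lem:contractible-case} would produce a simple tight loop homotopic to $\ell_3$ meeting $\ell_1$ strictly fewer times — a third systole still satisfying Lemma~\ref{lem:l3-simple}, contradicting minimality; if some $\gamma_{ij}$ were homotopic to $\ell_1$ or to $\ell_3$, Lemma~\ref{lem:homotopic-case} would give the same contradiction. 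Hence every $\gamma_{ij}$ is non‑contractible and homotopic to neither $\ell_1$ nor $\ell_3$.

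I would then analyse the short candidate $\gamma^{*}:=\gamma_{aa}=\ell_{1a}\cup\ell_{3a}$, after replacing it by a tight loop in its (nontrivial) homotopy class. Then $|\gamma^{*}|\le\tfrac12(|\ell_1|+|\ell_3|)\le|\ell_3|$. Moreover $|\ell_3|<3|\ell_1|$: one always has $|\ell_3|\le 3|\ell_1|$ because $[\ell_1],[\ell_1^2],[\ell_1^3]$ are three distinct non‑contractible classes of lengths at most $|\ell_1|,2|\ell_1|,3|\ell_1|$, and $|\ell_3|=3|\ell_1|$ would make $\ell_1^3$ a valid third systole crossing $\ell_1$ zero times, against the assumption that the minimum is $\ge 2$. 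Consequently $|\gamma^{*}|<2|\ell_1|$, so a tight representative of $[\gamma^{*}]$ cannot have an essential self‑intersection — it would otherwise split into two non‑contractible loops of total length $\ge 2|\ell_1|$ — and $\gamma^{*}$ may be taken simple and tight. Thus $\gamma^{*}$ is a simple tight non‑contractible loop of length $\le|\ell_3|$, homotopic to neither $\ell_1$ nor $\ell_3$. If it is also not homotopic to $\ell_2$, then it may serve as a third systole; since $\#(\gamma^{*}\cap\ell_1)<\#(\ell_1\cap\ell_3)$ already for our drawing of $\gamma^{*}$, a simultaneously tight and $\ell_1$‑minimal representative of $[\gamma^{*}]$ (Lemma~\ref{lem:excess-intersection}) is a simple tight third systole satisfying Lemma~\ref{lem:l3-simple} crossing $\ell_1$ strictly fewer than $\#(\ell_1\cap\ell_3)$ times — a contradiction.

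The crux is the remaining case $\gamma^{*}\sim\ell_2$. If $[\ell_2]=[\ell_1^2]$, then $|\ell_2|=2|\ell_1|$ (a tight loop in this class self‑intersects, hence splits into two non‑contractible loops of total length $\ge 2|\ell_1|$, while $|\ell_2|\le 2|\ell_1|$ always), so $|\gamma^{*}|\ge|\ell_2|=2|\ell_1|$ contradicts $|\gamma^{*}|<2|\ell_1|$. Hence $\ell_2$ is simple, and here lies the main obstacle: the loops $\gamma_{ab},\gamma_{ba},\gamma_{bb}$ are a priori too long to substitute directly for $\ell_3$, so one cannot merely pick a different $\gamma_{ij}$ and is forced to modify $\ell_3$ itself. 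To do so I would use that, by Lemma~\ref{lem:secondsystole} together with Lemma~\ref{lem:excess-intersection}, $[\ell_2]$ admits a simple tight representative meeting $\ell_1$ at most once, and combine it with the identities expressing $\ell_1$ and $\ell_3$ (suitably based at $u$) as products of two of the $\gamma_{ij}$: substituting the short representative of $\ell_2$ for $\gamma_{aa}$ and untying the bigon it creates with $\ell_1$ should produce a simple tight loop homotopic to $\ell_3$ that crosses $\ell_1$ strictly fewer than $\#(\ell_1\cap\ell_3)$ times, contradicting minimality once more. Settling this last sub‑case — ruling out $\gamma^{*}\sim\ell_2$ when $\ell_2$ is simple — is where I expect the real work to be.
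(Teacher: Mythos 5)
Up to the point where the short combined curve turns out to be homotopic to $\ell_2$, your argument is essentially the paper's: reduce via Lemma~\ref{lem:l3-simple} to a simple $\ell_3$ (pushing $\ell_1^2,\ell_1^3$ off $\ell_1$), minimize $\card{\ell_1\cap\ell_3}$, form the loops $\ell_{1i}\cup\ell_{3j}$, and use Lemmas~\ref{lem:contractible-case} and~\ref{lem:homotopic-case} plus a length count to force the short candidate to be a third systole with fewer crossings --- unless it is homotopic to $\ell_2$. But that last case is not a residual technicality: it is exactly the content of the paper's Lemma~\ref{lem:l3-l1-twice-not-l2}, i.e.\ the heart of the proof, and you leave it unsettled (as you acknowledge). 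Your sketched fix --- express $\ell_3$ as a based product involving $\gamma_{aa}$, substitute a short simple representative of $\ell_2$, and untie a bigon --- does not work as stated: the homotopy $\gamma_{aa}\sim\ell_2$ is only a \emph{free} homotopy, so the substitution introduces an uncontrolled conjugating path that may cross $\ell_1$ arbitrarily often and destroys tightness and simplicity; there is no reason the resulting curve crosses $\ell_1$ fewer times. Worse, since you have already normalized $\ell_1$ and $\ell_3$ to be in minimal configuration, \emph{no} curve homotopic to $\ell_3$ can cross $\ell_1$ fewer times, so a contradiction of that form is unobtainable; the only way out is to show that the hypothesis ``all short candidates are homotopic to $\ell_2$'' is itself topologically impossible.

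That is what the paper does, and its argument is global rather than a local bigon exchange: it chooses the two intersections $u,v$ \emph{consecutive along $\ell_3$} (so $\ell_{3a}$ has interior disjoint from $\ell_1$), assumes that for \emph{every} such consecutive piece both loops $\ell_{1a}\cup\ell_{3a}$ and $\ell_{1b}\cup\ell_{3a}$ are homotopic to $\ell_2$, makes the two homotopic loops disjoint inside a tubular neighborhood so that they bound an annulus, deduces that $\ell_1\cup\ell_{3a}$ spans a one-holed torus, and then derives a contradiction by comparing algebraic intersection numbers with a parallel of that torus using the \emph{next} piece of $\ell_3$. Note that this needs the homotopic-to-$\ell_2$ hypothesis for all consecutive pieces of $\ell_3$, which your single-pair setup (shortest halves of $\ell_1$ and $\ell_3$, not necessarily consecutive intersections) does not even supply. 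So the proposal has a genuine gap precisely at the step that required the new idea; the preceding reductions, while correct in outline (with some small unchecked claims, e.g.\ that $\ell_1^3$ can serve as a third systole when $|\ell_3|=3|\ell_1|$), only reproduce the easier part of the paper's proof.
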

\begin{proof}
  The lemma is trivial if $\ell_3$ is homotopic to $\ell_1^2$ or $\ell_1^3$, since in both cases we can deform $\ell_3$ to make it disjoint from $\ell_1$. By Lemma~\ref{lem:l3-simple} we may thus assume that $\ell_3$ is simple. Among the simple curves that may serve as a third systole (i.e. have the same length as $\ell_3$ and are not homotopic to $\ell_1$ or $\ell_2$), we choose $\ell_3$ so as to minimize the number of crossings with $\ell_1$.  By way of contradiction suppose that $\ell_1$ and $\ell_3$ have at least two crossings $u,v$. Let $u,v$ cut $\ell_1$ into paths $\ell_{1a}, \ell_{1b}$, and cut $\ell_3$ into paths $\ell_{3a}, \ell_{3b}$. Consider the loop $\gamma:=\ell_{1a}\cup\ell_{3a}$. We can perturb it to satisfy $\card{\gamma\cap \ell_3} < \card{\ell_1\cap \ell_3}$ and $\card{\ell_1\cap \gamma} < \card{\ell_1\cap \ell_3}$. Hence, if $\gamma$ is either contractible or homotopic to $\ell_1$ then by Lemma~\ref{lem:contractible-case} or Lemma~\ref{lem:homotopic-case}, there exists a simple tight loop homotopic to $\ell_3$ with strictly fewer crossings with $\ell_1$. This however contradicts the minimal choice of $\ell_3$. Of course, the same conclusion holds if we replace $\gamma$ by $\ell_{1i}\cup\ell_{3j}$ for any $i,j\in \{a,b\}$. We can thus assume that $\ell_{1i}\cup\ell_{3j}$ is neither contractible nor homotopic to $\ell_1$ for any $i,j\in \{a,b\}$.
We now pick $u,v$ in $\ell_1\cap\ell_3$ so that they are consecutive\footnote{By this we mean that $u,v$ appear consecutive along $\ell_3$ and cut $\ell_3$ into paths $\ell_{3a}, \ell_{3b}$ so that the relative interior of $\ell_{3a}$ is disjoint from $\ell_1$.} on $\ell_{3a}$.
If we further assume that $\gamma = \ell_{1a}\cup\ell_{3a}$ is not homotopic to $\ell_2$, then we cannot have $|\ell_{1a}|> |\ell_{3b}|$ since in this case we would have $|\ell_{3b}\cup\ell_{1b}| < |\ell_1|$, implying that $\ell_{3b}\cup\ell_{1b}$ is contractible, which we excluded. It follows that $|\ell_{1a}|\leq |\ell_{3b}|$, whence $|\gamma|\leq |\ell_3|$. We can thus use the simple curve $\gamma$ as a third systole, but this again contradicts the minimal choice of $\ell_3$. It follows that $\gamma$ must be homotopic to $\ell_2$. Of course, the same conclusion holds for any $\gamma'= \ell_{1a}'\cup\ell_{3a}'$, where $\ell_{3a}'$ is a subpath of $\ell_3$ between two consecutive crossings with $\ell_1$, and $\ell_{1a}'$ is any of the two subpaths of $\ell_1$ cut by these crossings. However, the next lemma shows that this is impossible, proving that the existence of two crossings between $\ell_1$ and $\ell_3$ leads to a contradiction.
\end{proof}

\begin{lemma}\label{lem:l3-l1-twice-not-l2}
  Suppose $\ell_3$ is simple and intersects $\ell_1$ at least twice. Suppose in addition that  for any pair of
  crossings cutting $\ell_1$ into paths $\ell_{1a}, \ell_{1b}$, and cutting $\ell_3$ into paths $\ell_{3a}, \ell_{3b}$, the curves $\ell_{1i}\cup\ell_{3j}$, $i,j\in\{a,b\}$,  are neither contractible nor homotopic to $\ell_1$. Then, there must be crossings $u,v$, consecutive on $\ell_{3a}$, such that
$\ell_{1a}\cup\ell_{3a}$ or $\ell_{1b}\cup\ell_{3a}$ is not  homotopic to $\ell_2$. 
 \end{lemma}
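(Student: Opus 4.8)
The plan is to argue by contradiction: assume that for every pair $u,v$ of intersections of $\ell_1\cap\ell_3$ that are consecutive along $\ell_3$, with $\ell_{3a}$ the arc of $\ell_3$ between them (interior disjoint from $\ell_1$) and $\ell_{1a},\ell_{1b}$ the two arcs into which $u,v$ cut $\ell_1$, \emph{both} completions $\ell_{1a}\cup\ell_{3a}$ and $\ell_{1b}\cup\ell_{3a}$ are freely homotopic to $\ell_2$; I will derive a contradiction. First I would dispose of the case where $\ell_2$ is not simple, i.e.\ $\ell_2\sim\ell_1^2$ (Lemma~\ref{lem:secondsystole}): each completion is a union of an arc of the simple curve $\ell_1$ and an arc of the simple curve $\ell_3$ meeting only at their endpoints, hence after perturbation a \emph{simple} closed curve, and a simple closed curve cannot be freely homotopic to the proper power $\ell_1^2$ (curves homotopic to simple curves are primitive). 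So from now on $\ell_2$ is simple.

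Next I would extract homological and metric information. Label the intersections $x_1,\dots,x_k$ cyclically along $\ell_3$ ($k\ge 2$), write $p_i$ for the arc of $\ell_3$ from $x_i$ to $x_{i+1}$ and $\vec a_i,\vec b_i$ for the two arcs of $\ell_1$ between these points, so $\ell_1=\vec a_i\vec b_i$ in $\pi_1(S,x_i)$; the two completions of $p_i$ are $\gamma_i^{+}=\vec a_i\,p_i^{-1}$ and $\gamma_i^{-}=p_i\,\vec b_i$, each freely homotopic to $\ell_2$, with $\gamma_i^{+}\gamma_i^{-}=\ell_1$. Since a curve homotopic to $\ell_2$ has zero algebraic intersection with $\ell_2$, applying $\langle\,\cdot\,,\ell_2\rangle$ to this relation gives $\langle\ell_1,\ell_2\rangle=0$; as $\ell_1$ and $\ell_2$ also meet at most once (Lemma~\ref{lem:secondsystole}), they can be made disjoint. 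Writing $\ell_3=p_1\cdots p_k$ through the $\gamma_i^{+}$ and a power of $\ell_1$ and pairing with $\ell_2$ and with $\ell_1$ gives likewise $\langle\ell_2,\ell_3\rangle=0$ and $\langle\ell_1,\ell_3\rangle=0$; the latter forces $k$ even. In $H_1(S)$, $[\ell_1]=[\gamma_1^{+}]+[\gamma_1^{-}]\in\{0,\pm2[\ell_2]\}$, and since $[\ell_1]$ is $0$ or primitive (a simple closed curve class), $[\ell_1]=0$ whenever $[\ell_2]\neq0$; the same holds for $[\ell_3]$. For the metric bound, $|\vec a_i|+|\vec b_i|=|\ell_1|$, so the shorter arc has length $\le|\ell_1|/2$, and as the corresponding completion has length $\ge|\ell_2|\ge|\ell_1|$ we get $|p_i|\ge|\ell_2|-|\ell_1|/2\ge|\ell_1|/2$; summing, $|\ell_3|\ge k|\ell_1|/2$. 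On the other hand $\ell_1^2$ is a non-contractible curve of length $2|\ell_1|$ lying in neither the class of $\ell_1$ nor (as $\ell_2$ is simple) that of $\ell_2$, so $|\ell_3|\le2|\ell_1|$; hence $k\le4$, and with parity $k\in\{2,4\}$.

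If $k=4$ all inequalities above are equalities, so $|\vec a_i|=|\vec b_i|=|\ell_1|/2$ for every $i$; then $x_{i+1}$ is the point of $\ell_1$ antipodal to $x_i$ (at arclength $|\ell_1|/2$ in either direction along the simple curve $\ell_1$), so $x_3$, antipodal to $x_2$ which is antipodal to $x_1$, equals $x_1$ — contradicting that the $x_i$ are distinct. So $k=2$: $\ell_1$ and $\ell_3$ meet in exactly two points of opposite sign, the four curves $\vec a\cup p_1,\vec b\cup p_1,\vec a\cup p_2,\vec b\cup p_2$ are all freely homotopic to $\ell_2$, and $\ell_1,\ell_2,\ell_3$ are pairwise non-homotopic with $\ell_1,\ell_2$ disjoint. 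Here I would use the relation $\ell_1=\gamma_1^{+}\gamma_1^{-}=$ (a conjugate of $\ell_2^{\pm1}$)(a conjugate of $\ell_2^{\pm1}$), and the analogous expression for $\ell_3$, to show (when $[\ell_2]\neq0$) that $\ell_1=[c,w]$ and $\ell_3=[c',w']$ with $c,c'$ conjugate to $\ell_2$; hence $\ell_1$ and $\ell_3$ each bound an embedded one-holed torus into which $\ell_2$ is freely homotopic, and $p_1$ (resp.\ $p_2$) is forced to be the essential arc of one such torus (minimal position excludes inessential arcs). One then analyses the Euler-characteristic $-2$ neighbourhood of $\ell_1\cup\ell_3$ together with these two one-holed tori and the explicit annuli cobounded by the completion curves, and concludes that the two tori coincide, whence $\ell_1\sim\ell_3$ — a contradiction. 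The case $[\ell_2]=0$ ($\ell_2$ separating, so $\ell_1,\ell_3$ separating as well) is handled by the same scheme using the decomposition of $S$ along $\ell_2$.

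The hard part is exactly this final $k=2$ step: the homological and annular data are, on their own, all consistent (the two crossings of $\ell_1$ with a core of the annulus cobounded by two completions have opposite signs, etc.), so the contradiction genuinely requires exploiting that $\ell_1,\ell_2,\ell_3$ are three \emph{distinct} free homotopy classes in minimal position, together with the rigid way $\ell_2$ is realized as an arc of $\ell_1$ followed by an arc of $\ell_3$ — i.e.\ pinning down the subsurface type of the neighbourhood of $\ell_1\cup\ell_3$ and forcing two of the three curves to agree there.
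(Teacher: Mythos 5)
The reduction you carry out (discarding the case $\ell_2\sim\ell_1^2$ by primitivity, then using homology of the completions and length estimates to force at most two intersection points) is plausible, although it quietly imports facts the lemma does not assume — that $\ell_3$ is a tight third systole with $|\ell_3|\le 2|\ell_1|$, that $\ell_2$ is tight, and that all curves are in minimal position; these happen to hold where the lemma is invoked, but the statement itself is purely topological and the paper proves it without any metric input. The genuine gap is the $k=2$ endgame: everything after ``Here I would use the relation $\ell_1=\gamma_1^{+}\gamma_1^{-}$\dots'' is a plan rather than an argument. Two of its load-bearing claims are unproven: (i) that $\ell_1=[c,w]$ with $c$ conjugate to $\ell_2$ forces $\ell_1$ to bound an \emph{embedded} one-holed torus into which $\ell_2$ can be freely homotoped — this needs a commutator-length/genus argument, and even then the claim that $\ell_2$ (known only to be disjoint from $\ell_1$) lies on the genus-one side, and that $p_1$ is ``the'' essential arc of that torus, is not justified; and (ii) the concluding analysis (``the two tori coincide, whence $\ell_1\sim\ell_3$'') is precisely the contradiction the lemma needs and is not carried out. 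You acknowledge this yourself: since, as you note, the homological and annular data are internally consistent, the entire content of the lemma sits in the step you leave open, so the proposal does not prove the statement.

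For comparison, the paper's proof obtains the contradiction locally and without any counting or length argument. Fix one pair of consecutive intersections $u,v$ on $\ell_{3a}$: the two completions $\gamma=\ell_{1a}\cup\ell_{3a}$ and $\gamma'=\ell_{1b}\cup\ell_{3a}$ are simple and, by the contradiction hypothesis, homotopic; replacing the $\ell_{3a}$-portion of $\gamma'$ by a path in a tubular neighbourhood of $\gamma$ that crosses $\gamma$ minimally, one gets homotopic simple loops with algebraic intersection $0$ and at most one crossing, hence disjoint, so they cobound an annulus $A$, and together with the small disc $B$ this produces a one-holed torus $A\cup B$ whose parallel $\pi$ satisfies $\intForm{\gamma}{\pi}=\pm1$ (the case where $B$ lies inside $A$ being excluded because $\ell_1$ would bound a disc). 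Then the next subarc $p$ of $\ell_{3b}$ leaving $u$ or $v$ stays outside $A\cup B$ until it next meets $\ell_1$, so its completion $\gamma''$ — again homotopic to $\ell_2$ by the hypothesis — has $\intForm{\gamma''}{\pi}=0$, contradicting $\gamma\sim\gamma''$. This is essentially the one-holed-torus configuration you are reaching for, but it is built explicitly from the annulus between two disjoint homotopic completions rather than extracted from a commutator identity, and it closes the argument in the exact place where your proposal stops.
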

\begin{proof}
  Suppose for a contradiction that for any $u,v$ in $\ell_1\cap\ell_3$ that are consecutive on $\ell_{3a}$, the two curves  $\gamma:=\ell_{1a}\cup\ell_{3a}$ and $\gamma':=\ell_{1b}\cup\ell_{3a}$  are homotopic to $\ell_2$.
 Note that the three paths $\ell_{1a}, \ell_{1b}$ and $\ell_{3a}$ have disjoint interiors and only share their endpoints $u$ and $v$. 
We consider a tubular neighborhood $N$ of $\gamma$, sufficiently small so that $\ell_{1b}$ intersects $N$ along two paths $uu'$ and $vv'$. We then replace the subpath $u'u\cdot \ell_{3a}\cdot v'v$ of $\gamma'$ by a homotopic path $u'v'$ in $N$ that crosses $\gamma$ minimally, hence at most once. We still denote by $\gamma'$ the resulting cycle. Since $\gamma$ and $\gamma'$ are homotopic, their algebraic intersection number is null, and since they cross at most once, they must be disjoint. It follows that the simple loop $u'u\cdot \ell_{3a}\cdot vv'\cdot v'u'$ bounds a disk $B$ (in $N$) and that the homotopic and disjoint simple curves $\gamma$ and $\gamma'$ bound a cylinder $A$ in $S$. There are two possibilities. Refer to Figure~\ref{fig:homotopic_l2_case}.
\begin{figure}[ht]
  \centering
  \includesvg[\linewidth]{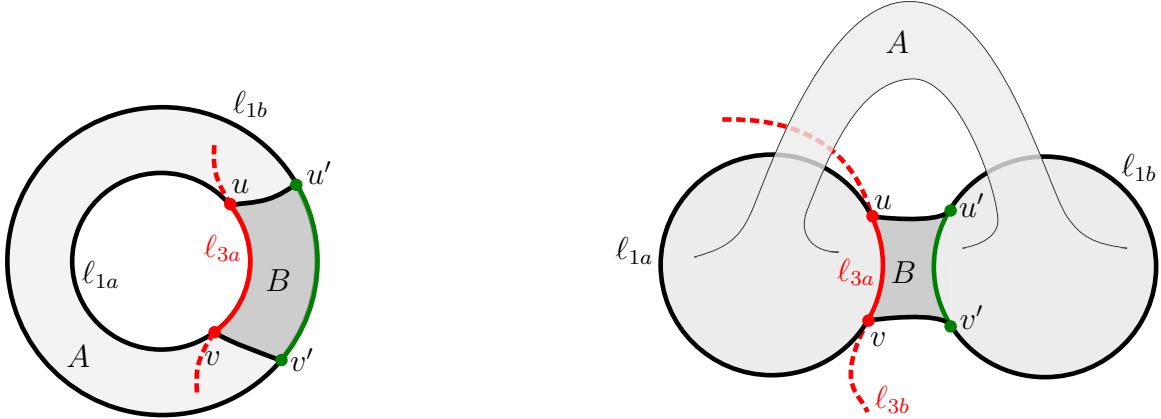}
  \caption{Illustration for the cases where $B$ lies respectively inside (left) or outside (right) $A$.}
  \label{fig:homotopic_l2_case}
\end{figure}

\begin{itemize}
 \item Either $B$ lies inside $A$. In this case $A\setminus B$ is a disk, implying that $\ell_{1a}\cup \ell_{1b}$ bounds a disk. However, this contradicts the fact that $\ell_1$ is non-trivial.
 \item Or $B$ lies outside $A$. In this case, since the crossings of $\ell_3$ and $\ell_1$ are transverse, it must be that $\ell_{3b}$ is locally outside $A\cup B$ at $u$ and $v$.
Let $p$ be the subpath of $\ell_{3b}$ between $v$ and $w$, the first crossing with $\ell_1$ when traversing $\ell_{3b}$ from $v$ to $u$. (Possibly $w=u$ if $\ell_3$ and $\ell_1$ have only two crossings.) See Figure~\ref{fig:parallel-pi}. The endpoints of $p$ divide $\ell_1$ into two subpaths, so that each of them defines a loop with $p$ which is homotopic to $\ell_2$ by our hypothesis. Denote by $\gamma''$ one of these two loops. Note that $\gamma''$  does not enter the interior of $A\cup B$, a torus with one boundary. In particular $\intForm{\gamma''}{\pi}$ is null, where $\pi$ is a parallel of this torus as pictured in Figure~\ref{fig:parallel-pi}.
     \begin{figure}[ht]
       \centering
       \includesvg[.6\linewidth]{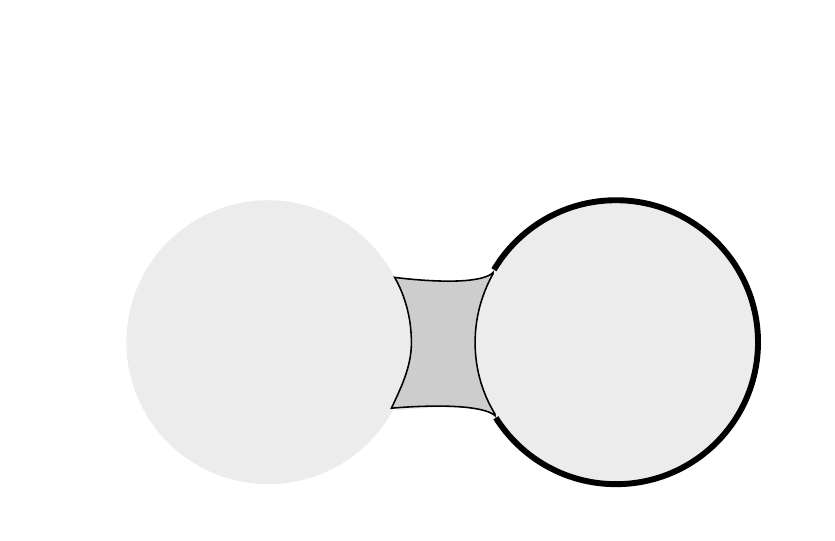}
       \caption{$\gamma''$ has algebraic intersection zero with the parallel $\pi$.}
       \label{fig:parallel-pi}
     \end{figure}
     However, this is in contradiction with the fact that $\intForm{\gamma}{\pi} = \pm 1$, while  $\intForm{\gamma''}{\pi} = \intForm{\gamma}{\pi}$, since $\gamma$ and $\gamma''$ are both homotopic to $\ell_2$ by our initial hypothesis.
\end{itemize}
 In conclusion, we have proved that it is not possible that every concatenation of a piece of $\ell_1$ with a piece of $\ell_3$ between consecutive crossings is homotopic to $\ell_2$.
\end{proof}
In the following, if $u$ and $v$ are two distinct crossings of $\ell_2$ and $\ell_3$ cutting them into $\ell_2 = \ell_{2a}\cup\ell_{2b}$ and $\ell_3 = \ell_{3a}\cup\ell_{3b}$, we put $\gamma_{ij}=\ell_{2i}\cup\ell_{3j}$ for $i,j\in \{a,b\}$.
\begin{lemma}\label{lem:l3-l2-twice}
  Suppose that $\ell_2$ is simple and crosses $\ell_1$ at most once. Let $\ell_3$ be a third systole that crosses $\ell_2$ minimally among the third systoles that cross $\ell_1$ at most once. Then, if $\ell_3$ is simple it crosses $\ell_2$ at most twice.
\end{lemma}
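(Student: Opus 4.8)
The plan is to argue by contradiction, following the same template used in the proofs of Lemmas~\ref{lem:secondsystole} and~\ref{lem:l3-l1-atmostonce}. Suppose $\ell_3$ is simple and crosses $\ell_2$ at least three times; among the third systoles that cross $\ell_1$ at most once, $\ell_3$ is chosen to cross $\ell_2$ minimally. Pick two intersections $u,v$ of $\ell_2$ and $\ell_3$ that are \emph{consecutive along $\ell_3$}, so that one of the two subpaths of $\ell_3$ they cut — call it $\ell_{3a}$ — has relative interior disjoint from $\ell_2$. These two points also cut $\ell_2$ into $\ell_{2a},\ell_{2b}$, and I get four candidate loops $\gamma_{ij} = \ell_{2i}\cup\ell_{3j}$, each of which can be perturbed so that $\card{\gamma_{ij}\cap\ell_2} < \card{\ell_2\cap\ell_3}$ and $\card{\gamma_{ij}\cap\ell_3} < \card{\ell_2\cap\ell_3}$.

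The core of the argument is a case analysis on the homotopy type of these loops. If some $\gamma_{ij}$ is contractible or homotopic to $\ell_2$, then by Lemma~\ref{lem:contractible-case} or Lemma~\ref{lem:homotopic-case} there is a simple tight loop homotopic to $\ell_3$ crossing $\ell_2$ strictly fewer times; I must check this replacement still crosses $\ell_1$ at most once — here I would use Lemma~\ref{lem:excess-intersection} applied with $\ell_1$ among the fixed simple tight loops, or observe that the bigon-switching that reduces $\ell_2$-intersections can be arranged not to increase $\ell_1$-intersections (since $\ell_1$ crosses $\ell_2$ at most once, the relevant bigons can be taken innermost and disjoint from $\ell_1$, or $\ell_1$ enters at most one of them and one orients the switch to avoid the excess). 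This contradicts minimality of $\ell_3$. So I may assume no $\gamma_{ij}$ is contractible or homotopic to $\ell_2$; in particular each $\gamma_{ij}$ is a non-contractible simple loop (after resolving its single possible self-touching at $u$ or $v$, as in Lemma~\ref{lem:l3-simple}), hence tight or replaceable by a tight representative of the same length, and none is homotopic to $\ell_2$. Next, length bookkeeping: choosing $\ell_{2i}$ the shorter of $\ell_{2a},\ell_{2b}$ and $\ell_{3j}$ the shorter of $\ell_{3a},\ell_{3b}$, the loop $\gamma_{ij}$ is no longer than $\ell_3$, so it would serve as a third systole provided it is also not homotopic to $\ell_1$ and crosses $\ell_1$ at most once. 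The remaining possibility is thus that $\gamma_{ij}$ is forced to be homotopic to $\ell_1$, or crosses $\ell_1$ more than once — and I expect this is precisely where the hypothesis ``$\ell_2$ crosses $\ell_1$ at most once'' gets used to rule it out: if $\gamma_{ij}\sim\ell_1$, then since $\ell_{3j}$ and $\ell_{2i}$ together homotope onto $\ell_1$ while $\ell_2\cap\ell_1$ has at most one point, an intersection-number/parity argument against a suitable dual curve (as in the second bullet of Lemma~\ref{lem:l3-l1-twice-not-l2}) yields a contradiction, or one simply swaps along the bigon to lower $\card{\ell_2\cap\ell_3}$ while keeping $\card{\ell_1\cap\ell_3}\le 1$.

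The main obstacle, I expect, is the control of $\ell_1$-intersections during the $\ell_2$-intersection-reducing surgeries: unlike the earlier lemmas, here the curve being optimized ($\ell_3$) is simultaneously constrained against \emph{two} fixed curves, so every time I switch $\ell_3$ inside an innermost bigon with $\ell_2$ I must certify that the number of crossings with $\ell_1$ does not go up past one. My approach to this is to keep $\ell_1,\ell_2$ fixed in minimal configuration throughout (they cross at most once, by the hypothesis inherited from Lemma~\ref{lem:secondsystole}), realize $\ell_3$ as a tight normal curve in minimal position with respect to the pair $\{\ell_1,\ell_2\}$ via Lemma~\ref{lem:excess-intersection}, and then note that an innermost bigon between $\ell_3$ and $\ell_2$ is an embedded disk whose boundary meets $\ell_1$ in a number of points that is even (it is a bigon, so $\ell_1$ enters and leaves); since $\card{\ell_1\cap\ell_2}\le 1$, the $\ell_2$-side of the bigon carries at most one such point, forcing $\ell_1$ to meet the $\ell_3$-side an odd or even number of times accordingly, and in the relevant case the switch removes exactly the $\ell_2$-crossings without creating new $\ell_1$-crossings. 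Once this bookkeeping is in place, the case analysis above closes uniformly, and I conclude that $\ell_3$ simple cannot cross $\ell_2$ three or more times, which is the assertion of the lemma.
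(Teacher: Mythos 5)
There is a genuine gap: your argument disposes of the easy cases ($\gamma_{ij}$ contractible or homotopic to $\ell_2$) essentially as the paper does, but the crucial case $\gamma_{ij}\sim\ell_1$ is exactly where the work lies, and you leave it to an unspecified ``intersection-number/parity argument'' or a bigon swap. Neither can work as stated. First, the configuration in which the short concatenations $\ell_{2i}\cup\ell_{3j}$ are homotopic to $\ell_1$ is \emph{not} contradictory in itself: Lemma~\ref{lem:l3-l2-exactlytwice} and Figure~\ref{fig:2crossings} show it genuinely occurs when $\ell_2$ and $\ell_3$ cross exactly twice, so any argument ruling it out must use the assumption of at least three crossings in an essential way, which your sketch never does. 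Second, there is no bigon to swap: the minimal choice of $\ell_3$ already puts $\ell_2$ and $\ell_3$ in minimal position, so the three or more crossings you are trying to kill are not excess intersections, and Lemma~\ref{lem:excess-intersection}/\ref{lem:homotopic-case} cannot remove them once $\gamma_{ij}\sim\ell_1$. What the paper actually does in this case is: (i) show (Claim~\ref{clm:l1-simple}) that a $\gamma_{ij}$ homotopic to $\ell_1$ is simple and, having algebraic intersection zero with $\ell_1$, is disjoint from $\ell_1$; (ii) split on whether $|\ell_{2b}|\le|\ell_{3b}|$ to conclude that either $\ell_2$ or $\ell_3$ is disjoint from $\ell_1$ and bounds a genus one subsurface containing $\ell_1$, via the construction of Lemma~\ref{lem:l3-l1-twice-not-l2}; and (iii) use the \emph{third} intersection to produce a second candidate $\gamma'_{aa}$, built from the adjacent piece of $\ell_3$ lying outside that subsurface, which by the same minimality argument must also be homotopic to $\ell_1$, yet has algebraic intersection $0$ with a parallel $\pi$ of the subsurface while $\gamma_{aa}$ has $\pm1$ --- the contradiction. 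None of steps (ii)--(iii) appears in your proposal, and your appeal to ``the second bullet of Lemma~\ref{lem:l3-l1-twice-not-l2}'' does not substitute for them, since that lemma concerns a different hypothesis (all concatenations homotopic to $\ell_2$).

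A secondary issue is your length bookkeeping. You take $\ell_{3j}$ to be the shorter of $\ell_{3a},\ell_{3b}$, but if that shorter piece is the one containing the remaining crossings with $\ell_2$ and the (possible) crossing with $\ell_1$, the resulting $\gamma_{ij}$ may cross $\ell_1$ twice and so cannot compete with $\ell_3$ in the minimality class you fixed. The paper avoids this by choosing $u,v$ consecutive along $\ell_3$ so that $\ell_{3a}$ is disjoint from both $\ell_2$ (in its interior) \emph{and} $\ell_1$, and satisfies $|\ell_{3a}|\le|\ell_{3b}|$; such a piece exists precisely because there are at least three crossings with $\ell_2$ and at most one with $\ell_1$. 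Your worry about preserving $\card{\ell_1\cap\ell_3}\le 1$ during the reduction steps is legitimate and is handled in the paper by invoking Lemma~\ref{lem:excess-intersection} with $\ell_1$ kept in the fixed family, as you suggest; that part of your proposal is fine, but it is the easy part.
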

 \begin{proof}
Suppose that the third systole $\ell_3$ chosen as in the lemma is simple and crosses $\ell_2$ at least twice. Let $u,v$  be two crossings of $\ell_2$ and $\ell_3$ cutting $\ell_2$ into paths $\ell_{2a}, \ell_{2b}$, and cutting $\ell_3$ into paths $\ell_{3a}, \ell_{3b}$. The loop $\gamma_{aa}:=\ell_{2a}\cup\ell_{3a}$ can be perturbed to satisfy $\card{\gamma_{aa}\cap \ell_3} < \card{\ell_2\cap \ell_3}$. Hence, if $\gamma_{aa}$ is either contractible or homotopic to $\ell_2$, then by Lemma~\ref{lem:contractible-case} and Lemma~\ref{lem:excess-intersection} (respectively), there exists a simple tight loop homotopic to $\ell_3$ with strictly fewer crossings with $\ell_2$ and at most one crossing with $\ell_1$. This however contradicts the minimal choice of $\ell_3$. The same conclusion holds if we replace $\gamma_{aa}$ by $\gamma_{ij}:=\ell_{2i}\cup\ell_{3j}$ for any $i,j\in \{a,b\}$. We thus have
   \begin{claim}\label{clm:not-1-not-l2}
     $\gamma_{ij}$ is neither contractible nor homotopic to $\ell_2$ for any $i,j\in \{a,b\}$. 
   \end{claim}
   Since $\ell_2$ and $\ell_3$ may have more than two crossings the curve $\gamma_{ij}$ may be self-crossing. We can however assert that it is simple whenever it is homotopic to $\ell_1$.
   \begin{claim}\label{clm:l1-simple}
     If $\gamma_{ij}$ is homotopic to $\ell_1$, then it must be simple. Moreover, $\gamma_{ij}$ and $\ell_1$ are either disjoint or have at least two crossings.
   \end{claim}
 To prove the claim we may proceed similarly to the proof of Lemma~\ref{lem:excess-intersection}.
   Since $\gamma_{ij}$ is homotopic to the simple curve $\ell_1$, it must have an embedded bigon or monogon whenever it self-intersects. Such a bigon has one side supported by $\ell_2$ and the other side supported by $\ell_3$ since $\ell_{2i}$ and $\ell_{3j}$ are simple paths. Similarly, the only way $\gamma_{ij}$ can have a monogon is when this monogon is actually a bigon of $\ell_2$ and $\ell_3$. In both cases, we can switch the $\ell_3$ side along the $\ell_2$ side of the bigon to decrease the number of crossings of $\ell_2$ and $\ell_3$. This leaves $\ell_3$ simple without changing its number of crossings with $\ell_1$. This however contradicts the minimal choice of $\ell_3$. It follows that $\gamma_{ij}$ is simple. Moreover, having $\gamma_{ij}$ homotopic to $\ell_1$ implies that their algebraic intersection number is zero, preventing them from having exactly one crossing. This ends the proof of the claim.

   We now suppose for a contradiction that $\ell_2$ and $\ell_3$ have at least three crossings. Since $\ell_3$ cuts $\ell_1$ at most once, we can find two crossings $u$ and $v$ of $\ell_2$ and $\ell_3$ so that
   \begin{itemize}
   \item they appear consecutive on a subpath $\ell_{3a}$ of $\ell_3$,
     \item $\ell_{3a}$ does not cross $\ell_1$, and
     \item $|\ell_{3a}|\leq |\ell_{3b}|$, where as usual $\ell_3=\ell_{3a}\cup \ell_{3b}$.
     \end{itemize}
     The crossings $u$ and $v$ also cut $\ell_2$ into $\ell_2=\ell_{2a}\cup \ell_{2b}$, and we assume without loss of generality that  $|\ell_{2a}| \leq |\ell_{2b}|$.
     It follows that $\gamma_{aa}=\ell_{2a}\cup \ell_{3a}$ is a simple curve crossing $\ell_1$ at most once and satisfying $|\gamma_{aa}|\leq |\ell_3|$ and $\card{\gamma_{aa}\cap \ell_2} < \card{\ell_3\cap \ell_2}$ after perturbation. By Claim~\ref{clm:not-1-not-l2}, the curve $\gamma_{aa}$ is neither contractible nor homotopic to $\ell_2$. It is thus homotopic to $\ell_1$, for otherwise it could serve as a third systole with fewer crossings with $\ell_2$ than $\ell_3$, which is in contradiction with the choice of $\ell_3$. We infer by Claim~\ref{clm:l1-simple} that $\gamma_{aa}$ is disjoint from $\ell_1$. There are two possibilities according to whether $\ell_{2b}$ is shorter than $\ell_{3b}$ or not.
     \begin{itemize}
     \item If $|\ell_{2b}|\leq |\ell_{3b}|$, then $|\gamma_{ba}|\leq |\ell_{3}|$. An argument analogous to the above one for $\gamma_{aa}$ implies that $\gamma_{ba}$ and $\ell_1$ are homotopic and disjoint. We infer that $\ell_2$ is also disjoint from $\ell_1$. Arguing as in the proof of Lemma~\ref{lem:l3-l1-twice-not-l2} we deduce that $\ell_2$ bounds a genus one subsurface $S_1$ containing $\ell_1$ and $\ell_{3a}$. See right Figure~\ref{fig:homotopic_l2_case}, where the roles of $\ell_1$  and  $\ell_2$ should be exchanged.

       We then consider the shorter of the two components of $\ell_3\setminus \ell_2$ following or preceding $\ell_{3a}$ along $\ell_3$. Denote the closure of this component by $\ell'_{3a}$ and denote by $\ell'_{3b}$ the closure of $\ell_3\setminus \ell'_{3a}$. We thus have $|\ell'_{3a}|\leq |\ell'_{3b}|$. Note that the relative interior of $\ell'_{3a}$ lies outside $S_1$. The common endpoints of $\ell'_{3a}$ and $\ell'_{3b}$ also cut $\ell_2$ into $\ell_2=\ell'_{2a}\cup \ell'_{2b}$ so that  $|\ell'_{2a}| \leq |\ell'_{2b}|$. The curve $\gamma'_{aa} = \ell'_{2a}\cup\ell'_{3a}$ is thus no longer than $\ell_3$, cuts $\ell_1$ at most once and, after perturbation, has strictly fewer crossings with $\ell_2$ than $\ell_3$. By Claim~\ref{clm:not-1-not-l2} and the minimal choice of $\ell_3$ we infer that $\gamma'_{aa}$ is homotopic to $\ell_1$. Considering a parallel $\pi$ of $S_1$ as in Figure~\ref{fig:parallel-pi} (where $\ell_1$ should be replaced by $\ell_2$), we now have that $\pi$ is disjoint from $\gamma'_{aa}$ but cuts $\gamma_{aa}$ once. This however leads to a contradiction since the homotopic curves $\gamma_{aa}$ and $\gamma'_{aa}$ should define the same intersection form.
       
     \item If $|\ell_{2b}|>|\ell_{3b}|$, then $|\gamma_{ab}|<|\ell_2|$. By Claim~\ref{clm:not-1-not-l2}, this directly implies that $\gamma_{ab}$ is homotopic to $\ell_1$. In turn, Claim~\ref{clm:l1-simple} implies that $\gamma_{ab}$ is simple and disjoint from $\ell_1$. Together with the hypothesis that $\ell_{3a}$ is disjoint from $\ell_1$, we infer that $\ell_3$ and $\ell_1$ are disjoint. Similarly to the above case, we deduce that $\ell_3$ bounds a genus one subsurface $S_2$ containing $\ell_1$ and $\ell_{2a}$. We can now repeat verbatim the same argument as above, exchanging the roles of $\ell_2$ and $\ell_3$, to conclude that a curve $\gamma'_{aa}$ is homotopic to $\ell_1$ but disjoint from the interior of $S_2$. Then, the algebraic intersection number of a parallel of $S_2$ with $\gamma'_{aa}$ and with $\gamma_{aa}$ would be different, leading to a contradiction.
     \end{itemize}
     Since none of the above possibilities can occur, we conclude that $\ell_3$ and $\ell_2$ have at most two crossings.
\end{proof}
\begin{lemma}\label{lem:l3-l2-exactlytwice}
  Suppose that $\ell_1, \ell_2, \ell_3$ are as in Lemma~\ref{lem:l3-l2-twice} and suppose furthermore that $\ell_3$ is simple and crosses $\ell_2$ exactly twice. Then, $\ell_1$ is disjoint from $\ell_2$ and $\ell_3$, and  is contained in a genus one subsurface bounded by either $\ell_2$ or $\ell_3$.  Moreover, $\ell_2$ and $\ell_3$ being cut by their two crossing points into $\ell_2 = \ell_{2a}\cup\ell_{2b}$ and $\ell_3 = \ell_{3a}\cup\ell_{3b}$, if $\ell_2$ bounds then $\gamma_{aa}$ and $\gamma_{ba}$ are both homotopic to $\ell_1$  (assuming $\ell_{3a}$ lies in the genus one subsurface), while if $\ell_3$ bounds then $\gamma_{aa}$ and $\gamma_{ab}$ are both homotopic to $\ell_1$  (assuming $\ell_{2a}$ lies in the genus one subsurface).
\end{lemma}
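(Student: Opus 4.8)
The plan is to replay the proof of Lemma~\ref{lem:l3-l2-twice} under the additional hypothesis that $\ell_2$ and $\ell_3$ meet in \emph{exactly} two points $u,v$. These cut $\ell_2=\ell_{2a}\cup\ell_{2b}$ and $\ell_3=\ell_{3a}\cup\ell_{3b}$, and since $\ell_2\cap\ell_3=\{u,v\}$ the four subpaths have pairwise disjoint relative interiors, so every $\gamma_{ij}=\ell_{2i}\cup\ell_{3j}$ is a \emph{simple} closed curve and $|\gamma_{aa}|+|\gamma_{bb}|=|\gamma_{ab}|+|\gamma_{ba}|=|\ell_2|+|\ell_3|\le 2|\ell_3|$. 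First I would transport from the proof of Lemma~\ref{lem:l3-l2-twice} the facts that, after a suitable perturbation, each $\gamma_{ij}$ crosses $\ell_2$ at most once, and hence, by the minimal choice of $\ell_3$, that no $\gamma_{ij}$ is contractible or homotopic to $\ell_2$. Moreover, since $\intForm{\ell_1}{\ell_1}=0$, every curve homotopic to $\ell_1$ must cross $\ell_1$ an \emph{even} number of times.

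Call $\gamma_{ij}$ \emph{admissible} if $|\gamma_{ij}|\le|\ell_3|$ and $\gamma_{ij}$ crosses $\ell_1$ at most once. By the above, an admissible $\gamma_{ij}$ is necessarily homotopic to $\ell_1$: being simple, non-contractible, not homotopic to $\ell_2$, no longer than $\ell_3$, and meeting $\ell_1$ at most once, it would otherwise be a third systole meeting $\ell_2$ fewer than twice, contradicting the choice of $\ell_3$; and then, crossing $\ell_1$ both at most once and an even number of times, it is in fact disjoint from $\ell_1$. Now label so that $\ell_{2a},\ell_{3a}$ are the shorter halves of $\ell_2,\ell_3$, so that $\gamma_{aa}$ is the shortest of the four and $|\gamma_{aa}|\le\tfrac12(|\ell_2|+|\ell_3|)\le|\ell_3|$. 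The number of crossings of $\gamma_{aa}$ with $\ell_1$ equals the number for $\ell_{2a}$ plus the number for $\ell_{3a}$, while $\ell_2$ and $\ell_3$ each cross $\ell_1$ at most once. If exactly one of those (at most one) crossings lands on a short half, then $\gamma_{aa}$ crosses $\ell_1$ once and is admissible, hence homotopic to $\ell_1$ --- impossible, since then it would cross $\ell_1$ an even number of times. So either no short half carries an $\ell_1$-crossing (the \emph{generic case}) or both short halves do (the \emph{hard case}).

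In the generic case $\gamma_{aa}$ is disjoint from $\ell_1$ and admissible, hence homotopic to $\ell_1$. Splitting on whether $|\ell_{2b}|\le|\ell_{3b}|$, exactly as in the two bullets of Lemma~\ref{lem:l3-l2-twice}, yields a \emph{second} curve --- $\gamma_{ba}$ (which forces $\ell_{2b}$, and hence $\ell_2$, to be disjoint from $\ell_1$) or $\gamma_{ab}$ (which forces $\ell_{3b}$, and hence $\ell_3$, to be disjoint from $\ell_1$) --- that is likewise homotopic to and disjoint from $\ell_1$. Then two of the $\gamma_{ij}$ sharing a common half, namely $\gamma_{aa}$ and $\gamma_{ba}$ (sharing $\ell_{3a}$) or $\gamma_{aa}$ and $\gamma_{ab}$ (sharing $\ell_{2a}$), are simultaneously homotopic to $\ell_1$ and disjoint from it, and I would finish with the topological synthesis of Lemma~\ref{lem:l3-l1-twice-not-l2}: these two curves being freely homotopic, simple, and (after perturbation) disjoint, they cobound an annulus $A$; pushing the shared half into a thin disk $B$ alongside the other two halves, $B$ must lie outside $A$, for otherwise $\ell_{2a}\cup\ell_{2b}$ would bound a disk and $\ell_2$ would be inessential. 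Hence $A\cup B$ is a genus one subsurface whose boundary is $\ell_2$ and whose interior contains $\ell_1$ and $\ell_{3a}$; then $\ell_{3b}$ lies outside $A\cup B$, so $\ell_1$ is disjoint from both $\ell_2$ and $\ell_3$, it is contained in the claimed genus one subsurface, and $\gamma_{aa},\gamma_{ba}$ are homotopic to $\ell_1$ with $\ell_{3a}$ inside. The subcase where $\ell_3$ bounds is symmetric, with $\gamma_{aa},\gamma_{ab}$ homotopic to $\ell_1$ and $\ell_{2a}$ inside.

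The hard case --- both $\ell_2$ and $\ell_3$ crossing $\ell_1$, each on its shorter half --- is where I expect the main obstacle. There $\gamma_{aa}$ crosses $\ell_1$ twice and is not directly usable, the parity count rules $\gamma_{ab}$ and $\gamma_{ba}$ (which cross $\ell_1$ once) out of the class of $\ell_1$, and the only other $\gamma_{ij}$ that could be homotopic to $\ell_1$ is $\gamma_{bb}$, which shares no half with $\gamma_{aa}$, so the synthesis above does not apply directly. I would instead attack it with the homological device behind Lemmas~\ref{lem:l3-l1-twice-not-l2} and~\ref{lem:l3-l2-twice}: writing $\intForm{\ell_2}{\ell_1},\intForm{\ell_3}{\ell_1}\in\{+1,-1\}$ and using that $\gamma_{bb}$ has zero algebraic intersection with $\ell_1$, one finds $\intForm{\gamma_{aa}}{\ell_1}=\intForm{\ell_2}{\ell_1}-\intForm{\ell_3}{\ell_1}\in\{-2,0,2\}$. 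When the two signs disagree this equals $\pm2$, so $\gamma_{aa}$ is not homotopic to $\ell_1$ and is geometrically forced to cross $\ell_1$ twice; comparing it with a parallel $\pi$ of a once-punctured-torus neighbourhood of $\ell_1\cup\ell_2$ (or of $\ell_1\cup\ell_3$), as in the second bullet of Lemma~\ref{lem:l3-l2-twice}, then gives contradictory intersection forms. When the signs agree, $\gamma_{aa}$ has zero algebraic intersection with $\ell_1$ and, after tightening, either becomes disjoint from $\ell_1$ --- and a length estimate using $|\ell_2|\le|\ell_3|$ shows it is then admissible, which returns us to the generic case --- or retains two oppositely-signed crossings with $\ell_1$, which again feeds the parallel-curve argument. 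Making this last reduction precise, keeping every curve simple and every perturbation length-preserving, and excluding the residual two-crossing configuration, is where the genuine work lies; the remainder follows the template of Lemmas~\ref{lem:l3-l2-twice} and~\ref{lem:l3-l1-twice-not-l2}.
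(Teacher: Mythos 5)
Your generic case is sound and essentially reproduces the paper's argument: the labelling with both ``$a$''-halves disjoint from $\ell_1$, the deduction that every $\gamma_{ij}$ of length at most $|\ell_3|$ meeting $\ell_1$ at most once must be homotopic to, and then disjoint from, $\ell_1$ (the paper gets disjointness from Claim~\ref{clm:l1-simple} rather than your mod~2 parity count, but the two are equivalent here), and the genus-one synthesis borrowed from Lemma~\ref{lem:l3-l1-twice-not-l2}. The small assertions you leave implicit there ($\ell_1$ lies \emph{inside} the punctured torus because it is homotopic to $\gamma_{aa}$, which meets a parallel once, and $\ell_{3b}$ stays outside because its interior avoids the boundary $\ell_2$ and it exits at $u,v$) are at the level of detail of the paper itself.

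The genuine gap is your ``hard case'', which you acknowledge but do not close: when $\ell_1$ crosses both $\ell_2$ and $\ell_3$, each on its shorter half, your admissibility machinery gives you nothing usable ($\gamma_{aa}$ meets $\ell_1$ twice, $\gamma_{ab}$ and $\gamma_{ba}$ are excluded by parity, and $\gamma_{bb}$ shares no half with $\gamma_{aa}$), and the sketched homological attack is not a proof: tightening $\gamma_{aa}$ destroys its decomposition into halves of $\ell_2$ and $\ell_3$, and the parallel-curve contradictions you invoke presuppose a genus-one subsurface bounded by $\ell_2$ or $\ell_3$, which is exactly what is not yet available. The paper avoids this trap by not labelling by length at all: it labels so that $\ell_{2a}\cap\ell_1=\ell_{3a}\cap\ell_1=\emptyset$ and proves the pivotal Claim~\ref{clm:not-3-among-4}, that no three of the four curves $\gamma_{ij}$ can simultaneously have length at most $|\ell_3|$ (the proof of that claim is itself a chain of ``homotopic to $\ell_1$, hence the adjacent half is disjoint from $\ell_1$'' steps ending in a parallel-curve intersection-number contradiction). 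Combined with $|\gamma_{ij}|+|\gamma_{\bar\imath\bar\jmath}|=|\ell_2|+|\ell_3|$, this forces one of four explicit length configurations, and in each of them the two curves shorter than $|\ell_2|$ are homotopic to and disjoint from $\ell_1$ and share a common half, so the synthesis applies and covers all four halves; in particular the configuration you call hard is shown never to occur. Without an analogue of Claim~\ref{clm:not-3-among-4} (or some other argument excluding that configuration), your proposal does not establish the lemma.
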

\begin{proof}
Since $\ell_1$ cuts each of $\ell_2$ and $\ell_3$ at most once, we may assume that $\ell_{2a}\cap \ell_1=\ell_{3a}\cap \ell_1=\emptyset$. As already noticed, the curve $\ell_2$ has strictly more crossings with $\ell_3$ than with any of the curves $\gamma_{ij}=\ell_{2i}\cup\ell_{3j}$ for $i,j\in \{a,b\}$. Also note that Claim~\ref{clm:not-1-not-l2} and Claim~\ref{clm:l1-simple} in the proof of Lemma~\ref{lem:l3-l2-twice} apply, since $\ell_2$ and $\ell_3$ have two crossings. For convenience, we  set $\bar{a}=b$ and $\bar{b}=a$. 
  \begin{claim}\label{clm:not-3-among-4}
    For any $i,j\in\{a,b\}$, we cannot have simultaneously $|\gamma_{i\bar{j}}|, |\gamma_{\bar{i}j}|, |\gamma_{\bar{i}\bar{j}}| \leq |\ell_3|$.
  \end{claim}
  If $i=a$, then $\card{\gamma_{i\bar{j}}\cap \ell_1} \leq 1$. It follows by Claim~\ref{clm:not-1-not-l2} and the minimal choice of $\ell_3$ that $\gamma_{i\bar{j}}$ is homotopic to $\ell_1$. Then, Claim~\ref{clm:l1-simple} implies $\ell_{3\bar{j}}\cap\ell_1=\emptyset$. We infer by the same kind of arguments that $\gamma_{\bar{i}\bar{j}}$ is homotopic to $\ell_1$. In turn, we deduce that $\ell_{2\bar{i}}\cap\ell_1=\emptyset$, whence $\gamma_{\bar{i}j}$ is also homotopic to $\ell_1$. From $\gamma_{i\bar{j}}$ and $\gamma_{\bar{i}\bar{j}}$ being both homotopic to $\ell_1$, we infer that $\ell_2$ bounds a genus one subsurface $S'$ containing $\ell_1$ and $\ell_{3\bar{j}}$ and whose interior is disjoint from $\gamma_{\bar{i}j}$. This however leads to a contradiction by considering the algebraic intersection number of these three curves with a parallel of $S'$. The case when $i=b$ can be dealt with similarly, also leading to a contradiction. This ends the proof of Claim~\ref{clm:not-3-among-4}.

  Observing that $|\gamma_{ij}|+|\gamma_{\bar{i}\bar{j}}|= |\ell_2|+|\ell_3|$, we deduce from Claim~\ref{clm:not-3-among-4} that either
  \begin{enumerate}
  \item\label{it:1} $|\gamma_{aa}|, |\gamma_{ab}| > |\ell_3|$ and $|\gamma_{bb}|, |\gamma_{ba}| < |\ell_2|$, or
  \item\label{it:2} $|\gamma_{aa}|, |\gamma_{ba}| > |\ell_3|$ and $|\gamma_{bb}|, |\gamma_{ab}| < |\ell_2|$, or
  \item\label{it:3} $|\gamma_{bb}|, |\gamma_{ab}| > |\ell_3|$ and $|\gamma_{aa}|, |\gamma_{ba}| < |\ell_2|$, or
  \item\label{it:4} $|\gamma_{bb}|, |\gamma_{ba}| > |\ell_3|$ and $|\gamma_{aa}|, |\gamma_{ab}| < |\ell_2|$.
  \end{enumerate}
  Claim~\ref{clm:not-1-not-l2} and Claim~\ref{clm:l1-simple} imply that each curve $\gamma_{ij}$ shorter than $\ell_2$ must be homotopic to $\ell_1$ and disjoint from $\ell_1$. In Cases \ref{it:1} and \ref{it:4} we directly infer that $\ell_3$ is disjoint from $\ell_1$. Exchanging the roles of $\ell_2$ and $\ell_1$, we deduce from the proof of Lemma~\ref{lem:l3-l1-twice-not-l2} that $\ell_3$ bounds a genus one subsurface $S'$ containing $\ell_1$. In Case~\ref{it:1}  we also deduce that  $S'$ contains $\ell_{2b}$ and that $\ell_1$ is disjoint from $\ell_{2b}$. It follows that $\ell_1$ is also disjoint from $\ell_2$. In Case~\ref{it:4} we get that $\ell_{2a}$ lies inside $S'$. Since $\ell_{2b}$ does not intersect the interior of $S'$, we also deduce that $\ell_1$ is disjoint from $\ell_2$. The situation is depicted in Figure~\ref{fig:2crossings}.
  \begin{figure}[ht]
    \centering
    \includegraphics[width=.85\linewidth]{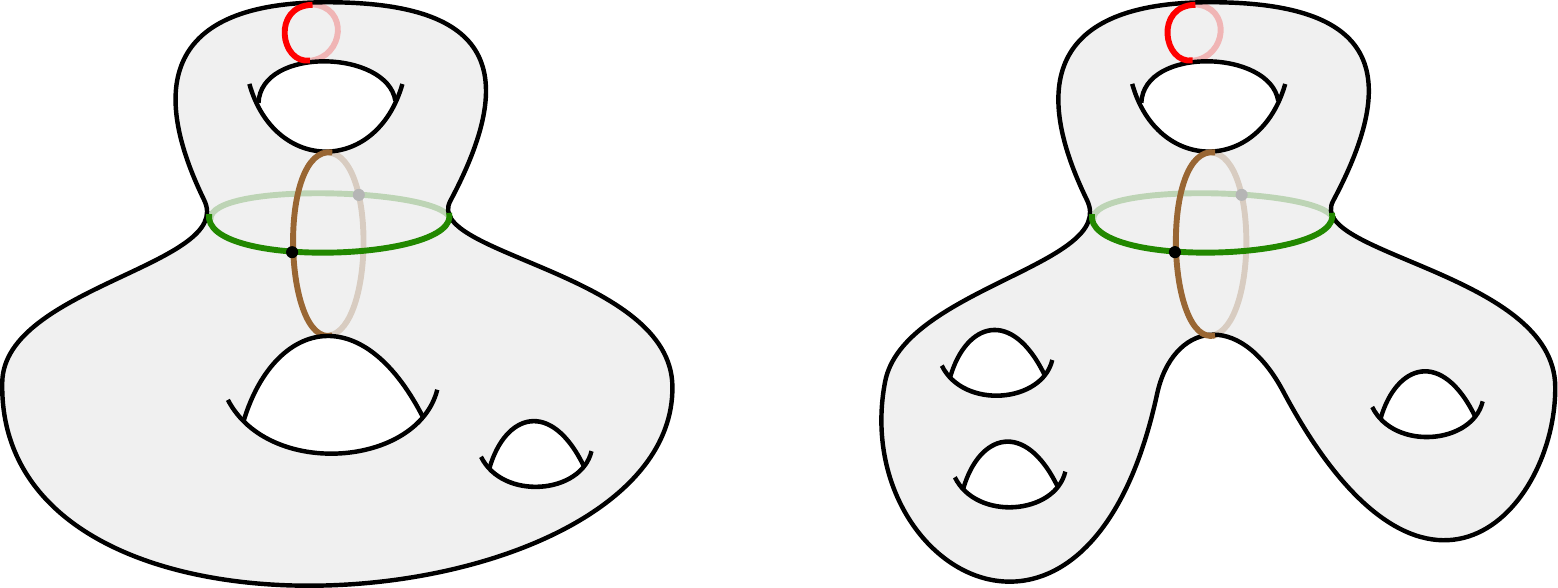}
    \caption{$\ell_2$ and $\ell_3$ have exactly two crossings and are both disjoint from the red curve $\ell_1$. Depending on their respective lengths, each of the crossing curves may play the role of $\ell_2$ or $\ell_3$. There are two configurations according to whether the union of the red and brown curves disconnect the surface or not.}
    \label{fig:2crossings}
  \end{figure}
  Cases \ref{it:2} and \ref{it:3} can be dealt with similarly to conclude that $\ell_1$ is disjoint from both $\ell_2$ and $\ell_3$. 
  We now see \emph{a posteriori} that $\ell_{2a}$ and $\ell_{2b}$ can be interchanged and similarly for $\ell_{3a}$ and $\ell_{3b}$. It is thus sufficient to consider Cases~\ref{it:3} and~\ref{it:4} in the Lemma.
\end{proof}

\section{Computing the second and third systoles}\label{sec:computingsecondthirdsystole}

Before we turn to the effective computation of the second and third systoles, let us state a simple lemma that will prove useful in the proof of Theorem~\ref{thm:secondsystole}.

\begin{lemma}\label{lem:not-homotopic-to-boundaries}
  Let $S$ be a surface with boundary and let $B$ be a subset consisting of $k$ boundary components. A simple curve $\gamma$ on $S$ is not homotopic to any curve in $B$ or to a point if and only if
  \begin{enumerate}
  \item $\gamma$ is non-contractible in the surface $\bar{S}$ obtained by capping off the boundary curves in $B$ with disks, or
  \item\label{it:boundaries2} $\gamma$ is essential in $S$.
  \end{enumerate}
   When $k=1$, Condition~\ref{it:boundaries2} is redundant.
\end{lemma}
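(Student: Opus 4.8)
The plan is to prove both directions of the equivalence, treating the two conditions as a disjunction, and using the fact that $\gamma$ is simple throughout so that a non-contractible image is automatically primitive (or null-homologous/separating considerations are available when needed). Throughout, let $\bar S$ denote the surface obtained from $S$ by capping off each boundary curve in $B$ with a disk, and let $\iota\colon S\hookrightarrow \bar S$ be the inclusion.

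\textbf{Forward direction.} Suppose $\gamma$ is simple, non-null-homotopic in $S$, and not freely homotopic in $S$ to any curve of $B$. I would argue the contrapositive of the disjunction: assume $\gamma$ is contractible in $\bar S$ \emph{and} inessential in $S$, and derive that $\gamma$ is homotopic to a point or to a curve of $B$. Since $\gamma$ is simple and not null-homotopic in $S$, it is either essential in $S$ or homotopic in $S$ to a boundary curve of $S$. If it is essential in $S$ we are done with that disjunct (Condition~\ref{it:boundaries2} holds), so assume $\gamma$ is homotopic in $S$ to some boundary curve $\delta$ of $S$. If $\delta\notin B$, then $\gamma$ is homotopic to a curve of $\partial\bar S$, hence non-contractible in $\bar S$ (a boundary curve of a surface with nonempty boundary and negative Euler characteristic is non-contractible; the only exception would be if $\bar S$ is a disk, which is excluded since $S$ has at least one boundary curve not in $B$ and capping the rest cannot turn $S$ into a disk unless $\gamma$ was already trivial) --- so Condition~1 holds. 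If $\delta\in B$, then after capping $\delta$ with a disk, $\gamma\sim\delta$ becomes contractible in $\bar S$: this is exactly the case we must exclude, but in this case $\gamma$ \emph{is} homotopic to a curve in $B$, contradicting our hypothesis. Thus at least one of Conditions~1 and~\ref{it:boundaries2} must hold.

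\textbf{Reverse direction.} Conversely, suppose Condition~1 or Condition~\ref{it:boundaries2} holds; I must show $\gamma$ is neither contractible in $S$ nor homotopic in $S$ to a curve of $B$. First, $\gamma$ is non-contractible in $S$: if Condition~\ref{it:boundaries2} holds this is part of the definition of essential; if Condition~1 holds, contractibility in $S$ would imply contractibility in $\bar S$ via $\iota$, contradicting Condition~1. Second, $\gamma$ is not homotopic in $S$ to any $\delta\in B$: if it were, then capping off $\delta$ makes $\gamma$ contractible in $\bar S$, contradicting Condition~1; and it also makes $\gamma$ inessential in $S$ (homotopic to a boundary curve), contradicting Condition~\ref{it:boundaries2}. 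Hence neither condition is compatible with $\gamma$ being homotopic to a curve of $B$, which is what we wanted.

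\textbf{The $k=1$ statement.} When $k=1$, write $B=\{\delta\}$ and suppose $\gamma$ is essential in $S$; I claim $\gamma$ is automatically non-contractible in $\bar S$, so Condition~\ref{it:boundaries2} implies Condition~1 and is therefore redundant in the disjunction. Indeed, capping off the single curve $\delta$ glues a disk $D$ along $\delta$; if $\gamma$ were contractible in $\bar S=S\cup_\delta D$, then since $\gamma$ is a simple closed curve in $S$ disjoint from $D$ and bounding a disk in $\bar S$, that disk must be one of the two components of $\bar S$ cut along $\gamma$; as it contains $D$, the other side is a disk contained in $S$, so $\gamma$ would bound a disk in $S$, contradicting essentiality --- unless instead $\gamma$ bounds on one side a region of $S$ that together with $D$ forms the disk, i.e. an annulus in $S$ with boundary $\gamma$ and $\delta$, meaning $\gamma\sim\delta$ in $S$, again contradicting essentiality (essential curves are not homotopic to a boundary component). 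Either way we reach a contradiction, so $\gamma$ is non-contractible in $\bar S$, establishing the redundancy. The main obstacle I anticipate is the bookkeeping in the forward direction around degenerate surfaces (ensuring that $\bar S$ is never a disk or sphere in the cases considered, and handling the case $\delta\in B$ versus $\delta\notin B$ cleanly); invoking the simplicity of $\gamma$ to move freely between "contractible" and "bounds an embedded disk" is the key technical convenience that makes each step routine.
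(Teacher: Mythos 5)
Your proposal follows essentially the same route as the paper's proof: both directions are an unwinding of the definition of essential (the paper just organizes them as contrapositives), and your $k=1$ argument --- the embedded disk bounded by $\gamma$ in $\bar S$ either avoids the capping disk, forcing $\gamma$ to bound a disk in $S$, or contains it, forcing $\gamma$ to cobound an annulus with the capped boundary curve --- is precisely the justification that the paper leaves implicit in its one-line assertion, so that part is correct and in fact more complete than the original.

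The one weak point is the parenthetical claim in your forward direction that $\bar S$ cannot be a disk ``unless $\gamma$ was already trivial.'' That claim is false: take $S$ a pair of pants with boundary components $\delta_0,\delta_1,\delta_2$, let $B=\{\delta_1,\delta_2\}$ (so $k=2$), and let $\gamma$ be a simple curve parallel to $\delta_0$. Then $\gamma$ is non-contractible in $S$ and not freely homotopic to any curve of $B$, yet $\bar S$ is a disk, so $\gamma$ is contractible in $\bar S$ and, being boundary-parallel, inessential in $S$; your argument breaks exactly at that sentence, and indeed the lemma as stated fails in this configuration (the left-hand side holds while both conditions on the right fail). You should know, however, that the paper's own proof carries the identical unstated assumption: its step ``homotopic to a boundary component of $S$, which must lie in $B$ since it is contractible in $\bar S$'' is wrong in the same pair-of-pants situation, because a boundary component of $S$ not in $B$ becomes contractible in $\bar S$ precisely when $\bar S$ is a disk. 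So the defect lies in the implicit non-degeneracy hypothesis of the lemma (excluding the planar case where capping the curves of $B$ produces a disk) rather than in your approach; apart from substituting an incorrect justification where the paper silently assumes non-degeneracy, your proposal matches the paper's argument.
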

\begin{proof}
  If $\gamma$ is contractible or homotopic to a curve in $B$, then it bounds a disk or an annulus with that curve. It follows that $\gamma$ bounds a disk in $\bar{S}$, hence is contractible. Of course, $\gamma$ is inessential in $S$ in this case. Conversely, if $\gamma$ is contractible in $\bar{S}$, and is  inessential in ${S}$, then it is either contractible in $S$, or homotopic to a boundary component of $S$, which must lie in $B$ since it is contractible in $\bar{S}$. If $k=1$, then an essential curve in $S$ must be non-contractible in $\bar{S}$, so that the second condition is superfluous.
\end{proof}

We now start with an algorithm for computing a second systole, repeating for convenience the statement of Theorem~\ref{thm:secondsystole}. For algorithmic purposes we use the combinatorial framework. By a simple curve we actually mean a weakly simple curve. Similarly we may speak of disjoint curves when they are actually weakly disjoint. 
\bigskip

\noindent      
{\sc T{\footnotesize HEOREM} 1.1.} %\begin{theorem} %{thm:secondsystole}
{\it Let $G$ be a weighted graph of complexity $n$ cellularly embedded on a surface $S$ of genus $g$ with $b$ boundary components. A second systole of $G$ can be computed in $O(n^2\log n)$ time or in $O((b^2+g^3)n\log n)$ time.
} %\end{theorem}

\begin{proof}
    Denote the combinatorial surface $(S,G)$ by $\Sigma$. We start by computing a first systole $\ell_1$ of $\Sigma$. Using one of the known algorithms~\cite[Th. 4.7]{ew-csec-10}, this can be done in time $O(n^2\log n)$ or $O((b+g^3)n\log n)$. By Lemma~\ref{lem:secondsystole}, we may choose the second systole in such a way that it intersects $\ell_1$ at most once and such that it is either simple or homotopic to $\ell_1^2$. Hence, we only need to describe an algorithm for the former case of a shortest simple cycle that is not homotopic to $\ell_1$.

There are three cases to consider according to whether  $\ell_1$ and $\ell_2$ cross or not and to whether $\ell_1$ is separating or not.
\begin{enumerate}
\item \textbf{If $\ell_1$ is separating}, then $\ell_2$ cannot cross $\ell_1$.
Let $\Sigma_1$ and $\Sigma_2$ be the two components of $\Sigma\cut \ell_1$, and denote by $\beta_i$ the boundary component of $\Sigma_i$ corresponding to $\ell_1$, for $i=1,2$. Also denote by $\Gamma_i$, for $i=1,2$,  the set of (closed) curves in $\Sigma_i$ that are simple, non-contractible and not homotopic to $\beta_i$. The set of curves in $\Sigma$ that are simple, non-contractible, disjoint from $\ell_1$ and not homotopic to $\ell_1$ identifies with $\Gamma_1\cup\Gamma_2$. Indeed, since $\ell_1$ is not contractible, two curves in $\Sigma_i$ are homotopic in $\Sigma$ if and only if they are homotopic in $\Sigma_i$. By Lemma~\ref{lem:not-homotopic-to-boundaries}, $\Gamma_i$ identifies with the simple and non-contractible curves in the surface $\bar{\Sigma_i}$ obtained by capping off $\beta_i$ with a disk. It follows that we can take for $\ell_2$ the shorter of $\gamma_1$ and $\gamma_2$, where $\gamma_i$ is a shortest non-contractible curve in $\bar{\Sigma_i}$. A second systole can thus be computed in time $O(n^2\log n)$~\cite[Cor. 5.3]{erickson2004} or $O((b+g^3)n\log n)$~\cite[Th. 5.5]{ew-csec-10}.

\item \textbf{If $\ell_1$ is non-separating and does not cross $\ell_2$}, we first claim that two (closed) curves $\gamma_1,\gamma_2$ not homotopic to any copy of  $\ell_1$ in $\Sigma\cut\ell_1$ are homotopic in $\Sigma$ if and only if they are homotopic in $\Sigma\cut\ell_1$. Indeed, any homotopy between  $\gamma_1$ and $\gamma_2$ in $\Sigma\cut\ell_1$ quotient to a homotopy in $\Sigma$. In the other direction, if $\gamma_1\sim \gamma_2\not\sim \ell_1$ in $\Sigma$ then one can make $\gamma_1$ and $\gamma_2$ disjoint by swapping bigons according to \cite[Lem. 3.1]{hs-ics-85}. Each bigon swapping can actually be performed in   $\Sigma\cut\ell_1$, since it is not crossed by $\ell_1$. Now,  being disjoint and homotopic, $\gamma_1$ and $\gamma_2$ bound a cylinder in $\Sigma$. This cylinder cannot intersect $\ell_1$ since $\gamma_1$ and $\gamma_2$ are disjoint from $\ell_1$ and not homotopic to $\ell_1$. It follows that this cylinder lies in $\Sigma\cut\ell_1$, showing that $\gamma_1$ and $\gamma_2$ are homotopic in $\Sigma\cut\ell_1$. This proves the claim. We let $\Sigma'$ be the surface obtained from $\Sigma\cut\ell_1$ by capping of with a disk the boundary components corresponding to $\ell_1$. It follows from the claim and  Lemma~\ref{lem:not-homotopic-to-boundaries}, that a simple curve in $\Sigma$ is non-contractible, disjoint from $\ell_1$  and not homotopic to $\ell_1$ if and only if it is either non-contractible in $\Sigma'$, or essential in $\Sigma\cut\ell_1$.\footnote{The sets of non-contractible simple curves in $\Sigma'$ and of essential curves in $\Sigma\cut\ell_1$ decompose into the disjoint union of the essential curves in $\Sigma\cut\ell_1$ and the simple curves in $\Sigma$ homotopic to any boundary curve. In addition to the computation of shortest essential cycles one could thus use an algorithm for computing a shortest cycle homotopic to a given curve as in~\cite{de2010} in order to compute $\ell_2$. However, this would not decrease the asymptotic complexity of our algorithm.}

  Let $c$ be a shortest  essential (weakly)  simple curve in $\Sigma\cut\ell_1$ and let $c'$ be a shortest non-contractible curve in $\Sigma'$. From the previous discussion we can take for $\ell_2$ a shortest curve among $c$ and $c'$. It can be computed in $O(n^2\log n)$ time~\cite[Th. 4.7]{ew-csec-10}, or in $O((b^2+g^3)n\log n)$ time according to Theorem 5.5 in the same paper.
  
\item \textbf{If $\ell_1$ does cross $\ell_2$ once}, then $\ell_1$ must be non-separating. The surface $\Sigma\cut\ell_1$  is thus connected  and must have two of its boundary components correspond to copies of $\ell_1$. For each vertex on one of these boundary components we compute a shortest path in $\Sigma\cut\ell_1$ to the corresponding vertex on the other boundary component. A shortest among these paths, say $\ell$, corresponds to a shortest (weakly) simple cycle in $\Sigma$ that intersects $\ell_1$ exactly once. Since the algebraic intersection number of $\ell$ and $\ell_1$ is one, it follows that $\ell$ is not homotopic to $\ell_1$. Computing a shortest path with a given source and target can be done in time $O(n\log n)$ with Dijkstra's algorithm and $\ell_1$ has $O(n)$ vertices, so this step takes time $O(n^2\log n)$. This complexity can be decreased to $O(gn\log n)$ by noting that a shortest path between the two copies of $\ell_1$ in $\Sigma\cut\ell_1$ is a shortest path between the same copies after capping off with a disk all its boundary components. This allows us to apply the shortest path algorithm of Cabello et al.~\cite[Th. 4.4 and Cor. 4.5]{cce-msspe-13}. Indeed, after $O(gn\log n)$ time precomputation, it provides for every vertex on the face boundary corresponding to a copy of $\ell_1$ the shortest path distance to any other vertex in $O(\log n)$ time. We can thus compute the shortest path distance between all the pairs of corresponding vertices on the copies of $\ell_1$ in $O(n\log n)$ time. We finally compute the corresponding shortest path in $O(n\log n)$ using Dijkstra's algorithm. In total, this approach provides $\ell_2$ in $O(gn\log n)$ time. 
\end{enumerate}
	
	Finally, a second systole is obtained by taking a  shortest curve among the curves computed in the three cases above, whichever applies. 
\end{proof}

Next we present an algorithm for computing the third systole. Again, we repeat the statement of Theorem~\ref{thm:thirdsystole} for convenience.
\bigskip

\noindent      
{\sc T{\footnotesize HEOREM} 1.2.} %\begin{theorem}{thm:thirdsystole}
{\it	Let $G$ be a weighted graph of complexity $n$ cellularly embedded on a surface $S$ of genus $g$. The third systole of $G$ can be computed in $O(n^3)$ time.
  } %\end{theorem}

\begin{proof}
    Denote the combinatorial surface $(S,G)$ by $\Sigma$. We start by computing a first systole $\ell_1$~\cite[Th. 4.7]{ew-csec-10} and a second systole $\ell_2$ (Theorem~\ref{thm:secondsystole}) of $\Sigma$ in time $O(n^2\log n)$. By Lemma~\ref{lem:l3-simple}, we may choose the third systole $\ell_3$ such that it is either simple or traverses $\ell_1$ twice or thrice. It follows that we only need to describe an algorithm for the former case of a shortest \emph{simple} cycle that is neither homotopic to $\ell_1$ nor to $\ell_2$. By Lemmas~\ref{lem:secondsystole},~\ref{lem:l3-l1-atmostonce},~\ref{lem:l3-l2-twice} and~\ref{lem:l3-l2-exactlytwice}, we may assume that the third systole $\ell_3$ satisfies the conditions of one of the cases described below. Hence, it suffices to compute for each of these cases the shortest cycle that satisfies the conditions and that is not homotopic to $\ell_1$ or $\ell_2$. A third systole is then obtained by taking the shortest of the cycles obtained in the different cases. 
    
    \begin{enumerate}
        \item \textbf{Assume that $\ell_3$ does not cross $\ell_1$ or $\ell_2$}. Let $\Sigma'$ be the surface obtained from $\Sigma\cut(\ell_1\cup\ell_2)$ by capping off with a disk the boundary components corresponding to $\ell_1\cup \ell_2$. By a similar reasoning as in Cases 1 and 2 in the proof of Theorem~\ref{thm:secondsystole}, it suffices to compute the shortest essential curve in $\Sigma\cut(\ell_1\cup\ell_2)$ and the shortest non-contractible curve in $\Sigma'$. These can be computed in $O(n^2\log n)$ time~\cite[Th. 4.7]{ew-csec-10}.
        
        \item \textbf{Assume that $\ell_1$ and $\ell_2$ do not cross and that $\ell_3$ crosses at least one of them exactly once}. By simple arguments based on the algebraic intersection number it is seen that either (a) $\ell_3$ intersects both $\ell_1$ and $\ell_2$ exactly once, or (b) intersects only one of the two, say $\ell_i$, in which case the uncrossed cycle $\ell_{j\neq i}$ does not separate the two copies of $\ell_i$ in $\Sigma\cut\ell_i$. Case (a) can only occur if both $\ell_1$ and $\ell_2$ are non-separating. In this case, we compute a shortest cycle $c_1$ that intersects $\ell_1$ exactly once, ignoring $\ell_2$. Note that $c_1$ has algebraic intersection number $\pm 1$ with $\ell_1$. Hence, it cannot be homotopic to $\ell_1$ or to $\ell_2$ since these last two have algebraic intersection number 0. We also compute a shortest cycle $c_2$ that intersects $\ell_2$ exactly once, ignoring $\ell_1$. We then return the shorter of $c_1$ and $c_2$. In the second case (b) we return a shortest cycle that intersects $\ell_i$ once in the surface $\Sigma\cut\ell_j$. Note that we only need to consider Case (b) when $\ell_i$ is non-separating and $\ell_j$ is separating but does not separate the two copies of $\ell_i$ in $\Sigma\cut\ell_i$. Indeed, the computations in Case (a) actually include Case (b) when both $\ell_1$ and $\ell_2$ are non-separating. Checking the conditions for the two cases can be done in linear time. The rest of the computation, already discussed in Case 3 in the proof of Theorem~\ref{thm:secondsystole} takes $O(n^2\log n)$ time by repeated use of Dijkstra's algorithm, or $O(gn\log n)$ time with the approach of Cabello et al.~\cite{cce-msspe-13}.
        
        \item \textbf{Assume that $\ell_1$ and $\ell_2$ cross exactly once and that $\ell_3$ crosses exactly one of $\ell_1$ and $\ell_2$ exactly once and does not cross the other}. In this case, we consider the surface $\Sigma\cut(\ell_1\cup\ell_2)$ obtained after cutting along both $\ell_1$ and $\ell_2$. This surface has a boundary component $\delta$ composed alternately of two copies of $\ell_1$ and two copies of $\ell_2$. We compute for each vertex $v$ on $\delta$ a shortest path to its corresponding copy on $\delta$. (Some care is needed when $v$ is one of the four vertices corresponding to the crossing
          of $\ell_1$ and $\ell_2$. In this case, we must ignore the ``opposite'' copy of $v$ in $\delta$ to ensure that the connecting paths correspond to cycles that cross only one of $\ell_1$ and $\ell_2$.) The shortest of these paths corresponds to the shortest cycle in $\Sigma$ that crosses exactly one of $\ell_1$ and $\ell_2$ exactly once and does not cross the other. As before, the algebraic intersection numbers allow to check that this cycle is not homotopic to $\ell_1$, $\ell_2$ or a point. Analogously to Case 2, the whole computation can be performed  in $O(n^2\log n)$ time by repeated use of Dijkstra's algorithm, or $O(gn\log n)$ time with the approach of Cabello et al.~\cite{cce-msspe-13}.
        
        \item \textbf{Assume that $\ell_1$ and $\ell_2$ cross exactly once and that $\ell_3$ crosses both $\ell_1$ and $\ell_2$ exactly once}. In this case, we consider again the surface $\Sigma\cut(\ell_1\cup\ell_2)$, which has a boundary component $\delta$ composed of two copies of $\ell_1$ and two copies of $\ell_2$. For each vertex on either copy of $\ell_1$ along $\delta$ we compute shortest paths in $\Sigma\cut(\ell_1\cup\ell_2)$ to all vertices on either copy of $\ell_2$. 
          We divide these shortest paths into pairs: the shortest path from a vertex $v_1$ on one of the copies of $\ell_1$ to a vertex $v_2$ on one of the copies of $\ell_2$ is paired with the shortest path from $v'_1$ to $v'_2$, where $v'_1$ is the vertex corresponding to $v_1$ on the other copy of $\ell_1$ and $v'_2$ is the vertex corresponding to $v_2$ on the other copy of $\ell_2$. The shortest of these pairs of paths corresponds to the shortest cycle in $\Sigma$ that intersects $\ell_1$ and $\ell_2$ each exactly once.
(Some care is needed when $v_1$ or $v_2$ is one of the four vertices in $\delta$ corresponding to the crossing $u$
          of $\ell_1$ and $\ell_2$. For instance, if $v_1$ and $v_2$ are ``opposite'' copies of $u$ in $\delta$, then the shortest path connecting them actually corresponds to a cycle crossing both $\ell_1$ and $\ell_2$, and there is no need to  pair it with another shortest path.)

          As before, it is straightforward to check that this cycle is not homotopic to $\ell_1$ and $\ell_2$ by checking the algebraic intersection numbers. Computing shortest paths from a given vertex on either copy of $\ell_1$ to all vertices on either copy of $\ell_2$ can be done in time $O(n\log n)$ using Dijkstra's algorithm. Since the two copies of $\ell_1$ have $O(n)$ vertices, this step takes time $O(n^2\log n)$ in total.
        
        \item \textbf{Assume that $\ell_2$ and $\ell_3$ cross exactly twice and that both of them do not cross $\ell_1$}. Then we are in the situation of Lemma~\ref{lem:l3-l2-exactlytwice} depicted in Figure~\ref{fig:2crossings}, where $\ell_2$ can either be the (separating) green cycle or the (non-separating) brown cycle. 
        
          Assume first that $\ell_2$ is the green cycle. In this case, we choose a vertex $v$ on $\ell_1$ and compute a shortest path $p$ between its two copies $v', v''$ in the component $\cal P$ of $\Sigma\cut (\ell_1\cup \ell_2)$ bounded by the two copies of $\ell_1$ and one copy of $\ell_2$. This component is a disk with two smaller disks removed, i.e., a pair of pants. See Figure~\ref{fig:pair-of-pants}.
          \begin{figure}[h]
            \centering
            \includesvg[.8\linewidth]{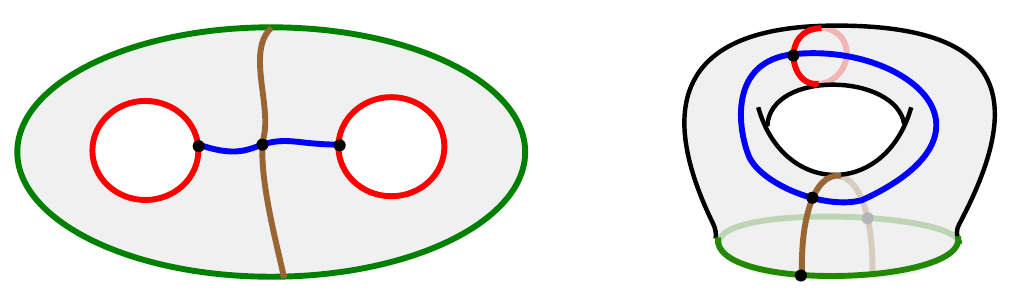}
            \caption{Left, the pair of pants cut by $\ell_1$ and $\ell_2$. Right, its image in $\Sigma$.}
            \label{fig:pair-of-pants}
          \end{figure}
          By the tightness of $\ell_2$, it is easily seen that $p$ is also a shortest path in $\Sigma\cut \ell_1$. We now choose $\ell_3$ to cross $p$ minimally among the third systoles disjoint from $\ell_1$ and crossing $\ell_2$ twice.
          \begin{claim}\label{clm:ell3-p}
            $\ell_3$ crosses $p$ exactly once.
          \end{claim}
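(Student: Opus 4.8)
The plan is to pass to the pair of pants $\cal P$ and combine a separation/homology argument for the lower bound with a bigon‑resolution argument — using crucially that $p$ is a shortest path — for the upper bound.

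First I would record the relevant structure. Since $\ell_2$ is the separating (green) cycle, it bounds a genus one subsurface $S'$ containing $\ell_1$, and $\cal P=S'\cut\ell_1$ is a pair of pants whose three boundary components are two copies $\delta_1',\delta_1''$ of $\ell_1$ and one copy $\delta_2$ of $\ell_2$; by uniqueness of shortest paths, $p$ is \emph{the} shortest path of $\cal P$ from $v'\in\delta_1'$ to $v''\in\delta_1''$. Writing $\ell_3=\ell_{3a}\cup\ell_{3b}$ with $\ell_{3a}$ the subpath of $\ell_3$ between its two intersection points $x,y$ with $\ell_2$ that lies in $S'$, one has that $\ell_{3a}$ is an arc of $\cal P$ with both endpoints on $\delta_2$, while $\ell_{3b}$ lies outside $S'$; in particular $\ell_3\cap p=\ell_{3a}\cap p$, and moreover the crossings of $\ell_{3a}$ with $p$ are interior to both arcs (since $p$ avoids $\delta_2\ni x,y$ and avoids $\delta_1',\delta_1''\ni v',v''$, and $\ell_3$ is disjoint from $\ell_1$). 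Next I would observe that $\ell_{3a}$ is an \emph{essential} arc of $\cal P$, i.e. not boundary‑parallel to $\delta_2$: otherwise $\ell_{3a}$ would be homotopic rel endpoints to one of the two subarcs $\ell_{2a},\ell_{2b}$ of $\ell_2$, making $\gamma_{aa}=\ell_{2a}\cup\ell_{3a}$ contractible or homotopic to $\ell_2$, contradicting Claim~\ref{clm:not-1-not-l2}. An essential arc of a pair of pants with both endpoints on $\delta_2$ separates $\delta_1'$ from $\delta_1''$, so any path joining those two boundary components meets it; applying this to $p$ gives $\card{\ell_3\cap p}=\card{\ell_{3a}\cap p}\ge 1$. (Equivalently, since all crossings of $\gamma_{aa}$ with $p$ lie on $\ell_{3a}$ and $\gamma_{aa}\sim\ell_1$ by Lemma~\ref{lem:l3-l2-exactlytwice}, one has $\intForm{\ell_3}{p}=\intForm{\gamma_{aa}}{p}=\intForm{\ell_1}{p}=\pm 1$.)

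For the upper bound, suppose for contradiction that $\card{\ell_3\cap p}\ge 2$. The two essential isotopy classes of arcs of $\cal P$ with endpoints on $\delta_2$ (encircling $\delta_1'$, resp. $\delta_1''$) each have geometric intersection number one with $p$, so $\ell_{3a}$ and $p$ are not in minimal position; by a standard bigon argument (as in the proof of Lemma~\ref{lem:excess-intersection}) they then bound an embedded bigon, which I would take innermost: a disk $D\subset\cal P$ with $\partial D=a\cup b$, where $a$ is a subpath of $\ell_{3a}\subset\ell_3$, $b$ is a subpath of $p$, and the interior of $D$ is disjoint from $\ell_3$ and from $p$. Because $b$ is a subpath of the shortest path $p$, it is itself a shortest path, hence $|b|\le|a|$; replacing $a$ by $b$ in $\ell_3$ yields a loop $\ell_3'$ freely homotopic to $\ell_3$ with $|\ell_3'|\le|\ell_3|$. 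As $\ell_3$ is a third systole, $|\ell_3'|=|\ell_3|$ and $\ell_3'$ is again a third systole. Since $\partial D$ avoids $x,y,v',v''$ and no $\delta_i$ can bound a disk in $\cal P$, the disk $D$ lies in the interior of $\cal P$, so the modification is supported away from $\ell_1$ and $\ell_2$: $\ell_3'$ remains (weakly) simple, stays disjoint from $\ell_1$, and still crosses $\ell_2$ exactly at $x$ and $y$. But $\ell_3'$ crosses $p$ two fewer times than $\ell_3$, contradicting the choice of $\ell_3$ as minimizing intersections with $p$ among third systoles disjoint from $\ell_1$ and crossing $\ell_2$ twice. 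Hence $\card{\ell_3\cap p}=1$.

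I expect the bigon step to be the main obstacle: one has to be careful that the bigon can be taken inside $\cal P$ and innermost with respect to both $\ell_3$ and $p$, that resolving it keeps $\ell_3'$ weakly simple and does not alter its intersection pattern with $\ell_1$ or $\ell_2$, and that the length bookkeeping — which relies on subpaths of the shortest path $p$ being shortest — produces an honest competing third systole rather than merely a shorter loop. The lower bound and the classification of essential arcs in a pair of pants are standard and should take little space.
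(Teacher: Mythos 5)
Your proof is correct and follows essentially the same route as the paper: you observe that $\ell_3$ meets the pair of pants $\cal P$ in a single arc, force at least one crossing with $p$ by a topological argument in $\cal P$, and then swap an innermost bigon between that arc and $p$ (using that subpaths of the shortest path $p$ are shortest, and that the bigon lies in the interior of $\cal P$) to contradict the minimal choice of $\ell_3$, exactly as in the paper. Two minor remarks: your lower bound (essentiality of $\ell_{3a}$, equivalently $\intForm{\gamma_{aa}}{p}=\pm 1$ via Lemma~\ref{lem:l3-l2-exactlytwice}) is a clean variant of the paper's argument of sliding $\ell_{3a}$ across $\ell_2$ in the annulus $\mathcal{P}\cut p$; and in the upper bound, a pair of pants has only \emph{one} essential arc class with both endpoints on $\delta_2$, and ``not in minimal position'' for \emph{arcs} does not by itself produce a pure bigon (Lemma~\ref{lem:excess-intersection} is stated for closed curves, and half-bigons along $\delta_1',\delta_1''$ must be excluded), although this is precisely the step the paper's own proof also asserts in one line.
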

          To see this, first note that the intersection of $\ell_3$ with $\cal P$ must be connected, since $\ell_3$ cuts $\ell_2$ twice and is disjoint from $\ell_1$. Denote by $\gamma_3$ this simple path. Since $\mathcal{P}\cut p$ is an annulus, if $\gamma_3$ did not cross $p$, then $\gamma_3$ would be homotopic to a subpath of $\ell_2$ and we could slide $\gamma_3$ on the other side of $\ell_2$ to remove the two crossings of $\ell_3$ and $\ell_2$. This would however contradict that $\ell_2$ and $\ell_3$ have geometric intersection number two. Now, if $\gamma_3$ cuts $p$ at least twice, then one of the components of $\gamma_3\setminus p$ must bound a bigon with a subpath of $p$. Since $p$ and $\ell_3$ are tight, the sides of the bigon have equal length and we may swap it to decrease the number of crossings of $\gamma_3$ and $p$. This would however contradict that $\ell_3$ cuts $p$ minimally. This concludes the proof of the claim.            

          It follows from Claim~\ref{clm:ell3-p} that $\alpha:=\ell_3\setminus p$ is a simple essential arc in $\Sigma\cut (\ell_1\cup p)$. Indeed, if $\alpha$ were not essential, then $\ell_3$ would be homotopic to $\ell_1$.
          \begin{claim}\label{clm:essential-arc}
            Any simple essential arc $\alpha$ connecting two copies of a vertex of $p$ in $\Sigma\cut (\ell_1\cup p)$ identifies in $\Sigma$ to a simple cycle $\ell$ that is not homotopic to $\ell_1, \ell_2$ or a point.
          \end{claim}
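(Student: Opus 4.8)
The plan is to analyse the regluing map on $\Sigma'':=\Sigma\cut(\ell_1\cup p)$ and then exclude, one at a time, the three forbidden types for the reglued curve, the case ``homotopic to $\ell_1$'' being the delicate one. First I would pin down the boundary component of $\Sigma''$ coming from $\ell_1\cup p$: cutting $\Sigma$ along $\ell_1$ creates two boundary copies $\ell_1',\ell_1''$ of $\ell_1$, and since the arc $p$ joins $\ell_1'$ to $\ell_1''$, cutting further along $p$ merges these two circles into a single boundary circle $\delta_0$ of the form $\ell_1'\cdot p_+\cdot\ell_1''\cdot p_-$, where $p_+,p_-$ are the two sides of $p$. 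Hence every interior vertex $w$ of $p$ has exactly two copies on $\delta_0$, namely $w_+\in p_+$ and $w_-\in p_-$, lying on opposite sides of $p$. (The only vertices of $p$ with more copies are its endpoints, which are copies of the basepoint $v\in\ell_1$; these are excluded, and are anyway irrelevant for the application, since a third systole disjoint from $\ell_1$ cannot pass through $v$.) Let $q\colon\Sigma''\to\Sigma$ be the identification map undoing the cut. Then $\ell:=q(\alpha)$ is a closed curve through $w=q(w_+)=q(w_-)$ whose complement $\ell\setminus\{w\}=q(\relint{\alpha})$ is an embedded arc in $\Sigma\setminus(\ell_1\cup p)$; and since the two ends of $\alpha$ at $w_+,w_-$ sit on opposite sides of $p$, the curve $\ell$ crosses $p$ transversally exactly once, at $w$, and is disjoint from $\ell_1$. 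In particular $\ell$ is simple, as it agrees with the simple arc $\alpha$ away from $w$ and passes through $w$ only once.

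Next, $\ell$ is neither contractible nor freely homotopic to $\ell_2$. The loop $p$ is disjoint from $\ell_2$ (it lies in the pair of pants $\mathcal{P}$, away from its $\ell_2$-boundary), so $\intForm{\ell_2}{p}=0$, whereas $\intForm{\ell}{p}=\pm1$ because $\ell$ and $p$ meet in a single transverse point. As free homotopy of closed curves preserves the homology class up to sign, $[\ell]$ can be neither $0$ nor $\pm[\ell_2]$, which settles both cases.

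Finally, and this is the crux, $\ell$ is not freely homotopic to $\ell_1$; I would argue by contraposition, showing that $\ell\sim\ell_1$ would force $\alpha$ to be inessential. Since $\ell$ is disjoint from $\ell_1$, freely homotopic to it, and $\ell_1$ is essential, $\ell$ and $\ell_1$ cobound an embedded annulus $C\subset\Sigma$; its interior is disjoint from $\ell_1$, so $C$ lies in $\Sigma\cut\ell_1$ as an annulus with boundary $\ell$ and one copy $\ell_1'$ of $\ell_1$. Viewing $p$ in $\Sigma\cut\ell_1$ as a path from $v'\in\ell_1'$ to $v''\in\ell_1''$, it meets $\partial C=\ell\cup\ell_1'$ only at its starting point $v'$ and at $w$ (using $p\cap\ell=\{w\}$ and $p\cap\ell_1'=\{v'\}$). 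Since $v''$ lies on the boundary component $\ell_1''$ of $\Sigma\cut\ell_1$, it cannot lie in the interior of $C$; as $v''\notin\partial C$, in fact $v''\notin C$. Hence $p$ enters $C$ at $v'$ and leaves it at $w$, so $p\cap C$ is a single spanning arc $a$ of the annulus $C$. Cutting $C$ along $a$ produces a disk $D$, embedded in $(\Sigma\cut\ell_1)\cut p=\Sigma''$, whose boundary is $\alpha$ (the curve $\ell$ cut open at $w$) together with a subarc of $\delta_0$ built from $\ell_1'$ and the two sides of $a$. Thus $\alpha$ is homotopic rel its endpoints into $\partial\Sigma''$, i.e.\ inessential, a contradiction; therefore $\ell\not\sim\ell_1$. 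The main obstacle is precisely this last step: algebraic intersection numbers do not separate $\ell$ from $\ell_1$ (both are disjoint from $\ell_1$ and meet $p$ once), so one must combine the essentiality of $\alpha$ with the incompressibility of $\Sigma\cut\ell_1$, the delicate point being to force $p\cap C$ to be a genuine spanning arc of $C$ rather than an arc ending in its interior — which is where the position of $v''$ on $\partial(\Sigma\cut\ell_1)$ is used.
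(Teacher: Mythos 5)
Your proof is correct and follows essentially the same route as the paper: you rule out contractibility and homotopy to $\ell_2$ via the algebraic intersection number with the loop obtained from $p$, and you rule out homotopy to $\ell_1$ by observing that $\ell$ and $\ell_1$ would cobound an annulus traversed by a subpath of $p$, which cut along that subpath yields a disk exhibiting $\alpha$ as inessential in $\Sigma\cut(\ell_1\cup p)$. Your write-up merely makes explicit (via the position of $v''$ outside the annulus) why $p$ meets the annulus in a single spanning arc, a point the paper leaves implicit.
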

          If $\ell$ were homotopic to $\ell_1$, then these cycles would bound a cylinder traversed by a subpath of $p$. This would imply that $\alpha$ bounds a disk in $\Sigma\cut (\ell_1\cup p)$, contradicting that $\alpha$ is essential. Also observe that $p$ identifies to a simple loop in $\Sigma$, crossing $\ell$ once and disjoint from $\ell_2$. It ensues from the invariance of the algebraic intersection number that $\ell$ is neither contractible, nor homotopic to $\ell_2$.

We conclude from  Claims~\ref{clm:ell3-p} and~\ref{clm:essential-arc} that any shortest simple essential arc $\alpha$ connecting two copies of a vertex of $p$ in $\Sigma\cut (\ell_1\cup p)$ may serve as a third systole after identifying its endpoints in $\Sigma$.
        Computing the shortest path $p$ can be done in time $O(n\log n)$ using Dijkstra's algorithm. By Theorem~\ref{thm:shortest-arc}, a shortest essential arc with given source and target can be computed in time $O(n^2)$ and $p$ has $O(n)$ vertices, so this case takes time $O(n^3)$. 
        
        Now assume that $\ell_2$ is the brown cycle in Figure~\ref{fig:2crossings}. In this case, we consider the surface $\Sigma'=\Sigma\cut(\ell_1\cup\ell_2)$ with a pair of boundary components corresponding to $\ell_1$ and a pair of boundary components corresponding to $\ell_2$. Since $\ell_1$ and $\ell_2$ are both non-separating, $\Sigma'$ may have one or two connected components, but no more. Denote by $\delta$ and $\delta'$ the two copies of $\ell_2$ in $\Sigma'$ and by $\gamma_3$ and $\gamma_3'$ the two subpaths of $\ell_3$ cut by $\ell_2$. Remark that since $\ell_3$ is separating (it now corresponds to the green curve), the local orientation of $\ell_2$ and $\ell_3$ at their two crossings are reversed, implying that  $\gamma_3$ must have its endpoints on the same copy of $\ell_2$, say $\delta$, in $\Sigma'$. Hence, $\gamma_3'$ has its endpoints on the other copy $\delta'$. We also remark that $\gamma_3$ and $\gamma_3'$ must be essential arcs in $\Sigma'$; otherwise one could homotope an inessential arc of $\ell_3$ to remove its crossings with $\ell_2$, in contradiction with the fact that $\ell_2$ and $\ell_3$ are in minimal position. On the other hand, we argue below that any concatenation $\gamma'\cdot\gamma\inv$ in $\Sigma'$ of essential arcs $\gamma$ and $\gamma'$ with both  endpoints on $\delta$ and $\delta'$, respectively, cannot be homotopic to $\ell_1$, $\ell_2$ or a point. With the above remarks, this ensures that any shortest curve among such concatenations $\gamma'\cdot\gamma\inv$ may serve as a third systole. This leads to the following algorithm to compute a third systole. For every pair $u,v$ of distinct vertices on $\ell_2$, we compute a shortest essential arc $\gamma$ in $\Sigma'$ between the copies of $u,v$ on $\delta$. Similarly, we compute a shortest essential arc $\gamma'$ between the copies of $u,v$ on $\delta'$. We then return a shortest curve among such $\gamma'\cdot\gamma\inv$. We thus need to compute all the shortest essential arcs between any pair of vertices on $\delta$ and similarly for $\delta'$. By Theorem~\ref{thm:shortest-arcs}, this computation can be done in time $O(n^3)$. It remains to argue that the concatenations $\gamma'\cdot\gamma\inv$ cannot be homotopic to $\ell_1$, $\ell_2$ or a point. For this we rely on the following claim whose proof is deferred to the appendix.
\begin{claim}\label{cl:1}
  Any concatenation $\gamma'\cdot\gamma\inv$ in $\Sigma'$ of essential arcs $\gamma$ and $\gamma'$ with both endpoints on $\delta$ and $\delta'$, respectively, has geometric intersection number at least one with $\ell_2$.
\end{claim}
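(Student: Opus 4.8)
The plan is to argue by contradiction: assume $i(\gamma'\cdot\gamma\inv,\ell_2)=0$, and extract from a homotopy witnessing this a bigon that certifies that $\gamma$ or $\gamma'$ is inessential in $\Sigma'$, contradicting the hypothesis of the claim. Set $c:=\gamma'\cdot\gamma\inv$, viewed as a closed curve in $S$ after identifying $\delta$ and $\delta'$ back to $\ell_2$, and let $u,v$ be the two points of $\ell_2$ that are the common endpoints of $\gamma$ and $\gamma'$. Since $\gamma$ is an arc of $\Sigma'=\Sigma\cut(\ell_1\cup\ell_2)$ with both endpoints on the copy $\delta$ of $\ell_2$, after normalizing we may take its interior disjoint from $\partial\Sigma'$, so in $S$ it meets $\ell_2$ only at $u$ and $v$, and near each of these points it lies on the side of $\ell_2$ that becomes $\delta$; likewise $\gamma'$ lies on the $\delta'$-side near $u$ and $v$. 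Hence $c$ meets $\ell_2$ transversely in exactly the two points $u,v$, and $c$ is disjoint from $\ell_1$ (as are $\gamma$ and $\gamma'$).

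First I would extract a bigon in the usual way. Pick a free homotopy $H\colon S^1\times[0,1]\to S$ from $c$ to a curve disjoint from $\ell_2$, made transverse to $\ell_2$. Then $H\inv(\ell_2)$ is a properly embedded $1$-manifold in the annulus $S^1\times[0,1]$ whose boundary is the two points of $c\cap\ell_2$ on $S^1\times\{0\}$; it is therefore a single arc $a$ joining these points, together with some circles in the interior. Since $\ell_2$ is simple and non-contractible and surface groups are torsion-free, no power of $\ell_2$ is contractible, so each such circle can be removed by an innermost-disk homotopy of $H$, and we may assume $H\inv(\ell_2)=a$. The arc $a$ cuts off from the annulus a disk $D_0$ with $\partial D_0=a\cup b$, where $b\subseteq S^1\times\{0\}$ maps under $H$ onto $\gamma$ or onto $\gamma'$; say onto $\gamma$ (the other case is symmetric, exchanging $\delta$ and $\delta'$). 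Thus $H|_{D_0}$ is a homotopy, rel the endpoints $u,v$, from $\gamma$ to the path $H(a)\subseteq\ell_2$, whose image has interior disjoint from $\ell_2$.

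Next I would make this bigon disjoint from $\ell_1$ and push it into $\Sigma'$. Because $\partial(H(D_0))=\gamma\cup H(a)$ is disjoint from $\ell_1$, the set $H\inv(\ell_1)\cap D_0$ consists of circles in the interior of $D_0$; using again that no power of the simple non-contractible curve $\ell_1$ is contractible, each is removed by an innermost-disk homotopy, so we may assume $H(D_0)$ is disjoint from $\ell_1$. Now $H(D_0)$ approaches $\ell_2$ only along $H(a)\subseteq\partial(H(D_0))$ and from a single side of $\ell_2$; at the corner $u$ this side is the $\delta$-side, because the other boundary arc $\gamma$ of $D_0$ leaves $u$ into the $\delta$-side, and by connectedness of $a$ it is the $\delta$-side all along $H(a)$. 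Consequently, when $S$ is cut along $\ell_1\cup\ell_2$, the map $H|_{D_0}$ lifts to a map into $\Sigma'$ carrying $b$ to $\gamma$, carrying $a$ to a subarc of the boundary component $\delta$, and fixing $u$ and $v$. This exhibits $\gamma$ as homotopic, rel its endpoints, to a path contained in $\delta$ — that is, $\gamma$ is inessential in $\Sigma'$, a contradiction. In the symmetric case we get instead that $\gamma'$ is inessential in $\Sigma'$. Either way $i(\gamma'\cdot\gamma\inv,\ell_2)\geq 1$.

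I expect the main obstacle to be the bookkeeping in the last two steps: reducing $H\inv(\ell_2)$ and then $H\inv(\ell_1)$ to a single arc with no stray circles (this is where one uses that $\ell_1,\ell_2$ are non-contractible, so no innermost disk carries a nontrivial power), and — more importantly — tracking which side of $\ell_2$ the bigon sits on near its two corners, since this is exactly what guarantees the homotopy descends to one \emph{inside} $\Sigma'$ rather than merely inside $S$ or $\Sigma\cut\ell_2$. One could alternatively phrase the argument through the HNN splitting of $\pi_1(S)$ along the non-separating curve $\ell_2$, using Britton's lemma to show that $i(c,\ell_2)=0$ forces $\gamma$ or $\gamma'$ to be boundary-parallel in $\Sigma\cut\ell_2$; but one would still have to transport this conclusion from $\Sigma\cut\ell_2$ to $\Sigma'$, for which the disjointness-from-$\ell_1$ argument above is needed, so the direct bigon version seems cleaner. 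Finally, I would note that Claim~\ref{cl:1} is exactly what the surrounding proof needs, since $\ell_1$, $\ell_2$ and a point each have geometric intersection number $0$ with $\ell_2$, so none of them is homotopic to $\gamma'\cdot\gamma\inv$.
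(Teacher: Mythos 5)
Your proof is correct, but it takes a genuinely different route from the paper. The paper splits into two cases according to whether $\Sigma'=\Sigma\cut(\ell_1\cup\ell_2)$ is connected: when it is disconnected, $\gamma'\cdot\gamma\inv$ is simple and the bigon criterion of Hass--Scott applies directly; when it is connected, the paper endows $S$ with a hyperbolic metric, passes to the universal cover, and shows (their Claim 2) that each lift of $\gamma$ or $\gamma'$ joins two \emph{distinct} lifts of $\ell_2$, producing a nested bi-infinite family of lifts whose endpoints at infinity interleave with those of a lift of $(\gamma'\cdot\gamma\inv)^\infty$; this forces at least one crossing in every homotopy class. Your argument is uniform in both cases and avoids the hyperbolic machinery entirely: from a free homotopy pushing $c=\gamma'\cdot\gamma\inv$ off $\ell_2$ you extract, in the annulus, the disk $D_0$ cut off by the arc $a$ of $H\inv(\ell_2)$, clean it of $\ell_1$- and $\ell_2$-circles by innermost disks, check the side of $\ell_2$ approached along $a$ at a corner, and lift the disk into $\Sigma'$ to exhibit $\gamma$ (or $\gamma'$) as boundary-parallel, contradicting essentiality. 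The side-tracking step, which you rightly flag as the crux, is exactly the elementary counterpart of the paper's Claim 2 (``$\lambda_2\neq\lambda_1$''), and your handling of $\ell_1$ correctly ensures the homotopy lives in $\Sigma'$ and not merely in $\Sigma\cut\ell_2$. What the paper's covering-space argument buys is the stronger interleaving statement (and no need to manipulate the homotopy); what yours buys is a self-contained, metric-free proof by contradiction.

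Two small points to tighten. First, ``we may assume $H\inv(\ell_2)=a$'' is a slight overclaim: a circle component that is parallel to the core of the annulus bounds no disk there and cannot be removed by an innermost-disk move (such circles occur precisely when $c$ is freely homotopic to a power of $\ell_2$, which you cannot exclude a priori inside a proof by contradiction). This is harmless for your argument, since every circle lying in the disk $D_0$ does bound a disk in $D_0$ and those are the only ones you need to remove --- but the statement should be phrased that way. Second, the very first step uses the standard fact that geometric intersection number zero with the \emph{simple} curve $\ell_2$ lets you homotope $c$ alone off a fixed $\ell_2$ (the paper's definition allows both curves to move); this follows, e.g., from geodesic representatives or the bigon criterion, and deserves a citation or a sentence.
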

As a consequence, none of the $\gamma'\cdot\gamma\inv$ is contractible, or homotopic to $\ell_1$ or $\ell_2$. 
    \end{enumerate}
\end{proof}

\section{Discussion}\label{sec:discussion}

We have presented algorithms to compute the second and third systoles in a (weighted) combinatorial surface. A natural direction for future research would be to compute the $k$-th systole for any positive integer $k$. The computation of the $k$-th systole for $k>3$ using a similar approach as in the current paper would require to analyse the possible configurations of the first $k$ systoles. The case $k=3$ treated here shows that this analysis can be quite subtle. 

In particular, one would need to determine the number of self-crossings of the $k$-th systole. Some partial results for this number are known in the context of hyperbolic surfaces. Namely, Erlandsson and Parlier~\cite{ep-scgsi-20} proved that the shortest among the closed geodesics of a complete hyperbolic surface (with non-abelian fundamental group) with at least $k$ self-crossings has at most $O(k)$ self-crossings. If this bound was exactly $k$ and was true for other types of metrics, this would directly imply that the $k$-th systole has at most $k$ self-crossings. Some partial results in this direction were recently posted on arXiv~\cite{an-tsac-24}.

Moreover, one has to determine an upper bound for the number $n_{ik}$ of crossings between the $i$-th and $k$-th systoles, for $i<k$. Given $n_{ik}$ for all $i<k$, the $k$-th systole can be computed from the first $k-1$ systoles by choosing $n_{ik}$ crossing points on the $i$-th sytole for each $i<k$ and ordering all of them along the $k$-th systole (while keeping in mind the possible self-crossings). By connecting the crossing points in the given order with shortest (essential) arcs, one may form a candidate for the $k$-th systole. However, this approach would not be fixed parameter tractable with respect to $k$, since each systole may have a number of vertices of the same order as the whole combinatorial surface. 

\appendix

\section{Proof of Claim~\ref{cl:1}}

If $\Sigma'=\Sigma\cut(\ell_1\cup\ell_2)$ is not connected (as in Figure~\ref{fig:2crossings}, right) then $\gamma$ and $\gamma'$ belong to distinct components, hence are disjoint. It follows that $\gamma'\cdot\gamma\inv$ is a simple curve. If this curve had excess intersection with $\ell_2$, then there would be an embedded bigon bounded by $\gamma$ or $\gamma'$ on one side, and $\ell_2$ on the other side~\cite[Lem. 3.1]{hs-ics-85}. This would however contradict that $\gamma$ and $\gamma'$ are essential arcs in $\Sigma'$. 

If $\Sigma'$ is connected, then $\gamma$ and $\gamma'$ may a priori cross and $\gamma'\cdot\gamma\inv$ could be self-crossing and not homotopic to a simple curve. In this case, we cannot apply \cite[Lem. 3.1]{hs-ics-85} and we resort to another argument.
Recall that $S$ is the topological surface underlying $\Sigma$, as $\Sigma=(S,G)$.
We consider the universal cover $\tilde{S}$ of $S$. We endow $S$ with a hyperbolic metric with geodesic boundary, so that $\tilde{S}$ identifies with a subset of the Poincaré disk (See~\cite{katok1992},
especially Theorem 3.4.5 and Section 3.6). The lifts of $\ell_1$, or more precisely of $\ell_1^\infty: \R\to \R/\Z\overset{\ell_1}{\to}S$, form a disjoint union $\mathcal{L}_1$ of infinitely  many simple curves joining pairs of points ``at infinity'' on the boundary of the Poincaré disk.
Similarly, the lifts of $\ell_2$ form a disjoint union $\mathcal{L}_2$ of infinitely  many simple curves with endpoints at infinity. Since $\ell_1$ and $\ell_2$ are simple and disjoint, so are the curves in $\mathcal{L}_1\cup\mathcal{L}_2$.

Denote by $u$ and $v$ the crossing points of $\ell_2$ and $\gamma'\cdot\gamma\inv$ and consider some lift $\tilde{u}_1$ of $u$. It belongs to a lift, say $\lambda_1\in \mathcal{L}_2$ of $\ell_2$.
By lifting $\gamma'$ from $\tilde{u}_1$ we get a lifted curve $\tilde{\gamma}'$ whose other endpoint, say $\tilde{v}_1$, lies above $v$ on some lift $\lambda_2\in \mathcal{L}_2$.
\begin{claim}\label{cl:2}
  We have $\lambda_2\neq \lambda_1$.
\end{claim}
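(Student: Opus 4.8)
The plan is to read Claim~\ref{cl:2} as nothing more than the translation to $\tilde S$ of the hypothesis that $\gamma'$ is an \emph{essential} arc of $\Sigma'$: the point is that $\lambda_1=\lambda_2$ would say exactly that $\gamma'$ can be homotoped into the boundary component $\delta'$, which is forbidden.

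First I would describe the structure of $\tilde S$ relative to $\mathcal L_1\cup\mathcal L_2$. Denote by $\pi\colon\tilde S\to S$ the covering projection. Since $\ell_1,\ell_2$ are simple, disjoint and essential, the curves of $\mathcal L_1\cup\mathcal L_2$ are pairwise disjoint complete geodesics, so each connected component $R$ of $\tilde S\setminus(\mathcal L_1\cup\mathcal L_2)$ is simply connected and its closure $\bar R$ is bounded by a disjoint family of geodesics taken from $\mathcal L_1\cup\mathcal L_2$. Moreover $\pi$ restricts on $\bar R$ to the universal covering $\pi|_{\bar R}\colon\bar R\to\Sigma'$ of the (here connected) cut surface, carrying each boundary geodesic of $\bar R$ onto a boundary component of $\Sigma'$; in particular every $\lambda\in\mathcal L_2\cap\partial\bar R$ is mapped onto the copy of $\ell_2$ in $\Sigma'$ bordering the side of $\ell_2$ that $\pi(R)$ lies on, i.e.\ onto $\delta$ or onto $\delta'$. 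This identification of $\bar R$ with the universal cover of $\Sigma'$ is the only delicate point; it rests on $\ell_1,\ell_2$ being essential and on the choice of a hyperbolic metric with geodesic boundary, but it is standard.

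Next I would locate $\tilde\gamma'$. Since $\gamma'$ is an arc of $\Sigma'$ with endpoints on $\delta'$, its interior is disjoint from $\ell_1\cup\ell_2$; hence the interior of the lift $\tilde\gamma'$ misses $\mathcal L_1\cup\mathcal L_2$, so $\tilde\gamma'$ lies in a single region closure $\bar R$. Its endpoint $\tilde u_1$ lies on $\lambda_1$ and its endpoint $\tilde v_1$ lies on $\lambda_2$, and because $\gamma'$ touches $\delta'$ at both ends, $\lambda_1$ and $\lambda_2$ are boundary geodesics of this $\bar R$ both carried by $\pi|_{\bar R}$ onto $\delta'$.

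Finally I would argue by contradiction. If $\lambda_1=\lambda_2=:\lambda$, then $\tilde\gamma'$ is a path in the simply connected set $\bar R$ with both endpoints on the single arc $\lambda$, hence $\tilde\gamma'$ is homotopic with fixed endpoints to the subarc $\sigma$ of $\lambda$ joining them. Applying $\pi|_{\bar R}$ to this homotopy exhibits $\gamma'$ as homotopic with fixed endpoints to $\pi(\sigma)$, a path contained in $\delta'$ (winding around $\delta'$ according to how far $\sigma$ runs along $\lambda$). Thus $\gamma'$ can be homotoped into $\partial\Sigma'$, i.e.\ it is inessential, contradicting the choice of $\gamma'$. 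Therefore $\lambda_1\neq\lambda_2$, which is Claim~\ref{cl:2}.
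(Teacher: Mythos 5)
Your argument is correct and is essentially the paper's own proof: assuming $\lambda_1=\lambda_2$, you homotope $\tilde{\gamma}'$ rel endpoints onto the subarc of $\lambda_1$ inside a simply connected region whose interior avoids $\mathcal{L}_1\cup\mathcal{L}_2$, and project to conclude that $\gamma'$ is inessential in $\Sigma'$, a contradiction. The only cosmetic difference is that you identify this region as the closure of a complementary component of $\mathcal{L}_1\cup\mathcal{L}_2$ (viewed as the universal cover of $\Sigma'$), whereas the paper simply takes the intersection of the half-planes containing $\tilde{\gamma}'$ bounded by curves of $\mathcal{L}_1\cup\mathcal{L}_2$.
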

Suppose otherwise and denote by $\pi_1$ the subpath of $\lambda_1$ between $\tilde{u}_1$ and $\tilde{v}_1$. Then $\tilde{\gamma}' \cup \pi_1$ must be contained in a simply connected region whose interior avoids $\mathcal{L}_1\cup \mathcal{L}_2$, namely the intersection of all the ``half-planes'' containing $\tilde{\gamma}'$ and bounded by one of the curves in $\mathcal{L}_1\cup \mathcal{L}_2$. As a consequence, we can homotope $\tilde{\gamma}'$ to $\pi_1$ without crossing any curve in $\mathcal{L}_1$ or $\mathcal{L}_2$. It follows that $\gamma'$ is homotopic  in $S'$ (recall this is the surface $S$ cut by $\ell_1$ and $\ell_2$) to a path in $\delta'$. However, this would contradict the fact that $\gamma'$ is essential in $S'$, thus proving Claim~\ref{cl:2}.

Hence, $\tilde{v}_1$ belongs to a lift $\lambda_2\neq \lambda_1$ of $\ell_2$. We argue similarly for the lift of $\gamma$ from $\tilde{u}_1$ and for the lift of $\gamma\inv$ from $\tilde{v}_1$. We get by Claim~\ref{cl:2} that the first lift abuts to some lift $\lambda_0\neq \lambda_1$ of $\ell_2$, while the second lift abuts to some lift $\lambda_3\neq \lambda_2$ of $\ell_2$. Since $\gamma'\cdot\gamma\inv$ actually crosses $\ell_2$ at $u$ and $v$, $\lambda_0$ and $\lambda_2$ must lie on different sides of $\lambda_1$, and similarly $\lambda_1$ and $\lambda_3$ must lie on different sides of $\lambda_2$. The lifts $\lambda_0, \lambda_1, \lambda_2, \lambda_3$ are thus parallel "lines" in the plane appearing in this order (their endpoints constitute a nested set of parentheses). By induction on the number of lifts of $\gamma'\cdot\gamma\inv$, we obtain an infinite sequence $\{\lambda_i\}_{i\in\Z}$ of lifts of $\ell_2$ with a lift $\mu$ of $(\gamma'\cdot\gamma\inv)^\infty$ that crosses each $\lambda_i$ exactly once. In particular, the endpoints at infinity of $\mu$ are interleaved with the endpoints of $\lambda_0$. Since these endpoints are independent of the representative of the curves in their free homotopy class, it ensues that any two curves homotopic to respectively $\ell_2$ and $\gamma'\cdot\gamma\inv$ cross at least once. This finishes the proof of Claim~\ref{cl:1}.

%\bibliography{references}

\end{document}